\pgfplotsset{compat=1.17}
\newtheorem{theorem}{Theorem}
\newtheorem{lemma}[theorem]{Lemma}
\newtheorem{claim}[theorem]{Claim}
\newtheorem{corollary}[theorem]{Corollary}
\newtheorem{remark}[theorem]{Remark}
\newtheorem{fact}{Fact}
\newenvironment{claimproof}[1]{\par\noindent\emph{Proof.}\hspace{0.15cm}#1}{\hfill~$\blacktriangleleft$\newline\smallskip}
\newenvironment{mechanism}[1][htb]{%
   \begin{algorithm}[#1]%
  }{\end{algorithm}}
\renewcommand{\vec}[1]{\boldsymbol{#1}}
\newcommand{\LM}{\textsc{left-median}}
\newcommand{\Leftz}{\textsc{Left-z}}
\newcommand{\RandMed}{\textsc{rand-median}}
\newcommand{\InRange}{\textsc{In-Range}}
\newcommand{\mech}{\ensuremath{\mathcal{M}}}
\newcommand{\set}[1]{\{#1\}}
\newcommand{\median}{\textsc{median}}
\newcommand{\Lmedian}{\textsc{left-median}}
\newcommand{\fl}[1]{\lfloor #1 \rfloor}
\newcommand{\cl}[1]{\lceil #1 \rceil}
\newcommand{\A}{\ensuremath{A}}
\newcommand{\B}{\ensuremath{B}}
\newcommand{\C}{\ensuremath{C}}
\newcommand{\outliers}{\ensuremath{z}}
\newcommand{\opt}{\ensuremath{y^{\star}}}
\newcommand{\alg}{\ensuremath{y}}
\DeclareMathOperator{\OPTSC}{{SC}}
\DeclareMathOperator{\OPTMC}{MC}
\DeclareMathOperator*{\argmin}{arg\,min}
\DeclareMathOperator*{\EX}{\mathbb{E}}
\newcommand\numberthis{\addtocounter{equation}{1}\tag{\theequation}}
\DeclarePairedDelimiter{\floor}{\lfloor}{\rfloor}
\begin{document}

\title{\bfseries Mechanism Design with Outliers and Predictions}

\author[1]{Argyrios Deligkas}
\author[1]{Eduard Eiben}
\author[2,3]{Sophie Klumper}
\author[2,3]{\\ Guido Sch\"afer}
\author[4]{Artem Tsikiridis}

\affil[1]{Royal Holloway, University of London, United Kingdom}
\affil[2]{Centrum Wiskunde \& Informatica (CWI), The Netherlands}
\affil[3]{University of Amsterdam, The Netherlands}
\affil[4]{Technical University of Munich, Germany}

\date{}

\maketitle

\begin{abstract}
\noindent We initiate the study of \emph{mechanism design with outliers}, where the designer can discard $z$ agents from the social cost objective. This setting is particularly relevant when some agents exhibit extreme or atypical preferences. As a natural case study, we consider facility location on the line: $n$ strategic agents report their preferred locations, and a mechanism places a facility to minimize a social cost function. In our setting, the $z$ agents farthest from the chosen facility are excluded from the social cost. While it may seem intuitive that discarding outliers improves efficiency, our results reveal that the opposite can hold.

We derive tight bounds for deterministic strategyproof mechanisms under the two most-studied objectives: utilitarian and egalitarian social cost. Our results offer a comprehensive view of the impact of outliers. We first show that when $z \ge n/2$, no strategyproof mechanism can achieve a bounded approximation for either objective. For egalitarian cost, selecting the $(z + 1)$-th order statistic is strategyproof and 2-approximate. 
In fact, we show that this is best possible by providing a matching lower bound. Notably, this lower bound of 2 persists even when the mechanism has access to a prediction of the optimal location, in stark contrast to the setting without outliers. For utilitarian cost, we show that strategyproof mechanisms cannot effectively exploit outliers, leading to the counterintuitive outcome that approximation guarantees worsen as the number of outliers increases. However, in this case, access to a prediction allows us to design a strategyproof mechanism achieving the best possible trade-off between consistency and robustness. Finally, we also establish lower bounds for randomized mechanisms that are truthful in expectation.
\end{abstract}

\section{Introduction}

You are the coordinator of the annual social event of your department and your task is to choose the venue. Of course, you could decide on your own without asking your colleagues, making you really unpopular within your department.
Ideally, though, you would like to choose a venue that is aligned with the preferences of your colleagues---it is known that everyone wants the venue to be as close as possible to their place.
However, you are facing two major issues you need to overcome.
First, you do not want to be manipulated by your colleagues and you want to incentivize them to declare their {\em true} preferences.
In addition, you know from last year that due to stubborn-Joe---who lives 3 hours away from the department---everyone else had to commute at least one hour to reach the chosen venue. You have decided that this year you will leave out this type of ``{\em outliers}'' in order to make a better decision for the majority of your colleagues. 
After all, not considering some outliers should simplify the problem, shouldn't it?

Problems like the one described above fall into the category of {\em truthful facility location problems}, which have been extensively studied for more than 45 years~\citep{Moulin80}. Furthermore, the seminal paper of~\cite{procaccia2013} established facility location problems as {\em the} paradigm for {\em mechanism design without money}. Since then, a wide variety of models, settings, and mechanisms have been studied; see, for example the survey of~\cite{fl-survey} for an overview of different models. 
In this work, we revisit the two foundational models of truthful facility location, the one of~\cite{Moulin80} and the one of~\cite{procaccia2013}, and study them under the presence of outliers~(see \citep{charikar2001algorithms}). 
Although outliers have been a popular consideration in algorithm design in general, to the best of our knowledge, they have not been considered in the context of mechanism design.
In this paper, we take the first steps towards understanding the impact of outliers on a fundamental mechanism design problem.

\subsection{Our Contribution}

Our main contributions are as follows.

\begin{enumerate}

    \item We introduce the notion of outliers in mechanism design problems. We envision that outliers can be meaningfully incorporated into {\em any} mechanism design problem, which 
    opens up a whole new domain of mechanism design problems with outliers.
    Studying the impact of outliers is not only theoretically appealing but also practically relevant, especially in applications involving agents with extreme or atypical preferences. 
    
    \item We use single facility location on the real line as a first, natural test case for our setting of mechanism design with outliers. We derive tight bounds for deterministic strategyproof mechanisms for the two most-studied objectives, i.e., utilitarian and egalitarian social cost. We provide a complete picture of the impact of outliers and our results reveal some counter-intuitive phenomena. We also extend our analysis to randomized mechanisms. 
    
    \item We further enrich our model by incorporating output predictions (e.g., obtained through machine-learning techniques), contributing to the recent emerging line of research on learning-augmented mechanism design. 
    We derive a mechanism with an optimal consistency-robustness trade-off for the utilitarian objective and, unlike the problem without outliers, we prove an impossibility result for the egalitarian objective. 

\end{enumerate}
Next, we elaborate on these three points in more detail.

\begin{figure}[t]
  \centering
  \begin{tikzpicture}[scale=0.65]
    \begin{axis}[
      width=\textwidth,
      height=0.5\textwidth,
      xlabel={$z$}, ylabel={$f(n,z)$},
      xmin=1, xmax=9, ymin=1, ymax=10,
      ytick={1,2,3,4,5,6,7,8,9},
      xtick={1,2,3,4,5,6,7,8,9},
      grid=major,
      clip=false,
      legend style={at={(rel axis cs:0.03,0.95)}, anchor=north west, draw, rounded corners, fill=white, font=\small},
      legend columns=1,
    ]

      \addplot+[red!80!black, very thick, mark=*, mark options={fill=white},
                samples at={1,2,...,9}, domain=1:9]
        {20/(20-2*x)};
      \addlegendentry{\(n=20\)}

      \addplot+[cyan!80!black, very thick, mark=*, mark options={fill=white},
                samples at={1,2,...,7}, domain=1:7]
        {(16)/(16-2*x)};
      \addlegendentry{\(n=16\)}

      \addplot+[blue!80!black, very thick, mark=*, mark options={fill=white},
                samples at={1,2,...,5}, domain=1:5]
        {(12)/(12-2*x)};
      \addlegendentry{\(n=12\)}

      \addplot+[green!50!black, very thick, mark=*, mark options={fill=white},
                samples at={1,2,...,3}, domain=1:3]
        {8/(8-2*x)};
      \addlegendentry{\(n=8\)}

    \end{axis}
  \end{tikzpicture}
  \caption{Illustration of the best possible approximation guarantee $f(n,z) = \frac{n}{n-2z}$ for the utilitarian objective for $n$ (even) agents and $z$ outliers. For $z > \floor{\frac{n-1}{2}}$, the problem is inapproximable.}
  \label{fig:graph-approx}
\end{figure}
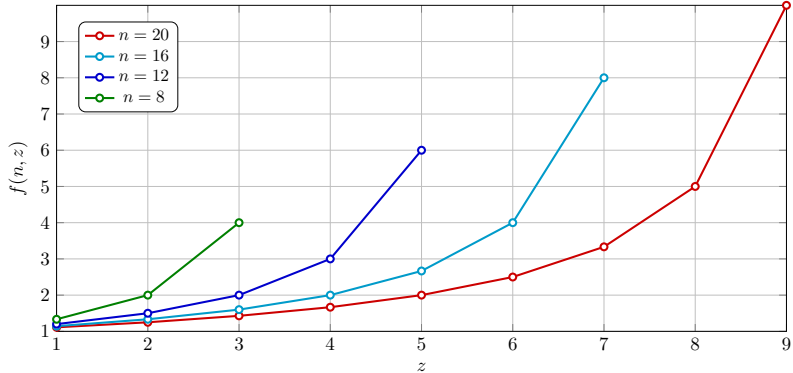

\paragraph{New Domain: Mechanism Design with Outliers.} 

In a mechanism design problem with outliers, the objective is altered in that it does not aim to optimize over the set of \emph{all} agents, but rather it only accounts for a certain number of them.
More specifically, the input of the problem includes an extra integer parameter $z$ that denotes the number of agents that are excluded from the objective function. 
Put differently, if there are $n$ agents in total, the objective only accounts for the ``best'' set of $n-z$ agents, i.e., for minimization objectives those that contribute the least to it and for maximization objectives those that contribute the most to it; formal definitions will be given below. 
Thus, this perspective of disregarding $z$ outliers can be universally applied to any objective function and domain. 

\paragraph{Facility Location with Outliers.} 

In facility location terms, the problem with outliers reduces to the following setting. We have $n$ agents, where each one of them has a private location on the real line. Our goal is to choose a single location on the line for a facility such that a social cost objective, depending on the distances between the facility and the locations of the agents, is minimized, while ensuring that the agents declare their true locations. 
We consider the two most prominently studied social cost objectives, namely the \emph{utilitarian social cost} (i.e., total cost of the agents) and the \emph{egalitarian social cost} (i.e., maximum cost of the agents). However, in our setting with outliers, the contributions of the $z$ agents with the largest costs are disregarded from the social cost objective; it is not hard to see that these agents will be the ones farthest away from the facility.

Although it may seem intuitively plausible that outliers should only improve the efficiency of a mechanism---since we are free to disregard some of the agents---our results reveal that the opposite is true.
For example, the achievable approximation guarantee for utilitarian social cost gets worse as the number of outliers increases (see Figure \ref{fig:graph-approx}).
The crux of the issue is that a strategyproof mechanism \emph{cannot} simply ignore the $z$ outliers chosen by the optimal solution.
A partial overview of the bounds we derive in this paper is given in Table~\ref{tab:overview}. 

Our first set of technical results fully characterizes the landscape of deterministic strategyproof mechanisms with outliers for both the utilitarian and egalitarian objective.
We begin by showing that if we can discard $z \ge \nicefrac{n}{2}$ of the agents, then there is no deterministic strategyproof mechanism that achieves a bounded approximation. This impossibility result holds for both objectives.  
Then, we turn our attention to the egalitarian objective and prove that the mechanism choosing the \emph{$(z+1)$-th order statistic}, i.e., the $(z+1)$-th smallest location reported by the agents, as the location of the facility is strategyproof and 2-approximate. We show that this is tight for every $z < \nicefrac{n}{2}$ by providing a matching lower bound.
After that, we focus on the utilitarian objective. 
To this end, we consider the strategyproof mechanism which always chooses the location of the median agent.
Analyzing the approximation is technically more involved, and we derive an upper bound as a function of the number of agents $n$ and outliers $z$ (see Table~\ref{tab:overview}). 
Finally, we also show that this is tight by providing a matching lower bound. 

We also extend our analysis to randomized mechanisms to understand how much randomization can help to improve our bounds. For the utilitarian objective, we show that a natural randomized extension of our deterministic strategyproof mechanism leads to an improved approximation guarantee when the number of agents is even, and we establish additional lower bounds. In contrast, for the egalitarian objective, we show that randomization does not seem to offer an advantage, by proving a lower bound that matches the approximation of our deterministic mechanism for three agents.

\begin{table}[t]
    \centering
    \setlength{\tabcolsep}{4pt}
    {
    \begin{tabularx}{\textwidth}{@{}>{\raggedright\arraybackslash}X@{} *{2}{>{\centering\arraybackslash}X} *{2}{>{\centering\arraybackslash}X}@{}}
        \toprule
         \textbf{\footnotesize Objective}& \multicolumn{2}{c}{\textbf{\footnotesize Deterministic Mechanisms}} & \multicolumn{2}{c}{\textbf{\footnotesize Randomized Mechanisms}} \\
        \cmidrule(lr){2-3} \cmidrule(l){4-5}
        & \textbf{\footnotesize Upper Bound} & \textbf{\footnotesize Lower Bound} & \textbf{\footnotesize Upper Bound} & \textbf{\footnotesize Lower Bound} \\
        \midrule
        \textbf{\footnotesize Utilitarian}
            & \makecell[c]{$\frac{n}{n-2z}\mid\frac{n-1}{n-2z+1}$\\\footnotesize(Thm.~\ref{th:SCdeter})}
            & \makecell[c]{$\frac{n}{n-2z}\mid\frac{n-1}{n-2z+1}$\\\footnotesize(Thm.~\ref{lem:SC:LB-deterministic})}
            & \makecell[c>{\centering}X]{$\frac{n^2 - 2nz + 2z}{(n - 2z)(n - 2z +2)}$ $\mid$ -- \\\footnotesize(Thm.~\ref{lem:SC:rand:UB})} 
            & \makecell[c]{$\tfrac{3}{2}\mid 2$\\\footnotesize(Thm. \ref{thm:SC:LB-3over2-randomized} \& \ref{thm:SC:LB-2-randomized-odd})} \\
        \addlinespace
        \textbf{\footnotesize Egalitarian}
            & \makecell[c]{$2$\\\footnotesize(Thm.~\ref{thm:MC:sp-2approx})}
            & \makecell[c]{$2$\\\footnotesize(Thm.~\ref{lem:MC:LB-2-deterministic})}
            & \makecell[c]{$2$\\\footnotesize(due to Thm.~\ref{thm:MC:sp-2approx})}
            & \makecell[c]{$2$\\\footnotesize(Thm.~\ref{lem:MC:LB-2-deterministic})} \\
        \bottomrule
    \end{tabularx}
    }
    \caption{Partial overview of our bounds. Note that some bounds depend on the parity of $n$; we use ($n$ even $|$ $n$ odd) to distinguish between these cases. Our randomized lower bounds hold for specific values of $n$ and $z$.}
    \label{tab:overview}
\end{table}

\paragraph{Integrating Predictions.} 

We extend our model by incorporating predictions. In this setting, we assume that the mechanism has access to some (possibly erroneous) prediction of the optimal location of the facility (i.e., with respect to the social cost objective adapted to the outlier setting). 
This kind of prediction, termed \emph{output advice} in \citep{christodoulou2024mechanism}, has recently become one of the standard benchmarks in mechanism design with predictions. 
We note that from an information-theoretic perspective, this augmentation is minimal.
In our setting, the predicted optimal facility location represents an aggregate of the private locations of $n-z$ agents, and can be learned from historical or customer data.\footnote{However, how this prediction is learned from data in practice is beyond the scope of this paper.}

Our second set of technical results examines how mechanisms can leverage these predictions in the setting of facility location with outliers to achieve improved approximation guarantees. 
More precisely, we are interested in designing mechanisms that achieve the best possible trade-off between \emph{consistency} and \emph{robustness}. While consistency demands good approximation guarantees when the prediction is accurate, robustness ensures that the approximation guarantee does not deteriorate arbitrarily when the prediction is erroneous. 
Surprisingly, we derive strong impossibility results for the egalitarian objective: we prove that no strategyproof mechanism can achieve a bounded robustness and a consistency better than $2$ (Theorem \ref{lem:MC:LB-prediction-opt}). In light of our $2$-approximate strategyproof mechanism with outliers mentioned above, this basically means that predictions do not help at all in this case. 
However, for the utilitarian objective and for $z \le \nicefrac{n}{3}$ outliers, we obtain positive results. We derive a strategyproof mechanism, called \InRange, that is $1$-consistent and achieves the best possible robustness guarantee (Theorem \ref{th:SC-UB-n>=3z}).
Additionally, we show that the approximation of \InRange\ smoothly interpolates between these two extremes depending on a natural error parameter quantifying the accuracy of the prediction.
If the number of outliers is $\nicefrac{n}{3} < z < \nicefrac{n}{2}$, we show that no strategyproof mechanism can achieve 1-consistency and a bounded robustness (Theorem \ref{lem:SC:LB-prediction-z>1-unbounded}).

\subsection{Related Work}

\paragraph{Facility location.} 
A big variety of different models have been proposed that studied
the number of facilities whose location needs to be determined~\citep{procaccia2013,Lu2010two-facility,fotakis2014two}, 
agents with different types of preferences for the facilities (optional~\citep{chen2020max,kanellopoulos2023discrete,li2020constant,serafino2016}, fractional~\citep{fong2018fractional}, or hybrid~\citep{feigenbaum2015hybrid})
and obnoxious facilities~\citep{cheng2013obnoxious}.
Furthermore, there exist models with other limitations or features: the facilities can only be built at specific fixed locations~\citep{feldman2016voting,gai2024mixed,kanellopoulos2025truthful,Xu2021minimum}; there are limited resources that allow only some of the available facilities to be built~\citep{deligkas2023limited}; there is limited available information during the decision process~\citep{chan2023ordinal,filos2024distributed}.
One specific model that 
looks similar to ours is the {\em capacitated} facility location problem \citep{AzizCLP20,aziz2020facility,auricchio2024capacitated,auricchio2024facility}. 
In this model, there is one facility (or more), and each facility is associated with a capacity, i.e., an upper bound on the number of agents it can serve. There, part of the mechanism design problem is to {\em choose} which agents will be served by a facility. In our case, though, although we have a ``capacity'' on the number of agents we consider in the optimal solution, we do not exclude any agent from being served by the facility.

\paragraph{Mechanism Design with Predictions.} A prominent ``beyond-worst-case analysis'' framework (see, e.g.,~\citep{roughgarden21}) is the design and analysis of algorithms in environments augmented with predictions. Often referred to as the \emph{learning-augmented framework}, this approach aims to overcome worst-case lower bounds by leveraging predictions provided, e.g., by a machine learning algorithm trained on historical data. Ever since the work of \citet{mahdian12} on online advertising, this framework has remained popular in the area of online algorithms and beyond. We refer the reader to the repository of \citet{algorithms_with_predictions} for a comprehensive list of related works. More recently, \citet{xu21} and \citet{agrawal2024} simultaneously proposed the study of strategic environments augmented with predictions about private information or aggregations thereof. In fact, strategic facility location can be viewed as the paradigmatic problem for this setting as it is studied in both papers, with \citet{agrawal2024} deriving deterministic, strategyproof mechanisms that achieve the best-possible consistency-robustness trade-offs in two dimensions for both the utilitarian and egalitarian objectives. 
There have been multiple follow-up works on strategic facility location in this environment \citep{barak2024mac,balkanski2024randomized,chen24}, including the work of \citet{christodoulou2024mechanism} who also propose a universal error notion which we adopt in this work.
The framework has also been applied to other mechanism design settings without money, but also to settings with monetary transfers. For example, \citet{colini2024trust} study assignment problems, fairness is considered in \citep{cohen24}, and \citet{ratsikas25} consider social choice in environments with predictions. For auction-related environments, see \citep{balkanski24online, balkanski23scheduling, prasad23,lu24,caragiannis24}.

\paragraph{Clustering with Outliers.}
The problem of optimization with outliers was introduced by \cite{charikar2001algorithms}, and they studied several well-known clustering problems ---$k$-median, $k$-means, facility location--- in high dimensions with outliers and provided polynomial-time algorithms that achieve constant approximations.  
Building upon this, there was a long line that studied the abovementioned problems and improved the approximation bounds using a variety of techniques ranging from local search to iterative rounding and parameterized algorithms~\citep{almanza2022,Chen_outliers,friggstad2019approximation,dabas2021variants,krishnaswamy_k-means_outliers,zhang2021local,agrawal2023clustering,maity2024linear,feng2019improved,goyal_et_al_2020,gupta_local_search}.

\section{Preliminaries} \label{sec:prelims}

We start by introducing the non-strategic problem of single facility location on the line, which follows the model of~\cite{charikar2001algorithms}.

\paragraph{Facility Location with Outliers.}
In the standard single facility location problem on the real line, we must determine the location $y \in \mathbb{R}$ of a facility that serves a set $N = \{ 1, 2, \ldots, n\}$ of $n$ agents. 
Each agent $i \in N$ has a location $x_i \in \mathbb{R}$ and if $y$ is chosen as the location of the facility, agent $i$ incurs a cost that is equal to the distance between $i$'s location and the location of the facility, i.e., $|y - x_i|$.
Two well-studied objectives are minimizing the \emph{utilitarian} social cost, which is the total cost of all agents, i.e., $\sum_{i \in N} |y - x_i|$, and minimizing the \emph{egalitarian} social cost, which is the maximum cost of an agent, i.e., $\max_{i \in N} |y - x_i|$. 
However, scenarios exist for which it is desirable to only consider a subset of the agents in the objective, and we refer to this problem as the problem with \emph{outliers}. 
In this case, for any $n \ge 2$, we are additionally given an integer parameter\footnote{Note that if $\outliers = n$, both objectives are equal to 0, and if $\outliers = 0$, we retrieve the problem without outliers.} $\outliers \in \{ 1, 2, \ldots, n-1\}$ representing the number of outliers. 
The goal is still to find a location $y$ that minimizes the social cost objective, but only for a set of $n-\outliers$ agents which can be chosen freely. 
Let the profile $\vec{x} = (x_i)_{i \in N} \in \mathbb{R}^n$ denote the vector of locations of the $n$ agents. 
If the location $y$ is chosen, then the set of outliers can be computed by solving
the following optimization problems for the utilitarian and egalitarian objective,
respectively

\begin{equation} \label{eq:OPT}
    \OPTSC(y, \vec{x},\outliers) := \min_{\substack{S \subset N: \\|S| = n-z}} \ \sum_{i \in S} |y - x_i|
\quad \quad \text{ and } \quad \quad 
    \OPTMC(y, \vec{x},\outliers) := \min_{\substack{S \subset N:\\|S| = n-z}} \ \max_{i \in S} |y - x_i|.
\end{equation}
We will use $\OPTSC^{\star}(\vec{x},z) = \min_{y \in \mathbb{R}}\OPTSC(y, \vec{x},z)$ and $\OPTMC^{\star}(\vec{x},z) = \min_{y \in \mathbb{R}}\OPTMC(y, \vec{x},z)$ to denote the optimal value of the objectives.
Moreover, we use $S^{\star}(\vec{x},z)$ to denote an optimal set of non-outliers and $y^{\star}(\vec{x}, z)$ to denote an optimal location, i.e., for the utilitarian objective $y^{\star}(\vec{x},z) \in \argmin_{y \in \mathbb{R}}\OPTSC(y, \vec{x},z)$.\footnote{Note that an optimal solution is not necessarily unique; if this is the case, we will clarify which optimal solution we refer to.} 
Given a profile $\vec{x}$, we will use $\sigma$ to refer to the indices in increasing order of value, i.e., $x_{\sigma(1)} \le x_{\sigma(2)} \le \ldots \le x_{\sigma(n)}$. 

\paragraph{Strategic Agents.}
In the strategic version of the problem, the preferred location $p_i \in \mathbb{R}$ of an agent $i \in N$ is {\em private information} and only known to agent $i$. 
Therefore, each agent $i \in N$ declares a preferred location $x_i \in \mathbb{R}$. Given a profile of declared locations $\vec{x} \in \mathbb{R}^n$ and the number of outliers $\outliers \in \{ 1, 2, \ldots, n-1 \}$, we seek a \textit{mechanism} $\mathcal{M}$ that chooses the location $y \in \mathbb{R}$ of the facility, i.e., $\mathcal{M}: (\mathbb{R}^{n},\mathbb{N}^+) \rightarrow \mathbb{R}$. 
If $y$ is chosen as the location of the facility, agent $i$ incurs a cost of $|y- p_i|$. 
As the agents are strategic, they will misreport their preferred location if this reduces their incurred cost.
We therefore seek \textit{strategyproof} mechanisms in which no agent can reduce their incurred cost by misreporting their preferred location.  
More formally, a mechanism $\mathcal{M}$ is strategyproof if for each agent $i \in N$, any $p_i, x_i \in \mathbb{R}$ and any $\vec{x}_{-i} \in \mathbb{R}^{n-1}$, it holds that $|\mathcal{M}((p_i, \vec{x}_{-i}),z) - p_i| \le |\mathcal{M}((x_i, \vec{x}_{-i}),z) - p_i|$. 
Note that we do not require a mechanism to determine which agents are the outliers and all $n$ agents can still make use of the facility.
Namely, given a profile $\vec{x}$, the number of outliers $z$ and the location $y$ chosen by the mechanism, the set of non-outliers minimizing the objective can easily be computed.  

\cite{Moulin80} established the following characterization result for deterministic strategyproof mechanisms which will be useful in our setting as well. 

\begin{theorem}[\cite{Moulin80}] \label{thm:characterization}
A deterministic mechanism $\mech$ for the facility location problem on the line is strategyproof if and only if there exist $n+1$ real numbers $\alpha_1, \dots, \alpha_{n+1} \in \mathbb{R} \cup \set{-\infty, +\infty}$, called \emph{phantom points}, such that 
\begin{equation}\label{eq:med-char}
\forall \vec{x} = (x_1, \dots, x_n) \in \mathbb{R}^n: \qquad \mech(\vec{x}) = \median(x_1, \dots, x_n, \alpha_1, \dots, \alpha_{n+1}).
\end{equation}
\end{theorem}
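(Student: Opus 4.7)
The plan is to prove the two implications separately. The sufficiency direction---that any mechanism of the form \eqref{eq:med-char} is strategyproof---is routine and follows from structural properties of the median. The necessity direction---constructing phantom points from an arbitrary strategyproof $\mech$---is the main obstacle, and will require extracting the phantom structure from the mechanism's behavior on extreme profiles.

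For sufficiency, fix an agent $i$ with true location $p_i$ and let $\vec{x}_{-i}$ and the phantoms $\alpha_1, \dots, \alpha_{n+1}$ be fixed. Viewed as a function of $i$'s report alone, $g(t) := \median(t, \vec{x}_{-i}, \alpha_1, \dots, \alpha_{n+1})$ is non-decreasing in $t$ and exhibits a peak-plateau shape: there is an interval $[a,b]$ on which $g(t) = t$, while $g(t) = a$ for $t < a$ and $g(t) = b$ for $t > b$. Consequently, reporting $x_i = p_i$ achieves outcome $g(p_i)$, which is the closest point in $[a,b]$ to $p_i$; no other report can bring the outcome strictly closer to $p_i$. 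Thus truth-telling is weakly dominant.

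For necessity, I would proceed in three steps. First, I would show that strategyproofness implies \emph{monotonicity} in each agent's report: if $i$ shifts their report rightward with the other reports fixed, the outcome weakly moves rightward. This follows by applying the strategyproofness inequality at two true locations, one near each candidate outcome, and comparing. Second, I would derive an \emph{uncompromisingness} property: once the outcome is $y$, increasing the report of any agent whose declared location already lies above $y$ leaves the outcome unchanged, and symmetrically below. Both properties are standard consequences of strategyproofness on the real line.

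The third step, where the main difficulty lies, is to define the phantoms and verify the representation. I would set
\begin{equation*}
\alpha_k \;:=\; \lim_{M \to \infty} \mech\bigl(\underbrace{-M, \dots, -M}_{k-1 \text{ coordinates}},\; \underbrace{+M, \dots, +M}_{n-k+1 \text{ coordinates}}\bigr), \qquad k = 1, \dots, n+1,
\end{equation*}
allowing values in $\mathbb{R} \cup \set{-\infty, +\infty}$. Monotonicity guarantees that the limits exist and that $\alpha_1 \le \dots \le \alpha_{n+1}$. To verify $\mech(\vec{x}) = \median(x_1, \dots, x_n, \alpha_1, \dots, \alpha_{n+1})$ for every profile $\vec{x}$, I would use a continuous-deformation argument: starting from the extreme profile defining the appropriate $\alpha_k$, slide each coordinate toward its value in $\vec{x}$ one agent at a time, invoking uncompromisingness when the shifted coordinate lies on the outer side of the current outcome, and monotonicity to track the outcome when it does not. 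The main burden will be the bookkeeping---ensuring the deformation uniquely pins down the outcome at every stage, and carefully handling edge cases such as ties among reports and infinite-valued phantoms.
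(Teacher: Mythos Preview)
The paper does not prove this statement at all: Theorem~\ref{thm:characterization} is stated with attribution to \cite{Moulin80} and then used as a black box throughout, with no proof or proof sketch provided. So there is nothing in the paper to compare your proposal against.

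That said, your outline is a reasonable sketch of the standard argument. One point worth flagging, which the paper makes explicit in the remark immediately following the theorem, is that the characterization as stated (with phantoms depending only on the multiset of reports, not on agent identities) presumes \emph{anonymity} of the mechanism. Your necessity argument implicitly relies on this when you define $\alpha_k$ via profiles specified only by how many coordinates equal $-M$ versus $+M$; without anonymity the limit could depend on \emph{which} agents are at $-M$, and the representation would instead require the more general min--max form over coalitions. If you intend a self-contained proof, you should either state anonymity as a hypothesis or work with the non-anonymous characterization.
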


Note that the above class of mechanisms in particular also contains the class of \emph{$k$-order statistic mechanisms} for any $k \in [n]$: by defining $n+1-k$ many phantom points to be $-\infty$ and the others to be $+\infty$, $\median(x_1, \dots, x_n, \alpha_1, \dots, \alpha_{n+1})$ always corresponds to the $k$-th smallest element of $\vec{x}$. 
We will use this observation several times in our proofs. 

\begin{remark}
A few remarks are in order about the characterization result of \cite{Moulin80}.
\begin{enumerate}

    \item \citet{Moulin80} focuses exclusively on mechanisms that are \emph{anonymous}, i.e., the identity of the agents does not affect the outcome. We make the same assumption throughout this paper, but do not mention it explicitly.
    Further, the characterization result above also holds for mechanisms satisfying the stronger incentive compatibility notion of \emph{group-strategyproofness}.
    
    \item The original characterization result by \citet{Moulin80} is proven for agents with single-peaked preferences. Our setting of Euclidean distances on the real line is a special case of this. However, in terms of characterization result, it is known that Theorem~\ref{thm:characterization} remains valid for this case as well 
    (see \citep{Border1983} and the discussion in \citep{Peters1993}). 

    \item \citet{Moulin80} also shows that the class of all mechanisms that are strategyproof and Pareto efficient is described by an analogous characterization with the only difference being that one needs to consider $n-1$ phantom points $\alpha_1, \dots, \alpha_{n-1}$ only. This result is not relevant in our setting as we care about social cost objectives (rather than Pareto efficiency). 

\end{enumerate}
\end{remark}

The following corollary also follows from Theorem \ref{thm:characterization} and will be useful when proving lower bounds for both objectives. 

\begin{corollary} \label{lem:SP-property-M}
Consider a profile $\vec{x}$ and $z$ outliers and let $\mathcal{M}$ be a strategyproof mechanism with $\mathcal{M}(\vec{x},\outliers) = \alg$.
Then, for any $i \in N$ with $x_i < \alg$ or $\alg < x_i$, it holds that $\mathcal{M}((x'_i, \vec{x}_{-i}),\outliers) = \alg$ for all $x'_i \in [x_i, \alg]$ and $x'_i \in [\alg,x_i] $, respectively. 
\end{corollary}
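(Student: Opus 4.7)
The plan is to invoke Moulin's characterization (Theorem~\ref{thm:characterization}): since $\mathcal{M}$ is strategyproof, there exist phantom points $\alpha_1,\dots,\alpha_{n+1}$ such that on every profile $\mathcal{M}$ returns the median of the $2n+1$ values consisting of the reported locations and the phantoms. Write $M := \{x_1,\dots,x_n,\alpha_1,\dots,\alpha_{n+1}\}$ (as a multiset) and $M' := (M\setminus\{x_i\})\cup\{x'_i\}$; since the number of outliers $z$ does not enter the characterization, the corollary reduces to showing that the median of $M'$ is still $\alg$.

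I focus on the case $x_i<\alg$ with $x'_i\in[x_i,\alg]$; the case $x_i>\alg$ is entirely symmetric (reflect the real line through $\alg$). Define
\[
L := |\{m\in M : m<\alg\}|, \quad E := |\{m\in M : m=\alg\}|, \quad G := |\{m\in M : m>\alg\}|,
\]
so that $L+E+G=2n+1$. Since $\alg$ is the $(n+1)$-th smallest element of $M$, we have $L\le n$ and $G\le n$.

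Now pass from $M$ to $M'$ by removing $x_i$ and inserting $x'_i$. Because $x'_i\le\alg$, no removed or inserted element lies strictly above $\alg$, and therefore $G'=G\le n$. Because $x_i<\alg$, the removal of $x_i$ decreases $L$ by one; the insertion of $x'_i\le\alg$ either restores that unit to $L$ (if $x'_i<\alg$) or contributes to $E$ instead (if $x'_i=\alg$). In either subcase, $L'\le L\le n$. Moreover, the element $\alg$ itself was never touched (as $x_i\ne\alg$), so $\alg\in M'$. Combining $L'\le n$, $G'\le n$ and $\alg\in M'$ shows that $\alg$ is the $(n+1)$-th smallest element of $M'$, hence $\mathcal{M}((x'_i,\vec{x}_{-i}),z) = \median(M') = \alg$.

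The only delicate point is the bookkeeping of the three counters $L,E,G$: one must verify that shifting a single coordinate upward by an amount that does not cross the current median cannot move the median, which is the classical bounded-shift monotonicity of the median and the sole engine of the argument; everything else is a direct application of Theorem~\ref{thm:characterization}.
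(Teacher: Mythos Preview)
Your proof is correct and takes essentially the same route the paper intends: the paper merely states that the corollary follows from Theorem~\ref{thm:characterization}, and you have supplied the straightforward median-invariance argument (via the $L,E,G$ bookkeeping) that makes this precise. The observation that $E\ge 1$ (so $y\in M'$) follows automatically from $L,G\le n$ and $L+E+G=2n+1$, which you use implicitly; otherwise nothing is missing.
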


Given the utilitarian objective, we say that a mechanism $\mathcal{M}$ is $\rho$-approximate, with $\rho \ge 1$, if for any input $(\vec{x},z)$ it holds that $\OPTSC(\mathcal{M}(\vec{x}, z), \vec{x}, z) \le \rho \cdot \OPTSC^{\star}(\vec{x},z)$. 
The definition for the egalitarian objective is defined analogously. 
For notational convenience, we use $\OPTSC(\mathcal{M}(\vec{x}, z))$ and $\OPTMC(\mathcal{M}(\vec{x}, z))$ instead of $\OPTSC(\mathcal{M}(\vec{x}, z), \vec{x}, z)$ and $\OPTMC(\mathcal{M}(\vec{x}, z), \vec{x}, z)$, respectively, in the remainder of the paper.

In this work, we also consider randomized mechanisms, which allow us to randomly determine the location of the facility based on the input. Formally, given some input $(\vec{x}, z)$, a randomized mechanism $\mech(\vec{x},z)$ is a probability distribution over $\mathbb{R}$, and we write $y \sim \mech(\vec{x},z)$ to denote its (random) location. Furthermore, both the objectives in \eqref{eq:OPT}, as well as the costs that agents incur, extend naturally to randomized mechanisms by being evaluated in expectation, i.e., $\EX_{y \sim \mech(\vec{x},z)}[\OPTSC(\vec{x}, z)]$, $\EX_{y \sim \mech(\vec{x},z)}[\OPTMC(\vec{x}, z)]$, and $\EX_{y \sim \mech(\vec{x},z)}[|y-p_i|]$ for each agent $i \in N$. There are two variants of strategyproofness considered in the literature: \emph{strategyproofness-in-expectation} and \emph{universal strategyproofness}. The first notion requires that every agent $i \in N$ satisfies $
\EX_{y \sim \mech((p_i,\vec{x}_{-i}),z)}[|y-p_i|] \le \EX_{y \sim \mech((x_i,\vec{x}_{-i}),z)}[|y-p_i|],
$ for every $x_i \in \mathbb{R}$ and every $\vec{x}_{-i} \in \mathbb{R}^{n-1}$. On the other hand, the second notion is more stringent and requires that the randomized mechanism is a probability distribution over strategyproof deterministic mechanisms.

\paragraph{Impossibility Result.} Note that if $\outliers = n-1$, both objectives considered in this paper minimize the cost of a single agent, which can trivially be solved optimally by any mechanism deterministically choosing a $k$-th order statistic. 
However, if the number of outliers is at least half the number of agents, we cannot hope to achieve any bounded approximation, as we show in the lemma below.\footnote{Note that the lemma does not apply to $n \le 3$, i.e., for $n = 3$, $\cl{\frac{3}{2}} = 2$ already exceeds $n-2 = 1$.}

\begin{restatable}{theorem}{twoone} \label{lem:approx-unbounded-z>=halfn} 
Consider the utilitarian or egalitarian social cost objective and let $n \ge 4$ and $\outliers \in \{ \cl{\frac{n}{2}}, \cl{\frac{n}{2}} + 1, \ldots, n-2 \}$. Then, there is no deterministic strategyproof mechanism that achieves a bounded approximation guarantee.
\end{restatable}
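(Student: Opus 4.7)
The plan is to assume for contradiction that $\mech$ is deterministic, strategyproof, and has a bounded approximation ratio, then derive a contradiction by combining Moulin's characterization (Theorem~\ref{thm:characterization}) with a few carefully chosen adversarial profiles. By the characterization, $\mech(\vec{x}) = \median(\vec{x}, \alpha_1, \dots, \alpha_{n+1})$ for a fixed vector of phantoms. Set $k := n - z$, so $2 \le k \le \lfloor n/2 \rfloor$, and let $n_{-\infty}, n_{+\infty}$ denote the number of phantoms at $-\infty$ and $+\infty$. The common theme of every profile I will use is that its optimal social cost (under either objective) is exactly $0$, so for bounded approximation $\mech$ must output a location where at least $k$ agents coincide; this is a rigid requirement that I will violate.

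First I would consider the polarized profile $\vec{x}^P := (k \text{ agents at } -M,\ z \text{ agents at } +M)$ for large $M$. Both objectives have optimum $0$ (place the facility at either cluster and discard the other as outliers). A direct position count in the sorted list $(\vec{x}^P, \vec{\alpha})$ shows that $\mech(\vec{x}^P) \in \{-M, +M\}$ if and only if $n_{-\infty} \ge z+1$ or $n_{+\infty} \ge k+1$; applying the same argument to the mirror profile yields $n_{-\infty} \ge k+1$ or $n_{+\infty} \ge z+1$. The conjunction $n_{-\infty}, n_{+\infty} \ge z+1$ would require more than $n+1$ phantoms (since $z \ge \lceil n/2 \rceil$), so the phantom vector must fall into one of three regimes: (A) $n_{-\infty} \ge z+1$, (B) $n_{+\infty} \ge z+1$, or (C) $n_{-\infty}, n_{+\infty} \in [k+1, z]$.

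For regimes (A) and (B) the plan is to nudge one of the $k$ extreme-cluster agents of $\vec{x}^P$ inward by $\epsilon$. Concretely, in regime (A), consider $\vec{x}^{P'} := (k-1 \text{ agents at } -M,\ 1 \text{ agent at } -M+\epsilon,\ z \text{ agents at } M)$; the optimum is still $0$ (facility at $+M$, discard the left cluster). A short count shows that the $(n+1)$-th smallest of $(\vec{x}^{P'}, \vec{\alpha})$ is $-M$ (if $n_{-\infty} \ge z+2$) or $-M+\epsilon$ (if $n_{-\infty} = z+1$). Either way, the $k$ agents nearest to the mechanism's output form the entire left cluster, whose diameter is $\epsilon$, so the utilitarian cost is at least $\epsilon$ and the egalitarian cost is exactly $\epsilon$, against an optimum of $0$. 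Regime (B) is handled symmetrically.

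The hard part will be regime (C), where the phantoms are balanced and a single perturbation no longer works because the mechanism has enough ``counterweight'' to output in the central region. Here I would use a three-cluster profile of the form $(k \text{ at } -M,\ (z-k) \text{ near } 0,\ k \text{ at } M)$ (meaningful since regime (C) forces $z \ge k+1$) and disrupt the central cluster so that no point holds $k$ coincident agents. Depending on whether $z \le 2k$ or $z > 2k$, I would either replace one central agent by one at $\epsilon$, or spread $z - 2k + 1$ central agents over distinct small positions $0 < \epsilon_1 < \cdots < \epsilon_{z-2k+1}$, chosen smaller than every positive finite phantom. The optimum remains $0$ (facility at $\pm M$). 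A cumulative position count shows the $(n+1)$-th smallest must lie in the central region (positions $n_{-\infty}+k+1$ through $2n-k+1-n_{+\infty}$), at a point supporting fewer than $k$ agents, so the set of $k$ nearest agents spans positive diameter $\Theta(\epsilon)$. The resulting positive cost against optimum $0$ yields an unbounded ratio for both objectives, completing the contradiction.
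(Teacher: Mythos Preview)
Your argument is correct, and it follows a genuinely different route from the paper's proof. The paper never unpacks Moulin's characterization: it works ``black-box'' with strategyproofness, using only Corollary~\ref{lem:SP-property-M}. It exhibits a profile with two zero-cost optima (left and right clusters of size $\lfloor n/2 \rfloor$), assumes without loss of generality that the mechanism outputs the right one, then exhibits a second profile with a unique zero-cost optimum on the left; it then walks from the second profile back to the first by moving agents one at a time toward the facility, so that by Corollary~\ref{lem:SP-property-M} the output stays on the left, contradicting the initial assumption. Your approach instead opens the box: you read off the phantom vector from Theorem~\ref{thm:characterization}, use the polarized profile and its mirror to force the counts $n_{-\infty}, n_{+\infty}$ into one of three regimes, and then in each regime construct a single profile where the median provably lands at a point shared by fewer than $k$ agents while the optimum is $0$. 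The paper's argument is shorter and would transfer to settings lacking a characterization; yours is more structural and makes explicit exactly which phantom configurations are compatible with bounded approximation. One small inaccuracy worth noting: in regime~(C), the diameter of the $k$ nearest agents need not be $\Theta(\epsilon)$---when $z-k < k$ the set necessarily includes agents at $\pm M$, giving diameter $\Theta(M)$---but this is harmless since any positive cost against an optimum of $0$ already yields the contradiction.
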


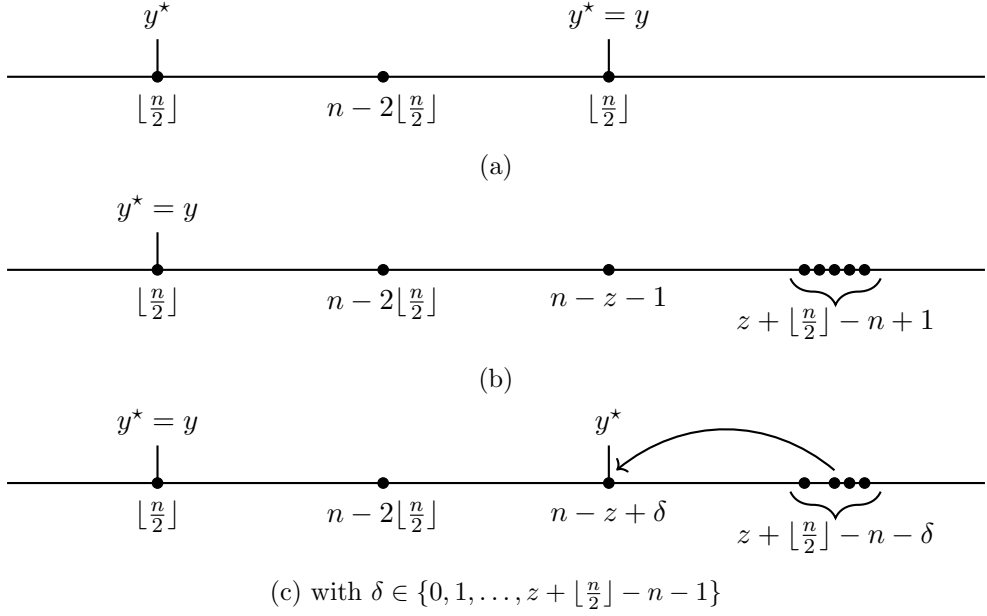
\begin{figure}[t]
\centering
\begin{subfigure}[b]{0.9\linewidth}
    \centering
    \begin{tikzpicture}
        \draw[thick] (0,0) -- (13,0);
        \draw[thick] (2,0) -- (2,0.5) node[above]{$y^{\star}$};
        \filldraw[black] (2,0) circle (2pt) node[below, yshift=-0.1cm]{$\fl{\frac{n}{2}}$};
        \filldraw[black] (5,0) circle (2pt) node[below, yshift=-0.1cm]{$n - 2\fl{\frac{n}{2}}$}; 
        \draw[thick] (8,0) -- (8,0.5) node[above]{$y^{\star} = \alg$};
        \filldraw[black] (8,0) circle (2pt) node[below, yshift=-0.1cm]{$\fl{\frac{n}{2}}$}; 
    \end{tikzpicture}
    \caption{ }
    \label{fig:ApproxUnboundedA}
\end{subfigure}

\begin{subfigure}[b]{0.9\linewidth}
    \centering
    \begin{tikzpicture}
        \draw[thick] (0,0) -- (13,0);
        \draw[thick] (2,0) -- (2,0.5) node[above]{$y^{\star} = \alg$};
        \filldraw[black] (2,0) circle (2pt) node[below, yshift=-0.1cm]{$\fl{\frac{n}{2}}$};
        \filldraw[black] (5,0) circle (2pt) node[below, yshift=-0.1cm]{$n - 2\fl{\frac{n}{2}}$}; 
        \filldraw[black] (8,0) circle (2pt) node[below, yshift=-0.1cm]{$n- \outliers -1$}; 
        \filldraw[black] (10.6,0) circle (2pt);
        \filldraw[black] (10.8,0) circle (2pt);
        \filldraw[black] (11,0) circle (2pt); 
        \filldraw[black] (11.2,0) circle (2pt);
        \filldraw[black] (11.4,0) circle (2pt);
        \draw [thick, decorate,decoration={brace,amplitude=10pt,mirror},xshift=0.4pt,yshift=-0.4pt](10.4,-0.1) -- (11.6,-0.1) node[black,midway,yshift=-0.6cm] {$\outliers + \fl{\frac{n}{2}} - n + 1$};
    \end{tikzpicture}
    \caption{ }
    \label{fig:ApproxUnboundedB}
\end{subfigure}

\begin{subfigure}[b]{0.9\linewidth}
    \centering
    \begin{tikzpicture}
        \draw[thick] (0,0) -- (13,0);
        \draw[thick] (2,0) -- (2,0.5) node[above]{$y^{\star} = \alg$};
        \filldraw[black] (2,0) circle (2pt) node[below, yshift=-0.1cm]{$\fl{\frac{n}{2}}$};
        \filldraw[black] (5,0) circle (2pt) node[below, yshift=-0.1cm]{$n - 2\fl{\frac{n}{2}}$}; 
        \filldraw[black] (8,0) circle (2pt) node[below, yshift=-0.1cm]{$n-\outliers + \delta$};
        \draw[thick] (8,0) -- (8,0.5) node[above]{$y^{\star}$};
        \filldraw[black] (10.6,0) circle (2pt);
        \filldraw[black] (11,0) circle (2pt);
        \filldraw[black] (11.2,0) circle (2pt);
        \filldraw[black] (11.4,0) circle (2pt);
        \draw [thick, decorate,decoration={brace,amplitude=10pt,mirror},xshift=0.4pt,yshift=-0.4pt](10.4,-0.1) -- (11.6,-0.1) node[black,midway,yshift=-0.6cm] {$\outliers + \fl{\frac{n}{2}} - n - \delta$};
       \draw[->, thick, bend right=40] (11,0.17) to (8.1,0.17);
    \end{tikzpicture}
    \caption{ with $\delta \in \{0, 1, \ldots, \outliers + \fl{\frac{n}{2}} - n -1\}$}
    \label{fig:ApproxUnboundedC}
\end{subfigure}
\caption{Profiles used in the proof of Theorem \ref{lem:approx-unbounded-z>=halfn}. The numbers below the locations (dots) indicate the number of agents with this location.}
\label{fig:ApproxUnbounded}
\end{figure}

\begin{proof}
Let $n \ge 4$ and $\outliers \ge \cl{\frac{n}{2}}$ and towards a contradiction, assume that there exists a strategyproof mechanism $\mathcal{M}$ that achieves a bounded approximation guarantee.
The core idea behind our negative result is that for a profile with an optimal solution of $0$ social cost, any mechanism that achieves a bounded approximation must also choose an optimal location.
We will leverage this in our proof by considering a profile with a unique optimal location of $0$ social cost, and sequentially construct profiles with each time only changing the location of one agent.
To this end, consider the profiles depicted in Figure \ref{fig:ApproxUnbounded}, in which the numbers below the locations (dots) indicate the number of agents with that location.

First, consider the profile in Figure \ref{fig:ApproxUnboundedA}. 
In this case, there are two possible optimal locations $y^{\star}$ with a social cost of $0$, namely the leftmost and rightmost location. 
To see this, note that there is at most $n-2\fl{\frac{n}{2}} \le 1$ agent located in the middle, so this location together with the leftmost (or the rightmost) cluster can be disregarded for the social cost as $\outliers \ge \cl{\frac{n}{2}}$.
Therefore, $\mathcal{M}$ will have to output one of these two optimal locations in order to achieve a bounded approximation guarantee. Assume w.l.o.g. that $\mathcal{M}$ outputs the rightmost location as depicted in Figure \ref{fig:ApproxUnboundedA}.

Secondly, consider the profile in Figure \ref{fig:ApproxUnboundedB}. In this case, there is only one optimal location $y^{\star}$ with a social cost of $0$, namely the leftmost location. 
To see this, note that this is the only location equal to at least $n-z$ locations of the agents. 
Therefore, $\mathcal{M}$ will have to output this optimal location in order to achieve a bounded approximation guarantee, as depicted in Figure \ref{fig:ApproxUnboundedB}.

Now consider the profile in Figure \ref{fig:ApproxUnboundedC} with $\delta=0$. 
One agent with a location in the rightmost cluster in Figure \ref{fig:ApproxUnboundedB} now has a location in the second rightmost cluster. 
Note that in this case, the second rightmost cluster is also an optimal location as it contains $n-z$ locations.
However, as $\mathcal{M}$ is strategyproof, it must still place the facility at the leftmost cluster by Corollary \ref{lem:SP-property-M}.
The same argument holds when considering $\delta = 1, 2, \ldots, \outliers + \fl{\frac{n}{2}} - n -1$ consecutively.

Finally, consider Figure \ref{fig:ApproxUnboundedC} with only one agent with a location in the rightmost cluster, i.e., $\delta = z + \fl{\frac{n}{2}} - n -1$. 
Then, if agent $i$ would declare that their location is in the cluster to the left of their location, this would lead to the profile depicted in Figure \ref{fig:ApproxUnboundedA} for which $\mathcal{M}$ places the facility in the rightmost cluster.  
This contradicts that $\mathcal{M}$ is strategyproof by Corollary \ref{lem:SP-property-M}, as the location of the facility should remain unchanged after this deviation of agent $i$. 
\end{proof}

In light of this impossibility result and optimality for $z = n-1$, we only consider profiles with $n \ge 3$ agents and $z$ outliers such that $1 \le \outliers \le \fl{\frac{n-1}{2}}$ in the remainder of this paper. 

\paragraph{Augmenting Mechanisms with Predictions.}

In the setting with predictions, besides $\vec{x}$ and $z$, a mechanism is additionally given a predicted location $\hat{y}$ of the optimal facility as input.
We use $(\vec{x}, z, \hat{y})$ to refer to input augmented with a prediction.
Given input $(\vec{x}, z, \hat{y})$, we say that $\hat{y}$ is a \emph{perfect prediction} if $\hat{y}$ corresponds to a location of an optimal facility, i.e., for the utilitarian objective this holds if $\OPTSC(\hat{y}, \vec{x}, z)= \OPTSC^{\star}(\vec{x},z)$.
Depending on the quality of the prediction, we consider the following approximation notions for the utilitarian objective\footnote{The definitions for the egalitarian objective are defined analogously.}, which are standard in the predictions literature (see e.g., \citep{lykouris21}). 
\begin{itemize}
    \item \textit{Consistency:} A mechanism $\mathcal{M}$ is \emph{$\alpha$-consistent}, with $\alpha \ge 1$, if for any input $(\vec{x}, z, \hat{y})$ with a perfect prediction, it holds that $\OPTSC(\mathcal{M}(\vec{x}, z, \hat{y})) \le \alpha \cdot \OPTSC^{\star}(\vec{x},z)$.
    
    \item \textit{Robustness:} A mechanism $\mathcal{M}$ is \emph{$\beta$-robust}, with $\beta \ge 1$, if for any input $(\vec{x}, z, \hat{y})$ with an arbitrary prediction, it holds that $\OPTSC(\mathcal{M}(\vec{x}, z, \hat{y})) \le \beta \cdot \OPTSC^{\star}(\vec{x},z)$.
\end{itemize}
\noindent
To define an approximation notion beyond the two extremes of a perfect prediction and any arbitrary prediction, we define an error that measures the quality of the prediction (see, e.g., \citep{christodoulou2024mechanism,colini2024trust,gkatzelis2025clock}), as follows:\footnote{If $\OPTSC^{\star}(\vec{x},\outliers) =0$, we define $\eta(\vec{x},\outliers,\hat{y}) = 1$ if $\OPTSC(\hat{y}, \vec{x}, z) = 0$ and $ \eta(\vec{x},\outliers,\hat{y}) = \infty$ otherwise.}
\begin{equation} \label{eq:error}
    \eta(\vec{x},\outliers,\hat{y}) = \frac{\OPTSC(\hat{y}, \vec{x}, z)}{\OPTSC^{\star}(\vec{x},\outliers)}.
\end{equation}
Given this definition, input $(\vec{x}, z, \hat{y})$ with a perfect prediction has a prediction error of 1.
As the quality of the prediction deteriorates, the error measure increases, possibly to $\infty$. 
Our goal is to construct a mechanism that achieves an approximation guarantee that smoothly interpolates between the consistency and robustness guarantees as a function of the error measure.
Formally, a mechanism $\mathcal{M}$ is \emph{$f(\eta)$-approximate}, with $f(\eta) \ge 1$, if for any input $(\vec{x}, z, \hat{y})$ with a prediction error of at most $\eta$, i.e., $\eta(\vec{x},\outliers,\hat{y}) \le \eta$, it holds that $\OPTSC(\mathcal{M}(\vec{x}, z, \hat{y})) \le f(\eta) \cdot \OPTSC^{\star}( \vec{x},z)$.

\section{Minimizing the Egalitarian Objective with Outliers} \label{sec:max-cost}

In this section, we focus on the egalitarian objective and prove a tight bound of 2 for deterministic strategyproof mechanisms.

An optimal solution to the non-strategic problem for a profile $\vec{x}$ and $\outliers$ outliers disregards the locations $x_i$ in the objective that are among the smallest and largest values. 
This could be $\outliers$ of the smallest values, i.e., all locations $x_i$ with $i=\sigma(k)$ and $k < \outliers +1$, or $\outliers$ of the largest values, i.e., all locations $x_i$ with $i = \sigma(k)$ and $k > n-\outliers$, or any combination in between. 
Therefore, in order for a mechanism $\mathcal{M}(\vec{x},\outliers) = \alg$ to achieve a bounded approximation guarantee, $\mech$ must choose a location $\alg$ such that $x_{\sigma(\outliers+1)} \le y \le x_{\sigma(n- \outliers)}$. 
For example, if $\mech$ would output $\alg < x_{\sigma(\outliers+1)}$, it could be that $x_{\sigma(\outliers)} < x_{\sigma(\outliers+1)} = x_{\sigma(\outliers+2)} = \ldots = x_{\sigma(n)} = y^{\star}$, leading to an optimal egalitarian social cost of 0 by disregarding the $\outliers$ leftmost locations.
But as $\outliers \le \fl{\frac{n-1}{2}}$, the location $\alg$ chosen by $\mech$ has a positive egalitarian social cost as there are at least $z +1$ agents with a location greater than $y$, i.e., $x_{\sigma(k)} - \alg > 0$ for $k \in \{ z+1, z+2, \ldots, n\}$, leading to an unbounded approximation guarantee.

Therefore, consider the mechanism \Leftz\ that chooses the $(\outliers+1)$-th order statistic as the location of the facility\footnote{In fact, any choice of the $k$-th order statistic with $\outliers + 1 \le k \le n-\outliers$ would work.}: 
\begin{equation*}
\Leftz(\vec{x}, \outliers) = x_{\sigma(\outliers+1)}.
\end{equation*}

\begin{restatable}{theorem}{threeone} \label{thm:MC:sp-2approx} 
Let $\outliers \le \fl{\frac{n-1}{2}}$. Then, mechanism \Leftz\ is strategyproof and has an approximation guarantee of $2$ for the egalitarian objective. 
\end{restatable}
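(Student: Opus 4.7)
}

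My plan is to split the proof into two parts: strategyproofness via Moulin's characterization, and the $2$-approximation via a short case analysis using the triangle inequality.

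For strategyproofness, I would simply instantiate Theorem~\ref{thm:characterization}: set $n-z$ of the phantom points to $-\infty$ and the remaining $z+1$ phantom points to $+\infty$. Then the median of the $2n+1$ values $(x_1,\dots,x_n,\alpha_1,\dots,\alpha_{n+1})$ is exactly the $(n+1)$-th smallest, which, after removing the $n-z$ values equal to $-\infty$ from the lower end, coincides with $x_{\sigma(z+1)}$. Thus $\Leftz(\vec{x},z)$ fits the Moulin form and is therefore strategyproof.

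For the approximation guarantee, let $y=\Leftz(\vec{x},z)=x_{\sigma(z+1)}$, let $y^{\star}=y^{\star}(\vec{x},z)$ and $S^{\star}=S^{\star}(\vec{x},z)$ be any optimal facility location and associated non-outlier set, and set $r:=\OPTMC^{\star}(\vec{x},z)$, so that $|x_i-y^{\star}|\le r$ for all $i\in S^{\star}$. The key claim I want to establish is that $|y-y^{\star}|\le r$. Once this is in hand, the triangle inequality gives $|y-x_i|\le |y-y^{\star}|+|y^{\star}-x_i|\le 2r$ for every $i\in S^{\star}$, so choosing the set $S^{\star}$ as the non-outliers for $y$ certifies $\OPTMC(y,\vec{x},z)\le 2r$, which is the desired $2$-approximation.

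To prove $|y-y^{\star}|\le r$, I would rule out both $y>y^{\star}+r$ and $y<y^{\star}-r$. Suppose first $y>y^{\star}+r$. Then every $i\in S^{\star}$ satisfies $x_i\le y^{\star}+r<y=x_{\sigma(z+1)}$, so each such $i$ occupies one of the positions $\sigma(1),\dots,\sigma(z)$ (those are the only indices with $x_{\sigma(k)}<y$). But $|S^{\star}|=n-z$, forcing $n-z\le z$, i.e., $z\ge n/2$, contradicting $z\le \lfloor (n-1)/2\rfloor$. Symmetrically, if $y<y^{\star}-r$, then every $i\in S^{\star}$ has $x_i\ge y^{\star}-r>y$, while the $z+1$ indices $\sigma(1),\dots,\sigma(z+1)$ all satisfy $x_{\sigma(k)}\le y$; hence none of these $z+1$ indices belongs to $S^{\star}$, contradicting $|N\setminus S^{\star}|=z$.

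The main (minor) obstacle is handling ties in the order statistic cleanly in the second case: I must argue using the weak inequality $x_{\sigma(k)}\le y$ for $k\le z+1$ together with the strict inequality $x_i>y$ for $i\in S^{\star}$, so that positions with $x_{\sigma(k)}=y$ are still excluded from $S^{\star}$. Apart from that, the argument is essentially a one-step triangle-inequality bound driven entirely by the pigeonhole structure of the $(z+1)$-th order statistic.
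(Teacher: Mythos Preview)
Your proposal is correct and follows essentially the same approach as the paper: both arguments reduce to showing $|y-y^{\star}|\le \OPTMC^{\star}(\vec{x},z)$ and then bound $\max_{i\in S^{\star}}|y-x_i|$ via the triangle inequality, with strategyproofness coming from Moulin's characterization. Your pigeonhole argument for $|y-y^{\star}|\le r$ is a slightly cleaner packaging of what the paper does inside its two-case analysis, but the underlying reasoning is the same.
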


\begin{proof}
Strategyproofness of \Leftz\ follows from Theorem \ref{thm:characterization}.
For the approximation guarantee, fix an arbitrary profile $(\vec{x}, z)$ and let $y=\Leftz(\vec{x},z)$. 
For the remainder of the proof, we fix an optimal solution and use $S^{\star}:=S^{\star}(\vec{x},z)$ as the optimal set of non-outliers with respect to $y^{\star}:=y^{\star}(\vec{x}, z)$. The following claim will be useful for our analysis.
\begin{claim}\label{claim:mc-main-claim}
    $\max_{i \in S^*}|y-x_i| \leq 2 \cdot \OPTMC^{\star}(\vec{x}, z)$. 
\end{claim}
\begin{claimproof}
We distinguish two cases, depending on the relative values of $y$ and $y^{\star}$.

\smallskip
\noindent \underline{Case 1:} $y^{\star} \le \alg$. Note that $\OPTMC^{\star}(\vec{x}, \outliers) \ge \alg - y^{\star}$, as there are at least $\outliers$ locations to the right of $\alg = x_{\sigma(\outliers +1)}$, i.e., $\alg - y^{\star} \le x_{i} - y^{\star}$ for $i \in \{ \sigma(\outliers+2), \sigma(\outliers +3), \ldots, \sigma(n)\}$.
For an illustrative example, see Figure \ref{fig:maxCost2approx}. 
For locations $x_{i} \le y^{\star}$ with $i \in S^{\star}$, it holds that $y - x_i = y^{\star} - x_i + y - y^{\star} \le 2 \cdot \OPTMC^{\star}(\vec{x}, \outliers)$.
For locations $x_{i} \ge \alg$ with $i \in S^{\star}$, it holds that $x_i  - \alg \le x_i - y^{\star} \le \OPTMC^{\star}(\vec{x}, \outliers)$.
Finally, for locations $y^{\star} < x_{i} < \alg$ with $i \in S^{\star}$, it holds that $\alg - x_i \le \alg - y^{\star} \le \OPTMC^{\star}(\vec{x}, \outliers)$.

\smallskip
\noindent \underline{Case 2:} $y^{\star} > \alg$. This case is symmetrical. Again note that $\OPTMC^{\star}(\vec{x}, \outliers) \ge y^{\star} - \alg$, as there are at least $\outliers$ locations to the left of $\alg = x_{\sigma(\outliers +1)}$, i.e., $y^{\star} - \alg \le y^{\star} - x_{i}$ for $i \in \{ \sigma(1), \sigma(2), \ldots, \sigma(\outliers) \}$.
For locations $x_{i} \ge y^{\star}$ with $i \in S^{\star}$, it holds that $x_i - \alg = x_i - y^{\star} + y^{\star} - \alg \le 2 \cdot \OPTMC^{\star}(\vec{x}, \outliers)$.
For locations $x_{i} \le \alg$ with $i \in S^{\star}$, it holds that $\alg - x_i \le y^{\star} - x_i \le \OPTMC^{\star}(\vec{x}, \outliers)$.
Finally, for locations $\alg < x_{i} < y^{\star}$ with $i \in S^{\star}$, it holds that $x_i - \alg \le y^{\star} - \alg \le \OPTMC^{\star}(\vec{x}, \outliers)$.
\end{claimproof}

\noindent
We complete the proof by observing that
\begin{equation*}
    \OPTMC(y ,\vec{x}, z) = \min_{\substack{S \subset N:\\|S|=n-z}}\max_{i \in S}|y-x_i| \leq \max_{i \in S^*}|y-x_i| \leq 2 \cdot \OPTMC^{\star}(\vec{x},z).
\end{equation*}
Here, the first inequality follows by the fact that $|S^{\star}|=n-z$, i.e., $S^{\star}$ is a feasible solution to \eqref{eq:OPT} for $(y, \vec{x},z)$, and the second inequality is due to Claim \ref{claim:mc-main-claim}. This concludes the proof.
\end{proof}

\begin{figure}[h]
\centering
    \begin{tikzpicture}
        \draw[thick] (2,0) -- (12,0); 
        \filldraw[black] (3,0) circle (2pt) node[below, yshift=-0.1cm]{$x_j$} ; 
         \draw[stealth-stealth,thick](3,0.25) -- (9,0.25)  node[color=black, above, xshift=-2.8cm]{$\alg - x_{j} = y^{\star} - x_{j} + \alg - y^{\star}$};
        \draw[stealth-stealth,thick](3,0.25) -- (7,0.25) ; \draw[stealth-stealth,thick](7,0.25) -- (9,0.25) ;
        \draw[thick] (7,0) -- (7,-0.5) node[below]{$y^{\star}$};
        \filldraw[black] (7.5,0) circle (2pt) node[below, yshift=-0.1cm]{$x_k$} ; 
        \filldraw[black] (9,0) circle (2pt);
        \draw[thick] (9,0) -- (9,-0.5) node[below]{\alg};
        \draw[stealth-stealth,thick](7,1) -- (11,1)  node[color=black, above, xshift=-0.75cm]{$x_{\ell} - y^{\star}$};
         \draw[stealth-stealth,thick](9,0.25) -- (11,0.25)  node[color=black, above, xshift=-0.75cm]{$x_{\ell} - \alg$};
        \filldraw[black] (11,0) circle (2pt) node[below, yshift=-0.1cm]{$x_{\ell}$} ; 
    \end{tikzpicture}
\caption{Example of Case 1 of Theorem \ref{thm:MC:sp-2approx}. } 
\label{fig:maxCost2approx}
\end{figure}
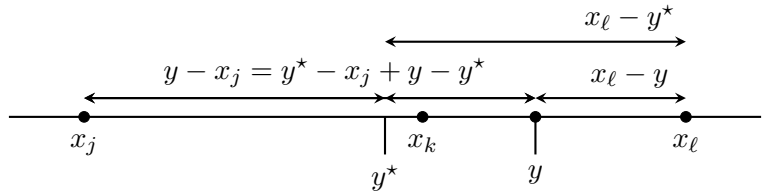

We conclude this section by proving that, for the egalitarian objective, \Leftz\ achieves the best-possible approximation guarantee among all deterministic strategyproof mechanisms. 
Moreover, there are indications that this result may hold even in a stronger sense: as we show in Theorem \ref{lem:MC:LB-2-deterministic}, not even a mechanism that is strategyproof in expectation can achieve an expected approximation guarantee better than 2 for $n = 3$ and $z = 1$.

\begin{restatable}{theorem}{threetwoone}  \label{lem:MC:LB-2-deterministic}
Let $\outliers \le \fl{\frac{n-1}{2}}$. Then, there is no deterministic strategyproof mechanism with an approximation guarantee better than $2$ for the egalitarian objective.  Also, for $n = 3$ and $z = 1$, there is no randomized mechanism which is strategyproof in expectation and achieves an expected approximation guarantee better than $2$ for the egalitarian objective. 
\end{restatable}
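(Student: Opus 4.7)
My plan is to invoke Moulin's median characterization (Theorem~\ref{thm:characterization}) and extract constraints on the phantom points from two ``optimal-cost-zero'' profiles. First, consider $\vec{x}^+$ with $n-z$ agents at $0$ and $z$ agents at $1$, together with its mirror $\vec{x}^-$. Both satisfy $\OPTMC^{\star} = 0$ (discard the smaller cluster), so any mechanism with a bounded approximation must return $0$ on $\vec{x}^+$ and $1$ on $\vec{x}^-$. Feeding these two profiles into the median representation forces at least $z+1$ phantoms to be $\le 0$ and at least $z+1$ phantoms to be $\ge 1$.

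Next, since there are only $n+1$ phantoms, I can pick $x \in (0, 1/2)$ strictly smaller than every phantom in $(0, 1)$ and consider the witness profile $\vec{x}_{*} = (\underbrace{0, \ldots, 0}_{n-z-1}, x, \underbrace{1, \ldots, 1}_{z})$. Here $\OPTMC^{\star}(\vec{x}_{*}, z) = x/2$, achieved by excluding the $z$ agents at $1$. Counting the sorted list of agents plus phantoms, the median (position $n+1$) is either $x$ or some value $\le 0$, since no phantom lies in $(0, x)$ and the number of values $\le 0$ is at least $n$. A short case analysis shows that in both cases, after discarding the $z$ outliers (necessarily the agents at $1$), some non-outlier remains at distance at least $x$ from the mechanism's output, giving $\OPTMC(\mech(\vec{x}_{*}, z)) \ge x = 2 \cdot \OPTMC^{\star}(\vec{x}_{*}, z)$.

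\textbf{Randomized lower bound.} For $n = 3, z = 1$, I would analyze the one-parameter family of profiles $P_x = (0, x, 1)$ with $x \in (0, 1/2]$ and write $D_x$ for the distribution $\mech(P_x, 1)$. Two single-agent deviations tightly constrain $D_x$: agent $1$ reporting $x$ yields the profile $(x, x, 1)$ with $\OPTMC^{\star} = 0$, forcing deterministic output $x$, and symmetrically for agent $3$ on $(0, x, x)$. Strategyproofness-in-expectation then gives $\EX_{D_x}[|y|] \le x$ and $\EX_{D_x}[|y-1|] \le 1-x$. Summing and using the pointwise inequality $|y|+|y-1| \ge 1$ (with equality only on $[0,1]$) pins down $y \in [0, 1]$ almost surely and $\EX_{D_x}[y] = x$.

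The cost-with-outlier function on $P_x$ is $c_x(y) = \median(y, |y-x|, 1-y)$, which is piecewise linear on $[0,1]$ with vertices $(0, x), (x/2, x/2), (1/2, 1/2), ((x+1)/2, (1-x)/2), (1, 1-x)$. The key step is a convex-envelope computation showing that the lower convex envelope $\bar c_x$ satisfies $\bar c_x(x) = x(1-x)$. Jensen's inequality then yields $\EX_{D_x}[c_x(y)] \ge \bar c_x(\EX_{D_x}[y]) = x(1-x)$. Combined with $\OPTMC^{\star}(P_x, 1) = x/2$, the expected approximation ratio on $P_x$ is at least $2(1-x)$; letting $x \to 0^+$ forces any expected approximation guarantee to be at least $2$. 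The main technical step I expect is the convex-envelope identification---one must verify that the kinks at $(x/2, x/2)$ and $((x+1)/2, (1-x)/2)$ both lie on the envelope and that $y = x$ falls on the segment joining them---while the deviation analyses, the $|y|+|y-1|\ge 1$ sandwich, and Jensen's inequality are routine.
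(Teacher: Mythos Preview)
Your proposal is correct on both counts, and both arguments differ from the paper's.

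For the deterministic bound, the paper argues by contradiction via a chain of single–agent deviations: starting from the three–cluster profile $(z,\,n-2z,\,z)$ at $0,\tfrac12,1$, it uses Corollary~\ref{lem:SP-property-M} repeatedly while migrating the right cluster into the middle, ending at a cost–zero profile on which the output is still trapped strictly left of $\tfrac12$. Your route through Theorem~\ref{thm:characterization} is more structural: the two cost–zero profiles $\vec{x}^{\pm}$ pin down at least $z+1$ phantoms on each side of $[0,1]$, and a single witness profile then forces the median to be either $x$ or $\le 0$, yielding ratio $\ge 2$ in one shot. This avoids the iterative deviation bookkeeping; the paper's argument, in exchange, never needs to invoke the phantom characterization.

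For the randomized bound, the paper works with the fixed profile $(0,\tfrac12,1)$, uses one deviation to get $\EX_{\mech(\vec p)}[|y|]\ge\tfrac12$, and then considers a deviation of the leftmost agent to $\beta$ close to $\tfrac12$; the heart of the proof is a pointwise linear lower bound $\OPTMC(y,\vec{x}',1)\ge \frac{\beta}{1-\beta}|y|+\frac{1/4-\beta}{1-\beta}$ established by case analysis. Your approach instead looks at the whole family $(0,x,1)$, uses the two cost–zero deviations to nail down $\EX[y]=x$ and $\mathrm{supp}(D_x)\subseteq[0,1]$ exactly, and then applies Jensen to the lower convex envelope of $c_x$. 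The envelope computation you sketch is right: the polyline $A\text{--}B\text{--}D\text{--}E$ has slopes $-1<1-2x<1$, so it is convex and sits below the tent $B\text{--}C\text{--}D$, and the chord $BD$ at $y=x$ indeed evaluates to $x(1-x)$. Your argument is arguably cleaner conceptually (one inequality from Jensen replaces the five–case Claim in the paper), while the paper's version has the advantage of making the role of strategyproofness more explicit through the $\beta$–deviation.
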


We prove each of the two statements in Theorem \ref{lem:MC:LB-2-deterministic} separately in Lemma \ref{lemma:lb-mc-det} and Lemma \ref{lemma:lb-mc-rand}.

\begin{lemma}\label{lemma:lb-mc-det}
Let $\outliers \le \fl{\frac{n-1}{2}}$. Then, there is no deterministic strategyproof mechanism with an approximation guarantee better than $2$ for the egalitarian objective. 
\end{lemma}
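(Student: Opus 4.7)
The plan is to exhibit one profile on which every deterministic strategyproof mechanism is forced, via strategyproofness alone, to place the facility at a point whose egalitarian cost is exactly twice the optimum. The central profile is $\vec{x} = (0^z, 1^{n-2z}, 2^z)$, which is well-defined since $z \le \fl{(n-1)/2}$ implies $n - 2z \ge 1$. An optimal solution discards either all $z$ agents at $2$ (placing the facility at $1/2$) or all $z$ agents at $0$ (placing it at $3/2$); a short case check on the position $y$ of the facility shows that $\OPTMC^{\star}(\vec{x}, z) = 1/2$, with precisely these two optima.

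The key idea is to pin down $\mech(\vec{x}, z)$ using two ``collapsed'' profiles $\vec{x}^R = (0^z, 1^{n-z})$ and $\vec{x}^L = (1^{n-z}, 2^z)$. In each, placing the facility at $1$ and discarding the $z$ outer agents yields $\OPTMC^{\star} = 0$, and one verifies that this is the \emph{unique} optimum (using $z < n-z$, so no other location has $n-z$ agents stacked on it). Hence any strategyproof mechanism with a bounded approximation must output $1$ on both. To transport this to $\vec{x}$, I define the chain $\vec{x}^R_k := (0^z, 1^{n-2z+k}, 2^{z-k})$ for $k = 0, \dots, z$, so $\vec{x}^R_0 = \vec{x}$ and $\vec{x}^R_z = \vec{x}^R$. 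Consecutive profiles differ in a single coordinate (one agent moves from $2$ to $1$), and applying strategyproofness to the agent whose true location is $2$ considering the misreport $1$ gives
\[
|\mech(\vec{x}^R_k, z) - 2| \le |\mech(\vec{x}^R_{k+1}, z) - 2|.
\]
Chaining yields $|\mech(\vec{x}, z) - 2| \le |\mech(\vec{x}^R, z) - 2| = 1$, i.e.\ $\mech(\vec{x}, z) \ge 1$. The mirror chain $\vec{x}^L_k := (0^{z-k}, 1^{n-2z+k}, 2^z)$, applied to an agent with true location $0$ deviating to $1$, analogously gives $|\mech(\vec{x}, z)| \le 1$. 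Combining, $\mech(\vec{x}, z) = 1$; a direct computation gives $\OPTMC(1, \vec{x}, z) = 1$, hence the ratio is $1/(1/2) = 2$.

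The main subtlety I anticipate is choosing the right agent to invoke strategyproofness on at each step of the chain: one must always ``push'' an outer-cluster agent inward, because only then does the strategyproofness inequality point in the direction that survives concatenation along the whole chain. A single chain yields only a one-sided bound on $\mech(\vec{x}, z)$, so both the right and the left chain are genuinely needed to squeeze the output to $1$; this mirrors the two-sided argument that already makes the $n = 3$, $z = 1$ case transparent (profile $(0,1,2)$, with one-step collapses $(0,1,1)$ and $(1,1,2)$, each forcing the output to $1$). Note that Moulin's characterization in Theorem~\ref{thm:characterization} is not actually invoked; only Corollary~\ref{lem:SP-property-M}-style single-agent strategyproofness is used.
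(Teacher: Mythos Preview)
Your proof is correct and follows essentially the same approach as the paper: the same three-cluster central profile (yours is the paper's scaled by $2$) together with its two-cluster collapses, linked by single-agent strategyproofness moves. The only difference is the direction of the argument---the paper assumes a $(2-\varepsilon)$-approximation on the central profile, chains outward (via Corollary~\ref{lem:SP-property-M}) to one collapsed profile, and derives a contradiction there; you instead start from both collapsed profiles (where bounded approximation forces output $1$), chain inward using the raw strategyproofness inequality, and pin the central output to exactly $1$, reading off the ratio $2$ directly.
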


\begin{proof}
The proof is inspired by the lower bound of \cite{procaccia2013}.
Let $\outliers \le \fl{\frac{n-1}{2}}$ and $\varepsilon>0$. 
Towards a contradiction, assume that there exists a strategyproof mechanism $\mathcal{M}$ that is $(2 - \varepsilon)$-approximate.
We again consider a sequence of profiles, which are depicted in Figure \ref{fig:MC-LB-2}. 

The profile in Figure \ref{fig:MC-LB-2-a} is such that there are $z$ agents with a location of 0 (leftmost cluster), $n- 2z$ agents with a location of $\frac{1}{2}$, and $\outliers$ agents with a location of 1 (rightmost cluster).
Note that $n - 2 \outliers \ge 1$ as $\outliers \le \fl{\frac{n-1}{2}}$. 
Also note that all agents from either the leftmost or rightmost cluster can be disregarded in the objective. 
Therefore, there are two optimal locations $y^{\star}$, namely at $\frac{1}{4}$ and $\frac{3}{4}$, with an egalitarian social cost of $\frac{1}{4}$. 
As $\mathcal{M}$ is $(2 - \varepsilon)$-approximate, it must be that $\alg$ is located in $[\frac{\varepsilon}{4}, \frac{1}{2} - \frac{\varepsilon}{4}]$ or in $[\frac{1}{2} + \frac{\varepsilon}{4}, 1 - \frac{\varepsilon}{4}]$, depicted by the curly brackets in Figure \ref{fig:MC-LB-2-a}. 
Assume w.l.o.g. that $\mathcal{M}$ chooses a location $\alg$ in $[\frac{\varepsilon}{4}, \frac{1}{2} - \frac{\varepsilon}{4}]$. 

Now, consider the profile in Figure \ref{fig:MC-LB-2-b} with $\delta =1$: the location of one agent from the right cluster moved from 1 to $\frac{1}{2}$. 
Note that as $\mathcal{M}$ is strategyproof, $\alg$ remains unchanged by Corollary \ref{lem:SP-property-M}. 
The same reasoning holds when considering $\delta = 2, 3, \ldots, \outliers -1$ consecutively. 

Finally, consider the profile in Figure \ref{fig:MC-LB-2-c} in which all the locations of the $\outliers$ agents from the right cluster moved to $\frac{1}{2}$. 
Note that $\alg$ remains unchanged as $\mathcal{M}$ is strategyproof (Corollary \ref{lem:SP-property-M}). 
However, as the left cluster now contains the locations of $\outliers$ agents, all agents from this cluster can be disregarded in the objective. 
Therefore, the only optimal location $y^{\star}$ is the location of the cluster at $\frac{1}{2}$ with an egalitarian social cost of 0, as depicted in Figure \ref{fig:MC-LB-2-c}. 
This contradicts that $\mathcal{M}$ is $(2 - \varepsilon)$-approximate as $\alg \in [\frac{\varepsilon}{4}, \frac{1}{2} - \frac{\varepsilon}{4}]$, so $\alg$ has a positive egalitarian social cost, concluding the proof. 
\end{proof}

\begin{figure}[ht]
\centering
\begin{subfigure}[b]{0.37\linewidth}
    \centering
    \begin{tikzpicture}
        \draw[thick] (-0.5,0) -- (4.5,0);
        \filldraw[black] (0,0) circle (2pt) node[below, yshift=-0.1cm]{ $\outliers$};
        \filldraw[black] (4,0) circle (2pt) node[below, yshift=-0.1cm]{ $\outliers$}; 
        \filldraw[black] (2,0) circle (2pt) node[below, yshift=-0.1cm]{ $n - 2 \outliers$}; 
        \draw[thick] (1,0) -- (1,-0.5) node[below, xshift=0.1cm]{$y^{\star}$};
        \draw[thick] (3,0) -- (3,-0.5) node[below, xshift=0.1cm]{$y^{\star}$};
        \draw [thick, decorate,decoration={brace,amplitude=10pt},xshift=0.4pt,yshift=0.8pt](0.2,0.1) -- (1.8,0.1) node[black,midway,yshift=0.6cm] {$\alg$};
        \draw [thick, decorate,decoration={brace,amplitude=10pt},xshift=0.4pt,yshift=0.8pt](2.2,0.1) -- (3.8,0.1) node[black,midway,yshift=0.6cm] {$\alg$};
    \end{tikzpicture}
    \caption{ }
    \label{fig:MC-LB-2-a}
\end{subfigure}
\begin{subfigure}[b]{0.37\linewidth}
    \centering
    \begin{tikzpicture}
        \draw[thick] (-0.5,0) -- (4.5,0);
        \filldraw[black] (0,0) circle (2pt) node[below, yshift=-0.1cm]{$\outliers$};
        \filldraw[black] (4,0) circle (2pt) node[below, yshift=-0.1cm]{$\outliers - \delta$}; 
        \filldraw[black] (2,0) circle (2pt) node[below, yshift=-0.1cm]{$n - 2 \outliers + \delta$}; 
        \draw[thick] (1,0) -- (1,-0.5) node[below, xshift=0.1cm]{$y^{\star}$};
        \draw[thick] (3,0) -- (3,-0.5) node[below, xshift=0.1cm]{$y^{\star}$};
        \draw [thick, decorate,decoration={brace,amplitude=10pt},xshift=0.4pt,yshift=0.8pt](0.2,0.1) -- (1.8,0.1) node[black,midway,yshift=0.6cm] {$\alg$};
         \draw[->, thick, bend right=40] (3.9,0.17) to (2.1,0.17);
    \end{tikzpicture}
    \caption{$\delta \in \{ 1, 2, \ldots, \outliers -1\}$ }
    \label{fig:MC-LB-2-b}
\end{subfigure}
\begin{subfigure}[b]{0.24\linewidth}
    \centering
    \begin{tikzpicture}
        \draw[thick] (-0.5,0) -- (2.5,0);
        \filldraw[black] (0,0) circle (2pt) node[below, yshift=-0.1cm]{$\outliers$};
        \filldraw[black] (2,0) circle (2pt) node[below, yshift=-0.1cm]{$n - \outliers$}; 
        \draw[thick] (2,0) -- (2,0.5) node[above, xshift=0.1cm]{$y^{\star}$};
        \draw [thick, decorate,decoration={brace,amplitude=10pt},xshift=0.4pt,yshift=0.8pt](0.2,0.1) -- (1.8,0.1) node[black,midway,yshift=0.6cm] {$\alg$};
    \end{tikzpicture}
    \vspace{0.5cm}
    \caption{ }
    \label{fig:MC-LB-2-c}
\end{subfigure}
\caption{Profiles used in the proof of Lemma \ref{lemma:lb-mc-det}. }
\label{fig:MC-LB-2}
\end{figure}
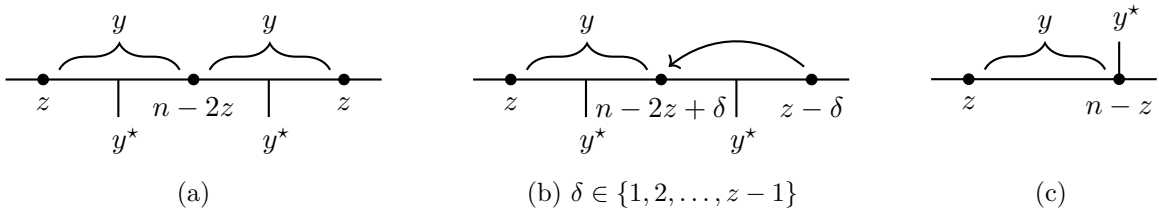

\begin{lemma}\label{lemma:lb-mc-rand}
    For $n = 3$ and $z = 1$, there is no randomized mechanism which is strategyproof in expectation and achieves an expected approximation guarantee better than $2$ for the egalitarian objective.  
\end{lemma}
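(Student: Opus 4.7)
The plan is to focus on the critical profile $\vec{x}^A = (0, 1/2, 1)$ with $z = 1$, for which $\OPTMC^{\star}(\vec{x}^A, 1) = 1/4$ (attained at both $y = 1/4$ and $y = 3/4$), and to show that any randomized mechanism $\mech$ that is strategyproof in expectation and attains any bounded expected approximation must output the Dirac mass $\delta_{1/2}$ on this profile. Since the distances from $y = 1/2$ to the three reported locations are $\{1/2, 0, 1/2\}$, we have $\OPTMC(1/2, \vec{x}^A, 1) = 1/2$, and the expected approximation ratio would then be exactly $2$ on $\vec{x}^A$, proving the lemma.

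The key structural tool is the classical characterization of randomized strategyproof-in-expectation mechanisms for one-dimensional single-peaked preferences (e.g., Ehlers, Peters, and Storcken, 2002), which asserts that every such mechanism can be represented as a probability distribution over deterministic strategyproof mechanisms, each of the generalized-median form described in Theorem~\ref{thm:characterization}. This characterization is unaffected by the outlier setting, because the strategic primitive of each agent remains the single-peaked cost $|y - p_i|$. The first step is to pin down the mechanism on the two auxiliary profiles $\vec{x}^{(a)} = (0, 0, 1)$ and $\vec{x}^{(b)} = (0, 1, 1)$: direct computation (using that the second largest of three numbers two of which are equal is always the repeated value) yields $\OPTMC(y, \vec{x}^{(a)}, 1) = |y|$ and $\OPTMC(y, \vec{x}^{(b)}, 1) = |y - 1|$, so $\OPTMC^{\star} = 0$ in both cases, attained uniquely at $y = 0$ and $y = 1$, respectively. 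Bounded expected approximation forces expected cost exactly $0$, so every deterministic mechanism appearing in the mixture must deterministically output the unique optimum: $0$ on $\vec{x}^{(a)}$ and $1$ on $\vec{x}^{(b)}$.

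Translating these output constraints through Theorem~\ref{thm:characterization}, the first forces at least two phantoms to be $\le 0$, i.e., $\alpha_2 \le 0$, and the second forces at least two phantoms to be $\ge 1$, i.e., $\alpha_3 \ge 1$. For any such phantom configuration, the seven values on $\vec{x}^A$ sort as $\alpha_1, \alpha_2, 0, 1/2, 1, \alpha_3, \alpha_4$, whose median is exactly $1/2$; hence every deterministic mechanism in the mixture outputs $1/2$ on $\vec{x}^A$, and therefore $\mech(\vec{x}^A, 1) = \delta_{1/2}$. The main obstacle I foresee is the clean invocation of the randomized characterization; if one prefers to stay within the paper's self-contained proof machinery, an alternative route is to argue directly by first deriving, from the strategyproofness-in-expectation constraints against the zero-cost profiles $\vec{x}^{(a)}$, $\vec{x}^{(b)}$, $(1/2, 1/2, 1)$ and $(0, 1/2, 1/2)$, that $\mu_A := \mech(\vec{x}^A, 1)$ is supported on $[0,1]$ with $\EX[y] = 1/2$, and then using limits of perturbed profiles such as $(1/2 - \eta, 1/2, 1)$ (where $\OPTMC^{\star} = \eta/2 \to 0$ forces the mechanism to concentrate at $1/2 - \eta/2 \to 1/2$) together with strategyproofness in expectation to push the remaining slack in $\EX_{\mu_A}[|y - 1/2|]$ all the way to zero; handling this limit rigorously at the level of distributions rather than point outputs is the delicate step.
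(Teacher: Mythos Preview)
Your main route rests on decomposing any strategyproof-in-expectation mechanism as a probability mixture over deterministic phantom-median mechanisms, citing Ehlers--Peters--Storcken (2002). This is the gap: that paper characterizes \emph{stochastic-dominance} strategyproof rules (equivalently, rules that are strategyproof in expectation for \emph{every} single-peaked utility representation), which is strictly stronger than strategyproofness in expectation for the fixed Euclidean cost $|y-p_i|$. The two notions do not coincide on the line: for instance, the mechanism that outputs $\median(\vec{x})+U$ with $U$ uniform on $[-1,1]$ is strategyproof in expectation for Euclidean costs (the induced preference over the median location is still single-peaked at $p_i$), yet it is \emph{not} a mixture of phantom-median mechanisms --- already for $n=1$, no fixed mixture of clip mechanisms can produce the output law $U[x_1-1,x_1+1]$ for every $x_1$ simultaneously. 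Of course that particular mechanism has unbounded approximation on profiles with $\OPTMC^{\star}=0$, so it is not itself a counterexample to the lemma; but it shows that the decomposition you invoke does not follow from the cited result, and your argument supplies no substitute. Your alternative route correctly extracts $\EX_{\mu_A}[y]=\tfrac12$ and support of $\mu_A$ in $[0,1]$ from the zero-optimal profiles, but these constraints are far from forcing $\mu_A=\delta_{1/2}$ (they are satisfied, e.g., by $\tfrac12\delta_{1/4}+\tfrac12\delta_{3/4}$, which would give ratio $1$ on $\vec{x}^A$), and the limiting argument you flag as ``delicate'' is exactly where all the content would have to go.

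The paper's proof avoids the decomposition entirely and does \emph{not} attempt to show $\mu_A=\delta_{1/2}$. From the deviation of agent $r$ to $\tfrac12$ it obtains $\EX_{y\sim\mech(\vec{p},1)}[|y-1|]\le\tfrac12$, and hence via the triangle inequality $\EX_{y\sim\mech(\vec{p},1)}[|y|]\ge\tfrac12$. It then passes to the perturbed profile $\vec{x}'=(\beta,\tfrac12,1)$ with $\beta$ just below $\tfrac12$: strategyproofness for agent $\ell$ transfers the bound $\EX[|y|]\ge\tfrac12$ to $\vec{x}'$, and a direct case analysis yields the pointwise inequality $\OPTMC(y,\vec{x}',1)\ge\frac{\beta}{1-\beta}\,|y|+\frac{1/4-\beta}{1-\beta}$ for all $y\in\mathbb{R}$. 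Combining these gives expected cost at least $\frac{1/4-\beta/2}{1-\beta}=\frac{\OPTMC^{\star}(\vec{x}',1)}{1-\beta}$, so the ratio on $\vec{x}'$ is at least $\tfrac{1}{1-\beta}>2-\varepsilon$ once $\beta>\tfrac{1-\varepsilon}{2-\varepsilon}$.
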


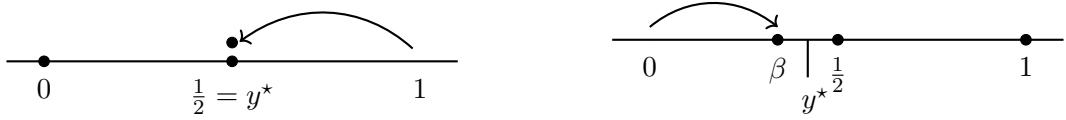
\begin{figure}
\centering
\begin{subfigure}[b]{0.48\linewidth}
    \centering
    \begin{tikzpicture}
        \draw[thick] (-1,0) -- (5,0);
        \filldraw[black] (-0.5,0) circle (2pt) node[below, yshift=-0.1cm]{$0$};
        \filldraw[black] (4.5,0) circle (0.00002pt) node[below, yshift=-0.1cm]{$1$}; 
        \filldraw[black] (2,0) circle (2pt) node[below, yshift=-0.1cm]{$\frac{1}{2}=y^{\star}$}; 
        \filldraw[black] (2,0.25) circle (2pt) node[below, yshift=0.2cm]{$ $}; 

         \draw[->, thick, bend right=40] (4.4,0.17) to (2.1,0.26);
    \end{tikzpicture}
    \caption{Deviation under $\vec{p}$ for agent $r$ to $\nicefrac{1}{2}$}
    \label{fig:MC-LB-RAND-b1}
\end{subfigure}
\begin{subfigure}[b]{0.48\linewidth}
    \centering
    \begin{tikzpicture}
        \draw[thick] (-1,0) -- (5,0);
        \filldraw[black] (-0.5,0) circle (0.002pt) node[below, yshift=-0.1cm]{$0$};
        \filldraw[black] (4.5,0) circle (2pt) node[below, yshift=-0.1cm]{$1$}; 
        \filldraw[black] (2,0) circle (2pt) node[below, yshift=-0.1cm]{$\frac{1}{2}$};
        \filldraw[black] (1.2,0) circle (2pt) node[below, yshift=-0.1cm]{$\beta$}; 

        \draw[thick] (1.6,0) -- (1.6,-0.5) node[below, xshift=0.1cm]{$y^{\star}$};
         \draw[->, thick, bend left=40] (-0.5,0.17) to (1.2,0.17);
    \end{tikzpicture}
    \caption{Deviation under $\vec{p}$ for agent $\ell$ to $\beta$ ``close'' to $\nicefrac{1}{2}$.}
    \label{fig:MC-LB-RAND-b2}
\end{subfigure}
\caption{Profiles used in the proof of Lemma \ref{lemma:lb-mc-rand}.}
\label{fig:MC-LB-RAND}
\end{figure}

\begin{proof}
Towards a contradiction, suppose that $\mathcal{M}$ is a randomized mechanism that is strategyproof in expectation and $(2-\varepsilon)$-approximate for $\varepsilon \in (0,1]$. Consider the profile with $N = \{\ell, m, r\}$. Suppose the preferred locations are $p_{\ell} = 0$, $p_{m} = \frac{1}{2}$ and $p_r = 1$.

First, suppose that the rightmost agent $r$ misreports their location as $\frac{1}{2}$, while the other two agents report their preferred locations truthfully (see Figure~\ref{fig:MC-LB-RAND-b1}). 
Observe that $\OPTMC^{\star}((\frac{1}{2}, \vec{p}_{-r}),1) = 0$, as under the profile $(\frac{1}{2}, \vec{p}_{-r})$, agent $\ell$ is the outlier. Thus, the optimal location is to place the facility at $\frac{1}{2}$, the declared location of all the other agents. 
Moreover, it is crucial to observe that $\mech$ must follow suit, i.e., it must be that $\mech((\frac{1}{2}, \vec{p}_{-r}),1)=\frac{1}{2}$ since, by the assumed approximation guarantee of $\mech$, it must be that $\EX[\mech((\frac{1}{2}, \vec{p}_{-r}),1)]\leq (2-\varepsilon) \cdot 0$. Thus, we obtain
\begin{equation}\label{eq:deviation-of-r-lb-mb}
        \frac{1}{2}= \EX_{y \sim \mech((\frac{1}{2}, \vec{p}_{-r}), 1)}\left[|y- 1| \right]= \EX_{y \sim \mech((\frac{1}{2}, \vec{p}_{-r}), 1)}\left[|y- p_{r}| \right] \geq \EX_{y \sim \mech(\vec{p}, 1)}\left[|y- p_{r}| \right], 
\end{equation}
with the inequality following from the fact that $\mech$ is strategyproof in expectation.
     
Using this observation, we now return to the profile $\vec{p}$ and obtain
\begin{align*}
         \EX_{y \sim \mech(\vec{p}, 1)}[|y|]&= \EX_{y \sim \mech(\vec{p}, 1)}[|y|] + \EX_{y \sim \mech(\vec{p}, 1)}[|y-1|] - \EX_{y \sim \mech(\vec{p}, 1)}[|y-1|]\\
         &=\EX_{y \sim \mech(\vec{p}, 1)}[|y|+ |y-1|] - \EX_{y \sim \mech(\vec{p}, 1)}[|y-1|]\\
         &\geq 1 - \EX_{y \sim \mech(\vec{p}, 1)}[|y-1|]\\
         &\geq 1 - \frac{1}{2}=\frac{1}{2}. \label{eq:cost-ell-lb}\numberthis
\end{align*}
In the above derivation, the second equality follows by the linearity of expectation. Then, the first inequality is due to the triangle inequality. Finally, the second inequality follows from \eqref{eq:deviation-of-r-lb-mb}.
     
Consider now a unilateral deviation of $\ell$ from $p_{\ell}=0$ to $\beta  \in \left(\frac{1-\varepsilon}{2-\varepsilon}, \frac{1}{2}\right)$ (see Figure \ref{fig:MC-LB-RAND-b2}). 
We denote the resulting profile by $\vec{x}':=(\beta ,\vec{p}_{-\ell})$. The following technical claim will be useful for bounding the expected cost of the mechanism.

\begin{claim}\label{claim:fancyclaim}
For every $y \in \mathbb{R}$, $\OPTMC(y, \vec{x}', 1) \geq \frac{\beta }{1-\beta} \cdot |y| + \frac{1/4-\beta}{1-\beta}.$
\end{claim}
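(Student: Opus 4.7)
The plan is to pin down $\OPTMC(y, \vec{x}', 1)$ exactly as a piecewise-linear function of $y$ and then verify the claimed affine lower bound piece by piece. Since $n = 3$ and $z = 1$, there are only three candidate non-outlier sets $S$ of size two, namely $\{\beta, 1/2\}$, $\{\beta, 1\}$, and $\{1/2, 1\}$. Writing $f_1, f_2, f_3$ for the corresponding maps $y \mapsto \max_{i \in S}|y - x_i|$ (each a V-shape whose apex is the midpoint of the pair), we have $\OPTMC(y, \vec{x}', 1) = \min(f_1, f_2, f_3)$.

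The first step is to show that the widest pair $\{\beta, 1\}$ is never binding. When $y \geq (\beta+1/2)/2$ one has $|y-\beta| \geq |y-1/2|$, so monotonicity of $\max$ yields $f_2(y) \geq f_3(y)$; when $y < (\beta+1/2)/2 < 3/4$ one has $|y-1| \geq |y-1/2|$, so $f_2(y) \geq f_1(y)$. Hence $\OPTMC(y, \vec{x}', 1) = \min(f_1(y), f_3(y))$, which, using $0 < \beta < 1/2$, is piecewise linear with breakpoints $(\beta+1/2)/2$, $(1+\beta)/2$, and $3/4$, equal respectively to $1/2 - y$, $y - \beta$, $1 - y$, and $y - 1/2$ on the four resulting intervals.

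Next I would highlight the geometric content of the bound: a direct computation shows that the line $g(y) := \frac{\beta}{1-\beta}|y| + \frac{1/4 - \beta}{1-\beta}$ passes exactly through the two valley points $((\beta+1/2)/2,\ 1/4 - \beta/2)$ and $(3/4,\ 1/4)$ of the graph of $\min(f_1, f_3)$ on $y \geq 0$. The inequality $\OPTMC \geq g$ should then hold everywhere, since between the two valleys $\OPTMC$ rises up to height $(1-\beta)/2$ while $g$ is merely the chord, and outside them $\OPTMC$ has slope $\pm 1$ which is steeper than $g$'s slope $\beta/(1-\beta) < 1$. To turn this into a proof I would clear the positive denominator $1-\beta$ and verify $\OPTMC \geq g$ on each piece for $y \geq 0$; in every case the resulting inequality collapses, after cancellation, to the defining endpoint condition of that piece, so equality holds precisely at the two valley points. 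For $y < 0$ the formula $\OPTMC = 1/2 - y$ still applies but $|y| = -y$ flips a sign in $g$, and the reduced inequality becomes $1/4 + \beta/2 + y(2\beta - 1) \geq 0$, which is immediate since both $y$ and $2\beta - 1$ are negative.

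The only mildly delicate step is the middle piece $[(\beta+1/2)/2, (1+\beta)/2]$, where $\OPTMC = y - \beta$: there the inequality rearranges to $y(1 - 2\beta) \geq (1/2-\beta)(1/2+\beta)$, and one must divide by $1 - 2\beta > 0$ and then use $(1/2-\beta)/(1-2\beta) = 1/2$ to recover exactly the piece's defining bound $y \geq (\beta+1/2)/2$. Beyond this purely algebraic point, the proof is routine case analysis once the piecewise description of $\OPTMC$ is in hand.
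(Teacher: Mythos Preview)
Your proposal is correct and follows essentially the same route as the paper: both reduce $\OPTMC(y,\vec{x}',1)$ to a four-piece piecewise-linear function with breakpoints $\tfrac{\beta+1/2}{2}$, $\tfrac{1+\beta}{2}$, $\tfrac{3}{4}$, and then verify the affine lower bound on each piece together with the negative half-line. Your explicit elimination of the wide pair $\{\beta,1\}$ and your observation that $g$ is exactly the chord through the two valley points $\bigl(\tfrac{\beta+1/2}{2},\,\tfrac14-\tfrac\beta2\bigr)$ and $\bigl(\tfrac34,\tfrac14\bigr)$ are nice touches not made explicit in the paper, but the case-by-case verification is otherwise identical.
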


\begin{claimproof}
For notation simplicity, let us denote $\lambda:=\frac{\beta }{1-\beta} \in (0,1)$, $\mu:=\frac{\frac{1}{4}- \beta}{1-\beta}$, and $G(y):=\OPTMC(y, \vec{x}', 1)$.
Observe that to prove the claim, it suffices to show that 
\begin{equation}\label{eq:fancyclaimequality}
            G(y) \geq  \lambda |y| + \mu
\end{equation}
holds for every $y \in \mathbb{R}$. Note that for every $y$, either agent $\ell$ at location $\beta$ or agent $r$ at location $1$ is the outlier. In fact, it is easy to verify that the threshold at which the outlier changes is exactly $\frac{1+\beta}{2}$. Hence we can write $G(y)$ as
        \begin{equation*}
           G(y)= \begin{cases}
            \max\left\{ \left| y - \beta \right|,\; \left| y - \frac{1}{2} \right| \right\}, & \text{if } y < \dfrac{1 + \beta}{2}, \\[6pt]
            \max\left\{ \left| y - \frac{1}{2} \right|,\; \left| y - 1 \right| \right\}, & \text{if } y \ge \dfrac{1 + \beta}{2}.
        \end{cases}
        \end{equation*}
We distinguish different cases based on the value of $y$ and show that \eqref{eq:fancyclaimequality} holds in each case.
\medskip
        
\noindent
\underline{Case 1:} $y < \frac{1+ \beta}{2}$. In this case $G(y)= \max\{|y-\beta|, |y- \frac{1}{2}|\}$. We break the analysis into three subcases:

\textit{Case 1a:} $y < 0$. We have
\begin{equation*}
G(y)=\frac{1}{2}-y > -\lambda y + \frac{1}{2}= - \lambda y  + \frac{1}{2} \cdot \frac{1-\beta}{1-\beta} > -\lambda y  + \frac{\frac{1}{4}-\frac{1}{2}\beta}{1-\beta} > - \lambda y +  \frac{\frac{1}{4}-\beta}{1-\beta}= -\lambda y + \mu.
\end{equation*}
The first inequality holds by the definition of the case, while the remaining two inequalities hold since $0<\beta < \frac{1}{2}<1$. The proof of \eqref{eq:fancyclaimequality} follows for this case as $|y|=-y$.

\textit{Case 1b:} $y \in \left[0, \frac{\frac{1}{2}+\beta}{2}\right)$. We have that 
\begin{align*}
G(y)&= \max\{|y-\beta|, |y-\frac{1}{2}|\}\\
&= \frac{1}{2}-y\\
&=\lambda y + \mu +  \frac{1}{2} -(1+\lambda) \cdot y - \mu \\
& > \lambda y + \mu +  \frac{1}{2} -(1+\lambda)\cdot\frac{\frac{1}{2}+\beta}{2} - \mu \\
&= \lambda y+ \mu.
\end{align*}

\textit{Case 1c:} $y \in \left[\frac{\frac{1}{2}+\beta}{2}, \frac{1+\beta}{2}\right)$. 
Note that $\frac{\frac{1}{4}-\beta^2}{1 -\beta} -\beta = \mu$ holds.
Since $y \geq \frac{\frac{1}{2} + \beta}{2}>\beta$, we have that 
        \begin{align*}
            G(y)&= y-\beta\\
            &=\lambda y +\mu -\beta + (1-\lambda) \cdot y - \mu \\
            &\geq \lambda y +\mu -\beta + (1-\lambda) \cdot \frac{\frac{1}{2}+\beta}{2} - \mu\\
            &=\lambda y + \mu -\beta + \frac{\frac{1}{4}-\beta^2}{1 -\beta} - \mu\\
            &= \lambda y + \mu.
        \end{align*}
        
\noindent \underline{Case 2:} $y \geq \frac{1+ \beta}{2}$. In this case $G(y)= \max\{|y-\frac{1}{2}|, |y- 1| \}$. As in Case 1, we break the analysis into multiple subcases.

\textit{Case 2a:} $y \in \left[\frac{1+\beta}{2}, \frac{3}{4} \right)$. Since $y \geq \frac{1+ \beta}{2}$ and $y < 1$ we have that
        \begin{align*}
            G(y)&=1-y\\
            &= \lambda y + \mu +1 - (1+\lambda) \cdot y-\mu\\
            &\geq  \lambda y + \mu +1 - (1+\lambda)\frac{3}{4}-\mu\\
            &= \lambda y + \mu +\frac{1-\beta -1 + \frac{1}{4}-\frac{1}{4}+\beta}{1- \beta} \\
            &= \lambda y + \mu. 
        \end{align*}

\textit{Case 2b:} $y \geq \frac{3}{4}$. Since $y > \frac{1}{2}$ holds, we have that
        \begin{align*}
            G(y)&=y-\frac{1}{2}\\
            &= \lambda y + \mu  -\frac{1}{2} + (1-\lambda)y - \mu \\
            &\geq \lambda y + \mu  -\frac{1}{2} + (1-\lambda)\frac{3}{4} - \mu\\
            &= \lambda y + \mu +\frac{-\frac{1}{2} + \frac{1}{2} \beta + \frac{1}{2} - \beta + \frac
            {1}{4}- \frac
            {1}{2}\beta - \frac
            {1}{4} + \beta}{1 - \beta} \\
            &= \lambda y + \mu.
        \end{align*}
     We have therefore shown \eqref{eq:fancyclaimequality} for every $y \in \mathbb{R}$. This concludes the proof of the claim.
     \end{claimproof}

\noindent We now lower bound the cost of $\mech(\vec{x}', 1)$ as follows:
     \begin{align*}
         \EX_{y \sim \mech(\vec{x}', 1)}\left[\OPTMC(y , \vec{x}', 1)\right] &\geq \EX_{y \sim \mech(\vec{x}', 1)}\left[\frac{\beta}{1-\beta}\cdot |y| + \frac{\frac{1}{4}-\beta}{1-\beta}\right]\\
         &=\frac{\beta}{1-\beta}\cdot \EX_{y \sim \mech(\vec{x}', 1)}\left[ |y| \right]  + \frac{\frac{1}{4}-\beta}{1-\beta}\\
         &\geq \frac{\beta}{1-\beta}\cdot \frac{1}{2} + \frac{\frac{1}{4}-\beta}{1-\beta}\\
         &=\frac{\frac{1}{4}-\frac{\beta}{2}}{1-\beta}
        \numberthis \label{eq:mc-lb-in-exp}.
     \end{align*}
Here the first inequality is due to Claim \ref{claim:fancyclaim} and the subsequent equality follows from the linearity of expectation. Then, the second inequality follows from the strategyproofness of $\mech$ since, because $p_{\ell}=0$ and by \eqref{eq:cost-ell-lb}, we have that $\EX_{y \sim \mech(\vec{x}', 1)}\left[ |y| \right] = \EX_{y \sim \mech(\vec{x}', 1)}\left[ |y-p_{\ell}| \right]\geq \EX_{y \sim \mech(\vec{p}, 1)}\left[ |y-p_{\ell}|\right] =  \EX_{y \sim \mech(\vec{p}, 1)}\left[ |y|\right] \ge \frac{1}{2}$.

We conclude the proof by observing that the optimal location to place the facility under profile $\vec{x}'= (\beta, \frac{1}{2}, 1)$ is the midpoint between $\beta$ and $\frac{1}{2}$, as agent $r$ will be considered an outlier  since, by construction, $\beta \in(0,\frac{1}{2})$. Therefore, using the fact that $\OPTMC^{\star}(\vec{x}',1)= \frac{\frac{1}{2}-\beta}{2}=\frac{1}{4}-\frac{\beta}{2}$, we expand \eqref{eq:mc-lb-in-exp} to obtain 
    
\begin{equation*}
         \EX_{y \sim \mech(\vec{x}', 1)}\left[\OPTMC(y , \vec{x}', 1)\right] \geq \frac{\OPTMC^{\star}(\vec{x}', 1)}{(1-\beta)} > \frac{\OPTMC^{\star}(\vec{x}', 1)}{(1-\frac{1-\varepsilon}{2-\varepsilon})} = (2-\varepsilon) \cdot \OPTMC^{\star}(\vec{x}', 1),
\end{equation*}
which contradicts the assumed approximation guarantee of $\mech$. The lemma follows.    
\end{proof}

Note that the randomized lower bound is in contrast with the setting without outliers. Without outliers, \citet{procaccia2013} give a mechanism that is strategyproof in expectation and achieves an improved expected approximation guarantee of $\frac{3}{2}$.
Our randomized lower bound in Theorem~\ref{lem:MC:LB-2-deterministic} implies that we cannot achieve such an improvement in the setting with outliers.

\section{Minimizing the Utilitarian Objective with Outliers} \label{sec:social-cost}

An optimal solution to the non-strategic problem for the utilitarian objective has a similar structure as the one described for the egalitarian objective: if there are $z$ outliers, an optimal solution disregards $\outliers_{\ell}$ and $\outliers_{r} = \outliers - \outliers_{\ell}$ of the leftmost (smallest) and rightmost (largest) locations of $\vec{x}$, respectively, with $\outliers_{\ell} \in \{0, 1, \ldots, \outliers\}$.
We can therefore identify a set of locations that contains at least one optimal solution for any number of $z$ outliers with $1 \le \outliers \le n-2$, which will turn out to be useful when deriving the approximation guarantees of our mechanisms.

\begin{restatable}{lemma}{fourone} \label{fact:optRange}
Consider an arbitrary profile $\vec{x}$ and $z$ outliers with $1 \le \outliers \le n-2$. Consider the set of locations
\begin{equation} \label{eq:def-O}
O := \Big \{ x_{\sigma(\cl{\frac{n-\outliers +1}{2}})}, x_{\sigma(\cl{\frac{n-\outliers}{2}} +1)}, \dots, x_{\sigma(\cl{\frac{n-\outliers}{2}}) + \outliers}  \Big \}.
\end{equation}
Then, there exists a location $y^{\star}:= y^{\star}(\vec{x}, z) \in O$ which minimizes the utilitarian objective, and $|O| = \outliers$ if $n-\outliers$ is even and $|O| = \outliers +1$ otherwise.
Furthermore, assuming the optimal set of non-outliers $S^{\star}:=S^{\star}(\vec{x},z)$ is with respect to $y^{\star}$, the following statements are true.
\begin{enumerate} 
    \item For every $i \in N \setminus S^{\star}$ with $x_i \le y^{\star}$ and every $j$ with $x_j< x_i$, it holds that $j \in N \setminus S^{\star}$.
    \item For every $i \in N \setminus S^{\star}$ with $x_i \ge y^{\star}$ and every $j$ with $x_j > x_i$, it holds that $j \in N \setminus S^{\star}$.
\end{enumerate}
\end{restatable}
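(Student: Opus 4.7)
The strategy is, first, to show via an exchange argument that an optimal outlier set must be a prefix-and-suffix of the sorted order (yielding (1) and (2)), which reduces the optimization to a discrete choice among $z+1$ ``windows'' of $n-z$ consecutive agents; second, to characterize the optimal facility location for each window as a median; and finally, to verify that these candidate medians collapse into the set $O$ with the claimed cardinality.

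For (1) and (2), fix an optimal pair $(y^{\star}, S^{\star})$ and argue by exchange. Suppose toward contradiction that for (1) there exist $i \in N \setminus S^{\star}$ with $x_i \le y^{\star}$ and $j \in S^{\star}$ with $x_j < x_i$. Swapping their roles (adding $i$ to $S^{\star}$ and removing $j$) changes the utilitarian cost at $y^{\star}$ by $|y^{\star} - x_i| - |y^{\star} - x_j| = (y^{\star} - x_i) - (y^{\star} - x_j) = x_j - x_i < 0$, contradicting the optimality of $S^{\star}$ at $y^{\star}$. Statement (2) is symmetric. Consequently, there exists $z_\ell \in \{0, \ldots, z\}$ such that $S^{\star} = \{\sigma(z_\ell+1), \ldots, \sigma(z_\ell + n - z)\}$. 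For this fixed window, the inner objective $\sum_{k=z_\ell+1}^{z_\ell + n - z} |y - x_{\sigma(k)}|$ is convex and piecewise-linear in $y$, and is minimized at the median of the window: uniquely at $x_{\sigma(z_\ell + (n-z+1)/2)}$ when $n-z$ is odd, and at any point of $[x_{\sigma(z_\ell + (n-z)/2)}, x_{\sigma(z_\ell + (n-z)/2 + 1)}]$ when $n-z$ is even. The global optimum $y^{\star}$ is obtained by taking the best such window over the $z+1$ values of $z_\ell$.

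Finally, I locate an optimal $y^{\star}$ inside $O$ by splitting on the parity of $n-z$. If $n-z$ is odd, then $\lceil (n-z+1)/2 \rceil = (n-z+1)/2$ and $\lceil (n-z)/2 \rceil + 1 = (n-z+1)/2 + 1$, so the indices listed in $O$ are exactly the $z+1$ unique window-medians $(n-z+1)/2, (n-z+1)/2 + 1, \ldots, (n-z+1)/2 + z$, giving $|O| = z+1$. If $n-z$ is even, then $\lceil (n-z+1)/2 \rceil = \lceil (n-z)/2 \rceil + 1 = (n-z)/2 + 1$, so the distinct indices of $O$ are $(n-z)/2 + 1, \ldots, (n-z)/2 + z$, yielding $|O| = z$; a window-by-window check then shows that one may choose an optimal median inside $O$ by taking the right median for $z_\ell = 0$, either endpoint for $1 \le z_\ell \le z-1$, and the left median for $z_\ell = z$. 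The main obstacle is not conceptual but the parity-sensitive ceiling bookkeeping that reconciles the collapsed index range with the available median intervals in the even case.
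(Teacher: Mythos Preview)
Your proposal is correct and follows essentially the same approach as the paper's proof: both establish (1) and (2) via the same exchange argument, deduce that $S^{\star}$ is a window of $n-z$ consecutive order statistics indexed by some $z_\ell \in \{0,\ldots,z\}$, identify the optimal $y^{\star}$ for each window as its median, and then do the parity case analysis to match the resulting candidate indices to $O$. The only cosmetic difference is that the paper cites \cite{procaccia2013} for the median fact whereas you invoke convexity directly, and the paper's handling of the even case at the boundaries $z_\ell=0$ and $z_\ell=z$ is phrased slightly differently but amounts to exactly your right-median/left-median selection.
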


\begin{proof}
For ease of notation, suppose that $x_1 \le x_2 \le \ldots \le x_n$ and for brevity, let $y^{\star}:=y^{\star}(\vec{x}, z)$ and let $S^{\star}:=S^{\star}(\vec{x}, z)$ be the optimal set of non-outliers with respect to $y^{\star}$.
We first show that each optimal solution disregards some of the leftmost and rightmost locations of $\vec{x}$ in the objective.

Consider an $i \in N \setminus S^{\star}$ and assume w.l.o.g. that $x_i \le y^{\star}$.
Towards a contradiction, suppose that there exists a $j \in S^{\star}$ with $x_j < x_i$. 
This contradicts optimality of $(y^{\star}, S^{\star})$, as $(y^{\star}, S')$ with 
$S' = S^{\star} \cup \{ i \} \setminus \{ j \}$ has a smaller egalitarian social cost. 
Namely, for $k \in S^{\star} \cap S'$ the contribution $|x_k - y^{\star}|$ to the social cost remains unchanged.
But for $j \in S^{\star}$ and $i \in S'$, it holds that $|x_j - y^{\star}| > |x_i - y^{\star}|$.
Therefore, given an optimal solution $(y^{\star}, S^{\star})$, there exists a $\outliers_{\ell} \in \{0, 1, \ldots, \outliers\}$ such that $i \notin S^{\star}$ for $i \le \outliers_{\ell}$ and $i > n-(\outliers -\outliers_{\ell})$. By \cite{procaccia2013}, the optimal location $y^{\star}$ is between the $\cl{\frac{n-\outliers}{2}}$-th and $\cl{\frac{n-\outliers +1}{2}}$-th order statistic of the locations corresponding to the $n-\outliers$ agents in $S^{\star}$. 

Specifically, for $n-\outliers$ odd and $\outliers_{\ell} =0$ this leads to $y^{\star} = x_{\cl{\frac{n-\outliers}{2}}} = x_{\cl{\frac{n-\outliers +1}{2}}}$ and as $\outliers_{\ell}$ increases by 1, the order statistic of the location equal to $y^{\star}$ increases by 1 until $\outliers_{\ell} = \outliers$ and $y^{\star} = x_{\cl{\frac{n-\outliers}{2}} + \outliers}$. And so, $|O| = \outliers +1$ in this case. 
For $n-\outliers$ even and $\outliers_{\ell} =0$, this leads to $x_{\cl{\frac{n-\outliers}{2}}} \le y^{\star} \le x_{\cl{\frac{n-\outliers +1}{2}}}$ and as $\outliers_{\ell}$ increases by 1, so do both order statistics of this interval until $\outliers_{\ell} = \outliers$ and $x_{\cl{\frac{n-\outliers}{2}} + \outliers} \le y^{\star} \le x_{\cl{\frac{n-\outliers +1}{2}} + \outliers}$. 
Note that by definition, for $\outliers_{\ell} =0$ only $x_{\cl{\frac{n-\outliers +1}{2}}} \in O$ and for $\outliers_{\ell} =\outliers$ only $x_{\cl{\frac{n-\outliers}{2}} + \outliers} \in O$, leading to $|O| = \outliers$ in this case. 
\end{proof}

As for the problem without outliers, let us consider the mechanism \LM\ that chooses the left median as the location of the facility:
\begin{equation*}\label{eq:median-mech}
\LM(\vec{x}) = x_{\sigma(\cl{\frac{n}{2}})}.
\end{equation*}
Note that if $\outliers = 1$ and $n$ is odd, it follows from Lemma \ref{fact:optRange} that there is always an optimal location $\opt$ in $O$ with $O = \big \{ x_{\sigma(\cl{\frac{n-1 +1}{2}})}, x_{\sigma(\cl{\frac{n-1}{2}} +1)} \big \} = \big \{ x_{\sigma(\cl{\frac{n}{2}})} \big \}$. 
Therefore, for this specific case, \LM\ is 1-approximate, i.e., optimal.
For all other combinations of $n$ and $z$ \LM\ is no longer 1-approximate and as it turns out, for a fixed number of agents $n$ the approximation guarantee increases to $z+1$ if $n$ is even, and $z$ if $n$ is odd, as $z$ increases to $\fl{\frac{n-1}{2}}$.

\begin{restatable}{theorem}{fourtwo} \label{th:SCdeter}
Let $\outliers \in \{1, 2, \ldots, \fl{\frac{n-1}{2}} \}$. Then, \LM\ is strategyproof and has an approximation guarantee for the utilitarian objective of: 
\begin{equation} \label{eq:th:socialcostDeter}
    f(n,\outliers) = \begin{cases}
        \frac{n-1}{n - 2 \outliers + 1}, & \text{if $n$ odd,} \\
        \frac{n}{n - 2 \outliers}, & \text{otherwise.}
    \end{cases} 
\end{equation}
\end{restatable}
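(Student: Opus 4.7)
Strategyproofness of $\LM$ follows immediately from the characterization in Theorem~\ref{thm:characterization}: setting $\fl{n/2}+1$ phantom points to $-\infty$ and $\cl{n/2}$ of them to $+\infty$ makes the median of the resulting $2n+1$ values equal $x_{\sigma(\cl{n/2})}=\LM(\vec{x})$ for every profile. For the approximation guarantee, I would fix $y:=\LM(\vec{x})=x_{\sigma(k)}$ with $k:=\cl{n/2}$ and, via Lemma~\ref{fact:optRange}, pick an optimal pair $(y^{\star},S^{\star})$ whose outliers are exactly the $z_\ell$ leftmost and $z_r=z-z_\ell$ rightmost agents of $\vec{x}$, with $y^{\star}$ the (left-)median of $S^{\star}$.

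Consider first the case $y^{\star}>y$ (the case $y^{\star}=y$ is trivial), set $t:=y^{\star}-y>0$, and partition $S^{\star}$ into $L=\{i\in S^{\star}:x_i\le y\}$, $R=\{i\in S^{\star}:x_i\ge y^{\star}\}$, and $M=S^{\star}\setminus(L\cup R)$. Using $S^{\star}$ as a feasible witness in the minimization defining $\OPTSC(y,\vec{x},z)$ and tracking the sign of $|y-x_i|-|y^{\star}-x_i|$ on each part (it equals $-t$ on $L$, lies in $(-t,t)$ on $M$, and equals $+t$ on $R$) gives
\[
\OPTSC(y,\vec{x},z) \ \le\ \OPTSC^{\star}(\vec{x},z) + (|R|+|M|-|L|)\,t.
\]
Combining this with the observation that every $i\in L$ satisfies $y^{\star}-x_i\ge t$, hence $\OPTSC^{\star}(\vec{x},z)\ge |L|\cdot t$, and using $|L|+|M|+|R|=n-z$, yields the clean estimate
\[
\OPTSC(y,\vec{x},z) \ \le\ \frac{n-z-|L|}{|L|}\cdot \OPTSC^{\star}(\vec{x},z).
\]

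The final step is to lower-bound $|L|$: by Lemma~\ref{fact:optRange} the $z_\ell$ leftmost agents are outliers, while the first $k$ agents in the sorted order all satisfy $x_i\le y$, so $|L|\ge k-z_\ell\ge k-z$. Since $(n-z-|L|)/|L|$ is decreasing in $|L|$, substituting $|L|=k-z$ and expanding $k=\cl{n/2}$ under the two parities of $n$ reproduces exactly $f(n,z)$ in~\eqref{eq:th:socialcostDeter}. The symmetric case $y^{\star}<y$ is handled identically with left/right swapped: the analog $|R|\ge(n-k+1)-z_r\ge(n-k+1)-z$ yields a ratio of $(k-1)/(n-k+1-z)$, which a short parity check shows is at most $f(n,z)$ (equal for $n$ odd, strictly smaller for $n$ even, reflecting the asymmetry of the \emph{left} median). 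Ties at $y$ or $y^{\star}$ only enlarge $|L|$ (resp.\ $|R|$) and hence only tighten the bound. I expect the main obstacle to be the bookkeeping between the two parities of $n$ and the two cases $y^{\star}>y$ vs $y^{\star}<y$---only after carefully verifying that the extremal value $z_\ell=z$ (resp.\ $z_r=z$) drives the worst case does the ratio $(n-z-|L|)/|L|$ collapse to the exact constant in $f(n,z)$.
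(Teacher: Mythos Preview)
Your proof is correct and follows essentially the same approach as the paper: both use $S^{\star}$ as a feasible witness for $\OPTSC(y,\vec{x},z)$ and reduce to the ratio $(n-z-|L|)/|L|$ where $|L|$ counts the non-outliers at or left of $y$. Your endgame is in fact cleaner---you bound $|L|\ge k-z$ directly and split only on the parity of $n$, whereas the paper reaches the same ratio via a four-way case analysis on the parities of both $n$ and $z$; you also handle the case $y^{\star}<y$ explicitly, which the paper dismisses with a ``w.l.o.g.''\ that is only justified because the left median makes $y^{\star}>y$ the harder direction for even $n$.
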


We will use the following fact in order to prove the approximation guarantee of Theorem \ref{th:SCdeter}, as it will turn out to be useful to simplify and upper bound the approximation in our analysis. 

\begin{fact} \label{fact:bound}
Let $x \ge y > 0$ and $z \in [0,y)$. Then $\frac{x}{y} \le \frac{x-z}{y -z}$.
\end{fact}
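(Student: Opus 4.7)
The plan is to verify this elementary inequality by clearing denominators and reducing it to an obvious non-negativity claim. Since $y > 0$ by hypothesis, and $z < y$ implies $y - z > 0$, both denominators involved are strictly positive. I can therefore cross-multiply the inequality $\frac{x}{y} \le \frac{x-z}{y-z}$ without flipping its direction, reducing the claim to $x(y-z) \le y(x-z)$.

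Expanding both sides gives $xy - xz \le xy - yz$, which after cancelling $xy$ and multiplying by $-1$ (flipping the sign) becomes $yz \le xz$, or equivalently $z(x - y) \ge 0$. The latter holds immediately from the two assumptions $z \ge 0$ and $x \ge y$, each of which ensures the corresponding factor is nonnegative.

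Since every step in the chain of manipulations is reversible under the stated sign conditions (in particular $y > 0$ and $y - z > 0$), the derived inequality $z(x-y) \ge 0$ is equivalent to the original statement, completing the proof. There is no serious obstacle here; the only point worth making explicit is the verification that $y - z > 0$, which is exactly why the hypothesis is stated as $z \in [0, y)$ rather than $z \in [0, y]$.
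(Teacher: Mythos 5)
Your proof is correct: the paper states this fact without proof, and your cross-multiplication argument (valid since $y>0$ and $y-z>0$ keep both denominators positive) reduces it to $z(x-y)\ge 0$, which follows directly from the hypotheses. Nothing further is needed.
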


We can now prove Theorem \ref{th:SCdeter}.

\begin{proof}[Proof of Theorem~\ref{th:SCdeter}]
Strategyproofness follows from Theorem \ref{thm:characterization}. 
Let $(\vec{x}, z)$ be an arbitrary profile and let $\alg = \LM(\vec{x})$.
As for any combination of $n$ and $z$ it holds that $f(n,z) \ge 1$, the approximation guarantee follows if there exists an optimal solution $y^{\star}:=y^{\star}(\vec{x}, z)$ such that $\alg = \opt$.
Therefore, suppose that such an optimal solution $\opt$ doesn't exist and let $S^{\star}:=S^{\star}(\vec{x}, z)$ be the optimal set of non-outliers with respect to $y^{\star}$.
Assume w.l.o.g. that $y < \opt$ and define $d = |\opt - \alg|$.
To compare the social cost of $\alg$ and $\opt$, we will evaluate the social cost of $\alg$ with respect to $S^{\star}$.
Note that $\OPTSC(y , \vec{x}, z)= \min_{S \subseteq N:\\|S|=n-z} \sum_{i \in S}|y-x_i| \leq \sum_{i \in S^{\star}}|y-x_i|$ since $|S^*|=n-z$. We have

\begin{align} \label{eq:SC-ub-ratio1}
&\frac{\OPTSC(y,\vec{x},\outliers)}{\OPTSC^{\star}(\vec{x}, \outliers)} 
\le \frac{\sum_{i \in S^{\star}} |\alg - x_i|}{\sum_{i \in S^{\star}} |\opt - x_i|} = \frac{\sum_{i \in S^{\star}: x_i < \alg} |\alg - x_i| + \sum_{i \in S^{\star}: \alg < x_i} |\alg - x_i| }{\sum_{i \in S^{\star}: x_i \le \alg} |\opt - x_i|  + \sum_{i \in S^{\star}: \alg < x_i} |\opt - x_i|  }  \nonumber \\
&\le \frac{\sum_{i \in S^{\star}: x_i < \alg} |\alg - \alg|  + \sum_{i \in S^{\star}: \alg < x_i } |\alg - \opt| }{\sum_{i \in S^{\star}: x_i \le \alg} |\opt - \alg| + \sum_{i \in S^{\star}: \alg < x_i} |\opt - \opt| }  = \frac{\sum_{i \in S^{\star}: \alg < x_i} d }{\sum_{i \in S^{\star}: x_i \le \alg} d  } 
= \frac{ | \{ i \in S^{\star}: \alg < x_i\} | }{ | \{ i \in S^{\star}: x_i \le \alg \}|  }. 
\end{align}
\begin{figure}[b]
\centering
\begin{subfigure}[b]{0.4\linewidth}
    \centering
    \begin{tikzpicture}[scale=0.7]
        \draw[thick] (2,0) -- (10.5,0);
        \filldraw[black] (2.5,0) circle (2pt);
        \filldraw[black] (3,0) circle (2pt);
        \draw [thick, decorate,decoration={brace,amplitude=10pt,mirror},xshift=0.4pt,yshift=-0.4pt](2.35,-0.1) -- (3.15,-0.1) node[black,midway,yshift=-0.6cm] {$\A$};
        
        \draw[thick] (4,0) -- (4,0.5) node[above]{\alg};
        \filldraw[black] (4,0) circle (2pt);
        \draw[stealth-stealth,thick](4.05,0.3) -- (6.95,0.3)  node[color=black, above, xshift=-1cm]{$d$};
        
        \filldraw[black] (5,0) circle (2pt);
        \filldraw[black] (6.5,0) circle (2pt); 
        \draw [thick, decorate,decoration={brace,amplitude=10pt,mirror},xshift=0.4pt,yshift=-0.4pt](4.85,-0.1) -- (6.65,-0.1) node[black,midway,yshift=-0.6cm] {$\B$};
        
        \draw[thick] (7,0) -- (7,0.5) node[above]{\opt};
        \filldraw[black] (7,0) circle (2pt); 
        
        \filldraw[black] (7.5,0) circle (2pt); 
        \filldraw[black] (8,0) circle (2pt); 
        \filldraw[black] (8.5,0) circle (2pt); 
        \filldraw[black] (9.5,0) circle (2pt); 
        \filldraw[black] (10,0) circle (2pt); 
        \draw [thick, decorate,decoration={brace,amplitude=10pt,mirror},xshift=0.4pt,yshift=-0.4pt](7.35,-0.1) -- (10.15,-0.1) node[black,midway,yshift=-0.6cm] {$\C$};
    \end{tikzpicture}
    \caption{ }
    \label{fig:ApproxSocialCostDeterA}
\end{subfigure}
\begin{subfigure}[b]{0.4\linewidth}
    \centering
    \begin{tikzpicture}[scale=0.7]
        \draw[thick] (3.5,0) -- (7.5,0);
    
        \draw[thick] (4,0) -- (4,0.5) node[above]{\alg};
        \filldraw[black] (4,0) circle (2pt);
        \filldraw[black] (4,0) circle (2pt) node[below, yshift=-0.1cm]{$\A$};
        \draw[stealth-stealth,thick](4.05,0.3) -- (6.95,0.3)  node[color=black, above, xshift=-1cm]{$d$};
        
        \draw[thick] (7,0) -- (7,0.5) node[above]{\opt};
        \filldraw[black] (7,0) circle (2pt) node[below, yshift=-0.1cm]{$\B, \C$};
    \end{tikzpicture}
    \vspace{0.3cm}
    \caption{ }
    \label{fig:ApproxSocialCostDeterB}
\end{subfigure}
\caption{Example for bounding the approximation of Theorem \ref{th:SCdeter}, where (b) illustrates how the locations of agents $i \in S^{\star}$ are moved compared to (a) in order to upper bound the guarantee.}
\label{fig:ApproxSocialCostDeter}
\end{figure}
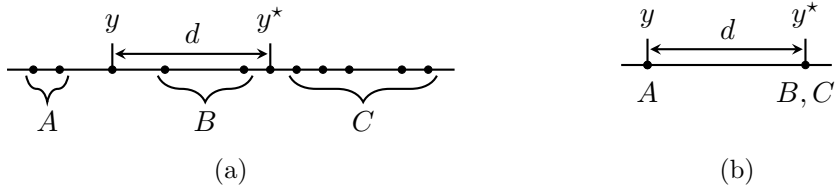
Here, the first equality follows by partitioning the agents $i \in S^{\star}$ depending on their location $x_i$.
As illustrated in Figure \ref{fig:ApproxSocialCostDeter}, the second inequality follows by Fact \ref{fact:bound} by (a) moving locations in $S^{\star}$ with $x_i < \alg$ ($A$ in Figure \ref{fig:ApproxSocialCostDeterA}) to $\alg$ ($A$ in Figure \ref{fig:ApproxSocialCostDeterB}), (b) moving locations in $S^{\star}$ with $\alg < x_i \le \opt$ ($B$ in Figure \ref{fig:ApproxSocialCostDeterA}) to $\opt$ ($B$ in Figure \ref{fig:ApproxSocialCostDeterB}), and (c) moving locations in $S^{\star}$ with $x_i > \opt$ ($C$ in Figure \ref{fig:ApproxSocialCostDeterA}) to $\opt$ ($C$ in Figure \ref{fig:ApproxSocialCostDeterB}).
The last two equalities follow by replacing $|\opt - \alg|$ with $d$ and multiplying both the numerator and denominator by $\frac{1}{d}$.

In order to further upper bound the ratio in \eqref{eq:SC-ub-ratio1}, we want to consider the maximum number of locations $x_i$ with $i \in S^{\star}$ and $\alg < x_i$. 
Using Lemma \ref{fact:optRange}, this number is maximized if $\opt = x_{\sigma(\cl{\frac{n-\outliers}{2}} + \outliers)}$ and $\alg = x_{\sigma(\cl{\frac{n}{2}})} < x_{\sigma(\cl{\frac{n}{2}} +1)}$; note that as $\outliers \le \fl{\frac{n-1}{2}}$, it holds that $\sigma(\cl{\frac{n}{2}}) \in S^{\star}$ by Lemma \ref{fact:optRange}. 
We define $\pi$ as the number of locations $x_i$ with $i \in S^{\star}$ and $\sigma(\cl{\frac{n}{2}}) < \sigma(i) < \sigma(\cl{\frac{n-\outliers}{2}} + \outliers)$.
The exact value of $\pi$ depends on the parities of $\outliers$ and $n$.
Let $i^{\star} = \sigma(\cl{\frac{n-\outliers}{2}} + \outliers)$ and consider the following cases: 

\smallskip
\noindent \underline{Case 1:} $n$ is odd and $\outliers$ is even. 
In this case, $\pi = \frac{\outliers}{2} -1 = \frac{\outliers -2}{2}$. Furthermore, there are $(n-\outliers-1)/2$ agents $i \in S^{\star}$ with $\sigma(i) < i^{\star}$ and $(n-\outliers-1)/2$ agents $i \in S^{\star}$ with $\sigma(i) > i^{\star}$. 
Combining this with \eqref{eq:SC-ub-ratio1} leads to an upper bound of:
\[
\frac{ \frac{n - \outliers -1}{2} + 1 + \frac{\outliers -2}{2} }{ \frac{n - \outliers -1}{2} -  \frac{\outliers -2}{2}}
 = \frac{n  -1 }{n-2\outliers +1}.
\]

\smallskip
\noindent \underline{Case 2:} $n$ is odd and $\outliers$ is odd. Assume that $\outliers \ge 3$, as for $\outliers =1$ and $n$ odd \LM\ is 1-approximate. 
We have that $\pi = \frac{\outliers -1}{2} -1 = \frac{\outliers -3}{2}$. 
Furthermore, there are $((n-\outliers)/2) -1$ agents $i \in S^{\star}$ with $\sigma(i) < i^{\star}$ and $(n-\outliers)/2$ agents $i \in S^{\star}$ with $\sigma(i) > i^{\star}$. 
Combining this with \eqref{eq:SC-ub-ratio1} leads to an upper bound of:
\[
\frac{\frac{n-\outliers}{2}  + 1 + \frac{\outliers -3}{2}}{ \frac{n-\outliers}{2} -1 - \frac{\outliers -3}{2}}
 = \frac{n  -1 }{n-2\outliers +1}.
\]

\smallskip
\noindent \underline{Case 3:} $n$ is even and $\outliers$ is even. In this case, $\pi = \frac{\outliers}{2} -1 = \frac{\outliers -2}{2}$. 
Furthermore, there are $((n-\outliers)/2) -1$ agents $i \in S^{\star}$ with $\sigma(i) < i^{\star}$ and $(n-\outliers)/2$ agents $i \in S^{\star}$ with $\sigma(i) > i^{\star}$. 
Combining this with \eqref{eq:SC-ub-ratio1} leads to an upper bound of:
\[
\frac{\frac{n-\outliers}{2}  + 1 + \frac{\outliers -2}{2}}{ \frac{n-\outliers}{2} -1 - \frac{\outliers -2}{2}}
 = \frac{n}{n-2\outliers }.
\]

\smallskip
\noindent \underline{Case 4:} $n$ is even and $\outliers$ is odd. In this case, $\pi = \frac{\outliers +1}{2} -1 = \frac{\outliers -1}{2}$. 
Furthermore, there are $(n-\outliers-1)/2$ agents $i \in S^{\star}$ with $\sigma(i) < i^{\star}$ and $(n-\outliers-1)/2$ agents $i \in S^{\star}$ with $\sigma(i) > i^{\star}$. 
Combining this with \eqref{eq:SC-ub-ratio1} leads to an upper bound of:
\[
\frac{ \frac{n - \outliers -1}{2} + 1 +  \frac{\outliers -1}{2} }{ \frac{n - \outliers -1}{2} -  \frac{\outliers -1}{2}}
 = \frac{n}{n-2\outliers }.
\]
This concludes the proof.
\end{proof}

We conclude this section by deriving a lower bound for deterministic strategyproof mechanisms, which matches the upper bound in Theorem~\ref{th:SCdeter} for all values of $n$ and $z$. In the proof, we utilize the characterization of~\cite{Moulin80}.

\begin{restatable}{theorem}{fourthree} \label{lem:SC:LB-deterministic}
Let $\outliers \in \{ 1, 2, \ldots, \fl{\frac{n-1}{2}} \}$. Then, there is no deterministic strategyproof mechanism with an approximation guarantee for the utilitarian objective better than: 
\begin{equation} \label{eq:th:socialcostDeter_LB}
    f(n,\outliers) = \begin{cases}
        \frac{n-1}{n - 2 \outliers + 1}, & \text{if $n$ odd,} \\
        \frac{n}{n - 2 \outliers}, & \text{otherwise.}
    \end{cases} 
\end{equation}
\end{restatable}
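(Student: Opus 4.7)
The plan is to combine Moulin's characterization (Theorem \ref{thm:characterization}) with a scaling argument: I would first impose structural constraints on the phantom points of any bounded-approximation mechanism, and then construct a one-parameter family of adversarial profiles whose approximation ratios approach $f(n,z)$.

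First, I would argue that any strategyproof mechanism $\mech$ with bounded approximation must have at least $z+1$ phantom points at $-\infty$ and at least $z+1$ at $+\infty$. To see this, for any $t \in \mathbb{R}$, consider the profile with $n-z$ agents at $t$ and $z$ agents at some location $M > t$: the optimum cost is $0$, so $\mech$ must output exactly $t$. Writing this as a condition on the $(n+1)$-th smallest of the agents plus phantoms forces the number of phantoms at most $t$ to be at least $z+1$; since this must hold for arbitrarily negative $t$, at least $z+1$ phantoms lie at $-\infty$. The symmetric argument with outliers at $-M$ yields at least $z+1$ phantoms at $+\infty$, leaving at most $n-1-2z$ ``middle'' phantoms at finite positions.

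Next, I would analyze the adversarial profile family $\vec{x}(T)$ consisting of $z$ agents at $-M$, $a$ agents at $0$, and $b$ agents at $T$, where $(a,b) = ((n-2z)/2, n/2)$ for $n$ even and $(a,b) = ((n-2z+1)/2, (n-1)/2)$ for $n$ odd. In either case $a + b = n-z$ and $b/a = f(n,z)$. The optimum discards the $z$ agents at $-M$ and places the facility at $T$ (the left-median of the remaining agents, since $b \ge a$), so $\OPTSC^{\star}(\vec{x}(T),z) = aT$. Using the phantom structure above, a careful examination of the sorted order of the $2n+1$ values shows that $\mech(\vec{x}(T))$ is either $0$ or a finite middle phantom in $(0, T)$, and hence is bounded by some constant $C$ independent of $T$.

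Finally, for $T$ sufficiently large the agents at $-M$ are the farthest from the mechanism's output $y \in [0, C]$, so the optimal discard set matches the optimum's, yielding cost $ay + b(T-y)$. The approximation ratio is thus
\[
\frac{ay + b(T-y)}{aT} = \frac{b}{a} + \frac{(a-b)y}{aT},
\]
which tends to $b/a = f(n,z)$ as $T \to \infty$ since $a < b$ and $y \le C$ is bounded. For any $\rho < f(n,z)$, choosing $T$ large enough gives a ratio exceeding $\rho$, contradicting $\rho$-approximation. The main obstacle is the bounded-output step, which requires a short case analysis on how many middle phantoms lie at most $0$; however, the key insight---that the extreme phantoms at $\pm\infty$ together with the $-M$ agents shield the mechanism's output from scaling with $T$---applies uniformly across parities of $n$ and $z$.
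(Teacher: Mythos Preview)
Your boundedness claim in Step 3 is false, and this is the crux of the argument. You assert that on $\vec{x}(T)$ the output of any strategyproof mechanism with bounded approximation is either $0$ or a finite middle phantom in $(0,T)$, and hence bounded independently of $T$. But consider the mechanism with exactly $z+1$ phantoms at $-\infty$ and the remaining $n-z$ phantoms at $+\infty$; this is the $(n-z)$-th order statistic, and it satisfies the structural constraints from your Step 1. On $\vec{x}(T)$, the sorted agent locations are $z$ copies of $-M$, then $a$ copies of $0$, then $b$ copies of $T$; since $z+a = n-b$ and $b > z$ (as $b \ge \lfloor n/2 \rfloor > z$), the $(n-z)$-th smallest agent is $T$. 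Hence the mechanism outputs $T$, which is optimal, and your profile yields ratio $1$ for this mechanism. More generally, any mechanism whose phantom structure corresponds to a $(k+1)$-th order statistic with $k+1 > z+a$ (equivalently $k \ge \lceil n/2 \rceil$) will output $T$ on your profile.

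The fix is to add a case split: for ``right-biased'' mechanisms (roughly, those with more than $\lceil n/2 \rceil$ phantoms at $+\infty$), use the mirrored profile with $b$ agents at $-T$, $a$ agents at $0$, and $z$ agents at $M$. This is precisely what the paper does, though via a different reduction: it embeds all agents in a gap $(\alpha_{n-k},\alpha_{n-k+1})$ between two consecutive finite phantoms so that the mechanism behaves like the $(k+1)$-th order statistic, and then splits on whether $k < n/2$ (use profile $\vec{x}^2$, which is essentially your $\vec{x}(T)$) or $k \ge n/2$ (use the mirror $\vec{x}^3$). Once you add the symmetric case, your limiting argument and your three-cluster construction coincide with the paper's, and the scaling $T\to\infty$ is unnecessary since the output lands exactly at $0$ and the ratio equals $b/a = f(n,z)$ directly.
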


\begin{proof}
In order to prove our theorem, we will utilize the characterization of~\cite{Moulin80} and the four profiles depicted in Figure~\ref{fig:SC:LB-z+1_short}. 

Recall that Theorem ~\ref{thm:characterization} states that a deterministic mechanism is strategyproof if and only if there exist $n+1$ real numbers $\alpha_1, \ldots, \alpha_{n+1} \in \mathbb{R} \cup \{-\infty,+\infty\}$ such that for every $(x_1, \ldots, x_n) \in \mathbb{R}^n$, the mechanism returns the median of $(x_1, \ldots, x_n, \alpha_1, \ldots, \alpha_{n+1})$. 

So, assume that a deterministic mechanism picks $n+1$ real numbers $\alpha_1, \ldots, \alpha_{n+1} \in \mathbb{R} \cup \{-\infty,+\infty\}$; in what follows assume that w.l.o.g. that $\alpha_i \leq \alpha_{i+1}$, for $i \in [n]$. 
Since the mechanism is deterministic, it has to use the same numbers for every $(x_1, \ldots, x_n) \in \mathbb{R}^n$. In order to prove our claim, we will consider the following two cases.

First, assume that all $\alpha$-points are either $-\infty$, or $+\infty$, and consider a profile with $x_1=x_2= \ldots = x_n = 0$. 
Observe that the optimal utilitarian social cost for this profile is zero. 
In addition, observe that the median of the $(n+1)$-many $\alpha$-points and the $n$-many $x$-points is an $\alpha$-point. 
Thus the mechanism chooses a location with a positive social cost, while the optimal social cost is zero, leading to an unbounded approximation guarantee.

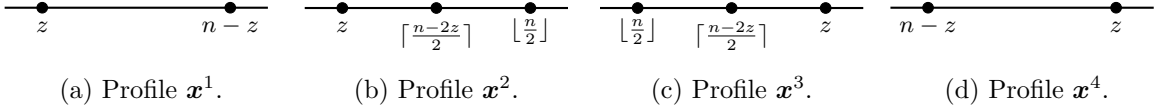
\begin{figure}[t]
\centering
\begin{subfigure}[b]{0.23\linewidth}
    \centering
    \begin{tikzpicture}
        \draw[thick] (0,0) -- (3.5,0);
        \filldraw[black] (0.5,0) circle (2pt) node[below, yshift=-0.05cm]{\footnotesize $\outliers$};
        \filldraw[black] (3,0) circle (2pt) node[below]{\footnotesize $n - \outliers$}; 
    \end{tikzpicture}
    \vspace{0.21cm}
    \caption{Profile $\vec{x}^1$.}
    \label{fig:SC:LB-a}
\end{subfigure}
\begin{subfigure}[b]{0.23\linewidth}
    \centering
    \begin{tikzpicture}
        \draw[thick] (0,0) -- (3.5,0);
        \filldraw[black] (0.5,0) circle (2pt) node[below, yshift=-0.05cm]{\footnotesize $\outliers$};
        \filldraw[black] (3,0) circle (2pt) node[below]{\footnotesize $\lfloor \frac{n}{2} \rfloor$}; 
        \filldraw[black] (1.75,0) circle (2pt) node[below, yshift=-0.05cm]{\footnotesize $\lceil \frac{n-2z}{2} \rceil$}; 
    \end{tikzpicture}
    \caption{Profile $\vec{x}^2$.}
    \label{fig:SC:LB-b}
\end{subfigure}
\begin{subfigure}[b]{0.23\linewidth}
    \centering
    \begin{tikzpicture}
        \draw[thick] (0,0) -- (3.5,0);
        \filldraw[black] (0.5,0) circle (2pt) node[below]{\footnotesize $\lfloor \frac{n}{2} \rfloor$};
        \filldraw[black] (3,0) circle (2pt) node[below, yshift=-0.05cm]{\footnotesize $\outliers$}; 
        \filldraw[black] (1.75,0) circle (2pt) node[below, yshift=-0.05cm]{\footnotesize $\lceil \frac{n-2z}{2} \rceil$}; 
    \end{tikzpicture}
    \caption{Profile $\vec{x}^3$.}
    \label{fig:SC:LB-c}
\end{subfigure}
\begin{subfigure}[b]{0.23\linewidth}
    \centering
    \begin{tikzpicture}
        \draw[thick] (0,0) -- (3.5,0);
        \filldraw[black] (0.5,0) circle (2pt) node[below]{\footnotesize $n - \outliers$};
        \filldraw[black] (3,0) circle (2pt) node[below, yshift=-0.05cm]{\footnotesize $\outliers$}; 
    \end{tikzpicture}
    \vspace{0.21cm}
    \caption{Profile $\vec{x}^4$.}
    \label{fig:SC:LB-d}
\end{subfigure}
\caption{Profiles used in the proof of Theorem \ref{lem:SC:LB-deterministic}.}
\label{fig:SC:LB-z+1_short}
\end{figure}

Now consider the complement of the case above: there exist a $k \in \{0,1, \ldots n-1\}$ such that $\alpha_{n-k} < \alpha_{n-k+1}$. 
We will ``embed'' the four profiles from Figure~\ref{fig:SC:LB-z+1_short} in the interval $(\alpha_{n-k}, \alpha_{n-k+1})$ and show that the mechanism will achieve the claimed bound in one of the profiles. 
This means that for all four profiles, we will assume that $x_i \in (\alpha_k, \alpha_{k+1})$ for every $i \in [n]$.
Observe that in this case, the median of the $\alpha$-points and the $x$-points will be $(k+1)$-th order statistic of $(x_1, x_2, \ldots, x_n)$, i.e., it will be the $(k+1)$-th smallest $x$-point. In addition, observe the following.
\begin{itemize}
    \item If $k +1 \leq z$, then the mechanism achieves an unbounded approximation guarantee for the profile $\vec{x}^1$ depicted in Figure \ref{fig:SC:LB-a}. This is because the optimal solution locates the facility on the right, where $n-z$ agents are located, and achieves a social cost of zero, while the mechanism will locate it on the left leading to a positive social cost.
    \item If $k \geq n-z $, then the mechanism achieves an unbounded approximation guarantee for the profile $\vec{x}^4$ depicted in Figure \ref{fig:SC:LB-d}. This is because the optimal solution locates the facility on the left, where $n-z$ agents are located, and achieves a social cost of zero, while the mechanism will locate it on the right leading to a positive social cost.
\end{itemize}

It remains to consider the cases where $k \in [z,n-z)$.
Assume that $k < n/2$ and consider the profile $\vec{x}^2$ depicted in Figure \ref{fig:SC:LB-b}; if $k \geq n/2$, then we will consider the profile $\vec{x}^3$ depicted in Figure \ref{fig:SC:LB-c} and use symmetric arguments.
Let $d$ denote the distance between the right point, where $ \lfloor \frac{n}{2} \rfloor$ agents are located, and the middle point, where $\lceil \frac{n-2z}{2} \rceil$ agents are located.
Observe the following two facts.
\begin{itemize}
    \item The optimal solution in this case is to locate the facility on the right where $\lfloor \frac{n}{2} \rfloor$ agents have their position, where the cost comes from the agents located in the middle, hence the optimal social cost is $\lceil \frac{n-2z}{2} \rceil \cdot d$.
    \item The $(k+1)$-th order statistic will place the facility on the middle. Hence the social cost of the mechanism will be $\lfloor \frac{n}{2} \rfloor\cdot d$. 
\end{itemize}
Combining the two facts above, we get that the approximation guarantee of the mechanism is $\frac{\lfloor \frac{n}{2}\rfloor}{\lceil \frac{n-2z}{2} \rceil}$, which is $\frac{n}{n-2z}$ if $n$ is even and $\frac{n-1}{n-2z+1}$ if $n$ is odd.
So, overall, we have proven that the $(k+1)$-th order statistic mechanism cannot achieve an approximation guarantee better than the claimed bounds.
\end{proof}

\subsection{Randomized Mechanisms} 

In contrast to the egalitarian objective, positive results can be obtained for the utilitarian objective when allowing for randomization. 
For a profile $\vec{x}$ of even size, our mechanism \RandMed\ randomizes between the left and right median, i.e., $\RandMed(\vec{x})$ returns either $x_{\sigma(\frac{n}{2})}$ or $x_{\sigma(\frac{n}{2} +1)}$, both with probability $\frac{1}{2}$. 

\begin{restatable}{theorem}{sixone} \label{lem:SC:rand:UB}
Let $n$ be even and let $\outliers \in \{ 1, 2, \ldots, \fl{\frac{n-1}{2}} \}$.
Then, \RandMed\ is universally strategyproof and achieves an expected approximation guarantee for the utilitarian objective of:
\begin{equation} \label{th:eq:SC-UB-rand-n-even}
f(n,\outliers) = \begin{cases}
   \frac{n - 1}{n - 2}, & \text{if $\outliers =1$,} \\
    \frac{n^2 - 2nz + 2z}{(n - 2z)(n - 2z +2)}, & \text{otherwise.}
    \end{cases} 
\end{equation}
\end{restatable}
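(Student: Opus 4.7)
\emph{Universal strategyproofness.} Both the left-median mechanism (returning $x_{\sigma(n/2)}$) and the right-median mechanism (returning $x_{\sigma(n/2+1)}$) are strategyproof deterministic mechanisms by Theorem~\ref{thm:characterization}, obtained by placing $n/2$ phantom points at $-\infty$ and $n/2+1$ at $+\infty$ for the left median, and the symmetric choice for the right median. Since $\RandMed$ is the uniform distribution over these two mechanisms, it is universally strategyproof.

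\emph{Approximation plan.} Assume by anonymity that $x_1 \le \cdots \le x_n$, and let $y_L := x_{n/2}$, $y_R := x_{n/2+1}$. Let $(y^\star, S^\star)$ be an optimal solution as in Lemma~\ref{fact:optRange}, with $S^\star = \{z_\ell + 1, \dots, n - z + z_\ell\}$ for some $z_\ell \in \{0, 1, \dots, z\}$. Since $S^\star$ is a feasible set of non-outliers for both $y_L$ and $y_R$,
\begin{equation*}
\EX[\OPTSC(\RandMed(\vec{x},z))] = \tfrac{1}{2}\bigl(\OPTSC(y_L) + \OPTSC(y_R)\bigr) \le \tfrac{1}{2}\Bigl(\sum_{i \in S^\star}|y_L - x_i| + \sum_{i \in S^\star}|y_R - x_i|\Bigr).
\end{equation*}
I would then apply the movement argument via Fact~\ref{fact:bound} from the proof of Theorem~\ref{th:SCdeter} separately to each of the two sums: for each $y \in \{y_L, y_R\}$, collapsing every location $x_i$ with $i \in S^\star$ to either $y$ or $y^\star$ (preserving its side relative to $y^\star$) only increases the ratio $\sum_{i \in S^\star}|y - x_i| / \OPTSC^\star(\vec{x},z)$, yielding a bound that is a counting ratio on elements of $S^\star$.

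\emph{Worst-case $z_\ell$ and closed form.} I expect the worst case to occur at the extremes $z_\ell \in \{0, z\}$, since intermediate $z_\ell$ place $y^\star$ closer to $[y_L, y_R]$ and are intuitively easier. For $z_\ell = 0$, $S^\star = \{1, \dots, n-z\}$ and $y^\star$ is the median of this set. For $z \ge 2$ one has $y^\star < y_L$ strictly, and direct counting of elements of $S^\star$ on each side of $y_L$ and $y_R$ gives counting ratios
\begin{equation*}
R_L = \frac{n/2 - 1}{n/2 - z + 1}, \qquad R_R = \frac{n/2}{n/2 - z};
\end{equation*}
a short algebraic simplification of $(R_L + R_R)/2$ produces the claimed $\tfrac{n^2 - 2nz + 2z}{(n-2z)(n-2z+2)}$. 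For $z = 1$ the analysis degenerates: $y^\star = y_L$ exactly, so $\OPTSC(y_L) = \OPTSC^\star(\vec{x},z)$, and one only needs to bound $\OPTSC(y_R)$ via the counting ratio $n/2 \,/\, (n/2 - 1)$. Averaging this with $1$ yields $\tfrac{n-1}{n-2}$, matching the special case in the theorem (and explaining the case split, since the general formula would give an invalid $R_L < 1$).

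\emph{Main obstacle.} The main technical hurdle is verifying that $z_\ell \in \{0, z\}$ is indeed the worst case; a priori, intermediate values of $z_\ell$ must be checked since $y^\star$ may lie strictly inside the interval $[y_L, y_R]$ (when $n - z$ is even and $z_\ell \approx z/2$) or coincide with one of the medians (when $n - z$ is odd and $z_\ell = (z \pm 1)/2$), affecting which side of $y^\star$ each median lies on. I expect a monotonicity argument on how the two counting ratios trade off as $z_\ell$ varies to isolate the extreme cases, with the residual algebra for the closed-form bound being routine but tedious.
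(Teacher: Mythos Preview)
Your approach is correct and essentially the paper's: universal strategyproofness via Theorem~\ref{thm:characterization}, then the movement/counting argument of Theorem~\ref{th:SCdeter} applied separately to each median and averaged, arriving at the same two counting ratios $\frac{n}{n-2z}$ and $\frac{n-2}{n-2z+2}$.

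The ``main obstacle'' you flag dissolves with a single observation the paper makes. By Lemma~\ref{fact:optRange}, the optimal $y^\star$ can be taken to be an agent location, hence either $y^\star \le y_L$ or $y^\star \ge y_R$; by symmetry the paper simply assumes w.l.o.g.\ $y^\star \ge y_R$. Then \emph{both} medians lie weakly to the left of $y^\star$, the worst-case position of $y^\star$ inside $O$ (the rightmost point of $O$) is the same for both counting ratios, and one just reuses Theorem~\ref{th:SCdeter} verbatim for $y_L$ and redoes its last case split with the index shifted by one for $y_R$. No analysis over intermediate $z_\ell$, and no possibility of $y^\star \in (y_L,y_R)$, ever arises. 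For $z=1$ the paper similarly just notes that $y^\star$ equals one of the two medians (not specifically $y_L$), so one term is exactly $\OPTSC^{\star}(\vec{x},z)$ and the other is bounded by $\frac{n}{n-2}$ from Theorem~\ref{th:SCdeter}.
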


\begin{proof}
Note that \RandMed\ is universally strategyproof as both locations that are potentially returned by \RandMed\ are the outcome of a deterministic strategyproof mechanism (Theorem \ref{thm:characterization}). 
Consider an arbitrary profile $(\vec{x},z)$ with $n$ even and let $y$ be the randomized median, i.e., $y \sim \RandMed(\vec{x}, z)$.

First consider the case of $\outliers =1$.
By Lemma \ref{fact:optRange}, there exists an optimal location $y^{\star} \in \{x_{\sigma(\frac{n}{2})} , x_{\sigma(\frac{n}{2} +1)}\}$, and therefore
\begin{equation*}
        \mathbb{E}[\OPTSC(y, \vec{x}, z)] \le \frac{1}{2} \cdot 1 + \frac{1}{2} \cdot \frac{n}{n - 2 \outliers} = \frac{n - \outliers}{n - 2 \outliers}=\frac{n-1}{n-2},
\end{equation*}
where the first inequality follows by Theorem \ref{th:SCdeter}. 

Now consider the case of $\outliers \ge 2$. 
Assume w.l.o.g. that $y^* \ge x_{\sigma(\frac{n}{2} +1)}$. 
For the left median, i.e., $x_{\sigma(\frac{n}{2})}$, we can use the approximation guarantee of $\frac{n}{n - 2 \outliers}$ as in Theorem \ref{th:SCdeter}. 
For the right median, we can derive an upper bound on the approximation guarantee by using the same approach as in Theorem \ref{th:SCdeter}. 
Namely, if $\outliers$ is even, this leads to an upper bound of
\[
\frac{\frac{n-\outliers}{2}  + 1 + \frac{\outliers -2}{2} -1}{ \frac{n-\outliers}{2} -1 - \frac{\outliers -2}{2} +1}
 = \frac{n -2}{n-2\outliers +2 },
\]
and if $\outliers$ is odd, this leads to an upper bound of
\[
\frac{ \frac{n - \outliers -1}{2} + 1 +  \frac{\outliers -1}{2} -1 }{ \frac{n - \outliers -1}{2} -  \frac{\outliers -1}{2} +1}
 = \frac{n -2}{n-2\outliers +2}.
\]
Finally, this leads to
\begin{equation*}
\mathbb{E}[\OPTSC(y, \vec{x}, z)] \le \frac{1}{2} \cdot \frac{n-2}{n - 2z + 2} + \frac{1}{2} \cdot \frac{n}{n - 2 \outliers} = \frac{n^2 - 2nz + 2z}{(n - 2z)(n - 2z +2)},
\end{equation*}
concluding the proof.
\end{proof}

Recall that universal strategyproofness, the notion achieved by \RandMed, is a stronger requirement than strategyproofness in expectation. 
In Theorems \ref{thm:SC:LB-3over2-randomized} and \ref{thm:SC:LB-2-randomized-odd} below, we derive two matching lower bounds for $(n,z)=(4,1)$ and $(n,z)=(5,2)$ for randomized mechanisms that achieve the weaker notion of strategyproofness in expectation. 
Interestingly, the latter lower bound matches the upper bound achieved by $\Lmedian$ in Theorem \ref{th:SCdeter} for $n=5$ and $z=2$. Understanding whether randomization helps improve the approximation guarantee for odd $n$ is an interesting question for follow-up work. 

The following lemma will be useful when proving the aforementioned negative results. 

\begin{lemma}[see also \citep{procaccia2013}]\label{lem:triangle-inequality}
    Let $D$ be a probability distribution supported on $\mathbb{R}$. Then, for every $x_1, x_2 \in \mathbb{R}$, it holds that
    \[
        \max\left\{ \EX_{y \sim D}[|y - x_1|],\; \EX_{y \sim D}[|y - x_2|] \right\} \geq \frac{|x_1 - x_2|}{2}.
    \]
\end{lemma}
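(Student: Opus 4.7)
The plan is to reduce the statement to the pointwise triangle inequality and then pass to expectations via linearity. The key observation is that for every $y \in \mathbb{R}$, the triangle inequality gives $|y - x_1| + |y - x_2| \geq |x_1 - x_2|$ (with equality whenever $y$ lies in the interval $[\min(x_1, x_2), \max(x_1, x_2)]$). This pointwise inequality is the crucial ingredient because the right-hand side is a constant independent of $y$, which is exactly what we want to lower bound by on the right-hand side of the claim.

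Next, I would take the expectation of both sides with respect to $y \sim D$. By linearity of expectation (assuming the expectations are finite, which is implicit in the statement), this yields
\[
\EX_{y \sim D}[|y - x_1|] + \EX_{y \sim D}[|y - x_2|] \;\geq\; |x_1 - x_2|.
\]

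Finally, I would apply the elementary fact that for any two non-negative reals $a$ and $b$, one has $\max\{a, b\} \geq (a + b)/2$, since $2 \max\{a, b\} \geq a + b$. Setting $a = \EX_{y \sim D}[|y - x_1|]$ and $b = \EX_{y \sim D}[|y - x_2|]$ and combining with the previous display immediately gives the desired lower bound of $|x_1 - x_2|/2$.

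There is no real obstacle here: the lemma is essentially the triangle inequality lifted through expectation, combined with the trivial observation that the larger of two non-negative numbers is at least half their sum. The only point worth being careful about is that the inequality is preserved under expectation, which is standard for monotone operators on measurable functions.
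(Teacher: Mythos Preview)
Your proposal is correct and follows essentially the same approach as the paper: both arguments combine the pointwise triangle inequality $|y-x_1|+|y-x_2|\ge |x_1-x_2|$ with linearity of expectation and the trivial bound $\max\{a,b\}\ge (a+b)/2$. The only difference is the order in which these two ingredients are applied, which is immaterial.
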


\begin{proof}
We observe that
\begin{align*}
    \max\left\{ \EX_{y \sim D}[|y - x_1|],\; \EX_{y \sim D}[|y - x_2|] \right\} 
    &\geq \EX_{y \sim D}\left[ \frac{ |y - x_1| + |y - x_2| }{2} \right]
    \geq 
    \frac{|x_1-x_2|}{2},
\end{align*}
where the first and second inequality follow by linearity of expectation and the triangle inequality, respectively.
\end{proof}

We first prove the negative results for $(n,z)=(4,1)$.
\medskip

\noindent

\begin{theorem} \label{thm:SC:LB-3over2-randomized}
Let $n=4$ and $\outliers = 1$. Then, there is no randomized mechanism that is strategyproof in expectation and achieves an expected approximation guarantee better than $\nicefrac{3}{2}$ for the utilitarian objective.  
\end{theorem}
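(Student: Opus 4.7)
The plan is to mirror the structure of Lemma~\ref{lemma:lb-mc-rand}. Assume toward a contradiction that $\mech$ is strategyproof in expectation and $(\nicefrac{3}{2} - \varepsilon)$-approximate for some $\varepsilon \in (0, 1]$.

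I would begin with the symmetric profile $\vec{p} = (0, \nicefrac{1}{2}, \nicefrac{1}{2}, 1)$, for which $\OPTSC(\cdot, \vec{p}, 1)$ is uniquely minimized at $y = \nicefrac{1}{2}$ with value $\nicefrac{1}{2}$. The two extreme agents each admit a natural degenerate deviation to $\nicefrac{1}{2}$: agent~1 misreporting $\nicefrac{1}{2}$ yields $(\nicefrac{1}{2}, \nicefrac{1}{2}, \nicefrac{1}{2}, 1)$ with $\OPTSC^{\star} = 0$ attained at $y = \nicefrac{1}{2}$ (since agent~4 becomes the outlier), and symmetrically for agent~4. These deviations force the mechanism at the misreported profiles to output $\nicefrac{1}{2}$ almost surely, which via strategyproofness yields the constraints $\EX_{y \sim \mech(\vec{p}, 1)}[|y|] \le \nicefrac{1}{2}$ and $\EX_{y \sim \mech(\vec{p}, 1)}[|y - 1|] \le \nicefrac{1}{2}$. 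Combined with Lemma~\ref{lem:triangle-inequality}, both bounds are tight: $\EX_{y \sim \mech(\vec{p}, 1)}[|y|] = \EX_{y \sim \mech(\vec{p}, 1)}[|y - 1|] = \nicefrac{1}{2}$ and $y \in [0, 1]$ almost surely.

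Next, I would consider a unilateral deviation of agent~1 from $0$ to some $\beta \in (0, \nicefrac{1}{2})$, yielding $\vec{x}' = (\beta, \nicefrac{1}{2}, \nicefrac{1}{2}, 1)$ with $\OPTSC^{\star}(\vec{x}', 1) = \nicefrac{1}{2} - \beta$ attained at $y = \nicefrac{1}{2}$, analogously to Figure~\ref{fig:MC-LB-RAND-b2}. Strategyproofness for agent~1 transfers the lower bound $\EX_{y \sim \mech(\vec{x}', 1)}[|y|] \ge \EX_{y \sim \mech(\vec{p}, 1)}[|y|] = \nicefrac{1}{2}$, and a symmetric application for agent~4 (whose true location remains $1$) transfers $\EX_{y \sim \mech(\vec{x}', 1)}[|y - 1|] \ge \nicefrac{1}{2}$. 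Using the explicit piecewise-linear form of $\OPTSC(\cdot, \vec{x}', 1)$ to derive a linear lower bound in the spirit of Claim~\ref{claim:fancyclaim}---of the form $\OPTSC(y, \vec{x}', 1) \ge a_1(\beta) \cdot |y| + a_2(\beta) \cdot |y-1| + a_3(\beta)$---and choosing $\beta$ sufficiently close to $\nicefrac{1}{2}$, I would obtain $\EX_{y \sim \mech(\vec{x}', 1)}[\OPTSC(y, \vec{x}', 1)] > (\nicefrac{3}{2} - \varepsilon) \cdot (\nicefrac{1}{2} - \beta)$, contradicting the approximation guarantee of $\mech$ on $\vec{x}'$.

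The main obstacle I foresee is that naive single-variable linear lower bounds using only $|y|$ or only $|y - 1|$ are insufficient here: a direct calculation shows that under the constraint $\EX[|y|] \ge \nicefrac{1}{2}$ alone, the infimum of $\EX[\OPTSC(\cdot, \vec{x}', 1)]$ equals exactly $\OPTSC^{\star}(\vec{x}', 1) = \nicefrac{1}{2} - \beta$ (achieved by a point-mass at $\nicefrac{1}{2}$), so no contradiction arises. The analysis must therefore jointly exploit both bounds $\EX[|y|] \ge \nicefrac{1}{2}$ and $\EX[|y - 1|] \ge \nicefrac{1}{2}$, with a coupled lower bound on $\OPTSC$ that reflects the asymmetry introduced by the perturbation $\beta$. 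Extracting it will require a careful piecewise analysis of $\OPTSC(\cdot, \vec{x}', 1)$ analogous to the case analysis in Claim~\ref{claim:fancyclaim}, possibly complemented by a second parametric family of deviations for agent~4 to close any remaining slack in the ratio $\nicefrac{1}{(1 - 2\beta)}$ toward the target $\nicefrac{3}{2}$.
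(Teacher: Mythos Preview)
Your approach has a genuine gap, in two places.

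First, the step ``a symmetric application for agent~4 \dots\ transfers $\EX_{y \sim \mech(\vec{x}', 1)}[|y - 1|] \ge \nicefrac{1}{2}$'' is not valid. The profiles $\vec{p}$ and $\vec{x}'$ differ only in agent~1's report, so strategyproofness for agent~4 says nothing about how $\mech(\vec{p},1)$ and $\mech(\vec{x}',1)$ relate. In fact, letting agent~4 deviate from $\vec{x}'$ to $(\beta,\nicefrac12,\nicefrac12,\nicefrac12)$ (for which $\OPTSC^\star=0$ forces the mechanism to output $\nicefrac12$) yields the \emph{opposite} inequality $\EX_{y \sim \mech(\vec{x}', 1)}[|y - 1|] \le \nicefrac12$.

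Second, and more fundamentally, even if both constraints $\EX[|y|]\ge\nicefrac12$ and $\EX[|y-1|]\ge\nicefrac12$ held at $\vec{x}'=(\beta,\nicefrac12,\nicefrac12,1)$, no coupled linear lower bound of the form you describe can beat the optimum: the point mass at $y=\nicefrac12$ satisfies both constraints with equality and achieves $\OPTSC(\nicefrac12,\vec{x}',1)=\nicefrac12-\beta=\OPTSC^\star(\vec{x}',1)$. The underlying problem is the choice of base profile: with two agents at $\nicefrac12$, every perturbation of an endpoint leaves the unique optimum at $\nicefrac12$, and nothing you extract from strategyproofness can force the mechanism away from it.

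The paper sidesteps both issues by starting from $\vec{p}=(0,\nicefrac13,\nicefrac23,1)$. By Lemma~\ref{lem:triangle-inequality} one endpoint, say $r$, satisfies $\EX_{y\sim\mech(\vec{p},1)}[|y-1|]\ge\nicefrac12$; then a \emph{single} deviation of $r$ to $\nicefrac23$ gives $\vec{x}'=(0,\nicefrac13,\nicefrac23,\nicefrac23)$ with $\OPTSC^\star(\vec{x}',1)=\nicefrac13$ attained at $\nicefrac23$, and strategyproofness transfers $\EX_{y\sim\mech(\vec{x}',1)}[|y-1|]\ge\nicefrac12$. The crux is the pointwise inequality $\OPTSC(y,\vec{x}',1)\ge|y-1|$ for all $y\in\mathbb{R}$, which immediately yields expected cost at least $\nicefrac12$ and hence ratio $\nicefrac32$, without any limiting argument in a parameter~$\beta$. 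The separation of the two interior agents ($\nicefrac13$ versus $\nicefrac23$) is exactly what makes the optimum at $\vec{x}'$ shift to $\nicefrac23$ while the strategyproofness constraint still anchors expected distance to the \emph{original} endpoint~$1$.
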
 

\begin{proof}
Let $\mathcal{M}$ be any randomized mechanism that is strategyproof in expectation. Consider the profile with $N = \{\ell, m_1, m_2, r\}$ and preferred locations $p_{\ell} = 0$, $p_{m_1} = \nicefrac{1}{3}$, $p_{m_2} = \nicefrac{2}{3}$, and $p_r = 1$. Suppose the agents report truthfully, i.e., $\vec{x} = \vec{p}$. By applying Lemma~\ref{lem:triangle-inequality} to $\mech(\vec{x}, 1)$ and the points $x_{\ell}$ and $x_{r}$, we conclude that there exists an agent $i \in \{\ell, r\}$ such that
\begin{equation}\label{eq:sc-triangle-lb}
    \EX_{y \sim \mech(\vec{x},1)}\left[ |y - x_i| \right] \geq \frac{|x_{\ell} - x_{r}|}{2} = \frac{1}{2}.
\end{equation}

We distinguish two cases based on whether the agent $i$ satisfying \eqref{eq:sc-triangle-lb} is $\ell$ or $r$. Suppose $i = r$, and consider a unilateral deviation by $r$ to $x'_{r} = 2/3$ (see Figure~\ref{fig:SC-LB-RAND-even}). Let $\vec{x}' = (x_{\ell}, x_{m_1}, x_{m_2}, x'_r)$ be the new profile after the deviation. Since $\mathcal{M}$ is strategyproof in expectation, and using \eqref{eq:sc-triangle-lb}, we have:
\begin{equation}\label{eq:sc-lb-sp-lb}
    \EX_{y \sim \mech(\vec{x}',1)}\left[|y - 1|\right] = \EX_{y \sim \mech(\vec{x}',1)}\left[|y - p_{r}|\right] 
    \geq \EX_{y \sim \mech(\vec{x},1)}\left[|y - p_{r}|\right] 
    = \EX_{y \sim \mech(\vec{x},1)}\left[|y - 1|\right] 
    \geq \frac{1}{2}.
\end{equation}

To proceed, we state and prove the following technical claim.

\begin{claim}\label{claim:cases-y-sc-lb}
    For every $y \in \mathbb{R}$, it holds that $\OPTSC(y, (\vec{x}',1)) \geq |y - 1|$.
\end{claim}

\begin{claimproof}
Note that for every $y \in \mathbb{R}$, the outlier for $\vec{x}'$ is, by construction, either the leftmost agent $\ell$ at $0$ or one of the two rightmost agents at $\frac{2}{3}$. In fact, the threshold $\tau \in \mathbb{R}$ for a given $y$ at which the outlier changes is such that $|\tau|=|\tau-\frac{2}{3}|$, or equivalently $\tau=\frac{1}{3}$. We can therefore write $G(y)$ as
\begin{equation*}
   G(y)= \begin{cases}
    |y| + |y-\frac{1}{3}| + |y-\frac{2}{3}|,  & \text{if } y < \tau, \\[6pt]
    |y-\frac{1}{3}| + 2|y-\frac{2}{3}|, & \text{if } y \ge \tau.
\end{cases}
\end{equation*}

We now show that $G(y) \geq |y-1|$ holds for every $y \in \mathbb{R}$. When $y < \tau=\frac{1}{3}$, we can use the triangle inequality to obtain that
\begin{equation*}
    G(y) = |y| + \Big |y - \frac{1}{3} \Big | + \Big |y-\frac{2}{3} \Big | \geq |y| +  |2y-1| \geq |y-1|.
\end{equation*}
Similarly, when $y \geq \tau=\frac{1}{3}$, we have that
\begin{equation*}
    G(y)= \Big | y-\frac{1}{3} \Big | + 2 \Big | y-\frac{2}{3} \Big | \geq \Big |y-\frac{1}{3} \Big | + \Big | 2y-\frac{4}{3} \Big |  \geq |y-1|.
\end{equation*}
The claim follows.
\end{claimproof}

We conclude that
\begin{align*}
    \EX_{y \sim \mech(\vec{x}',1)}\left[\OPTSC(y, (\vec{x}',1)) \right] 
    &\geq \EX_{y \sim \mech(\vec{x}',1)}\left[|y - 1| \right]  \geq \frac{1}{2},
\end{align*}
where the first inequality follows from Claim~\ref{claim:cases-y-sc-lb} and the final inequality follows from \eqref{eq:sc-lb-sp-lb}. 
This completes the proof for the case of $i = r$, since the optimal social cost for the profile $\vec{x}'$ with $z = 1$ is obtained by excluding agent $\ell$ and placing the facility at $x_{r}'=2/3$, i.e., $\OPTSC^{\star}(\vec{x}',1) = 1/3$.

The case $i = \ell$ is symmetric, by considering a unilateral deviation of agent $\ell$ from $\vec{x}$ to $1/3$.
\end{proof}

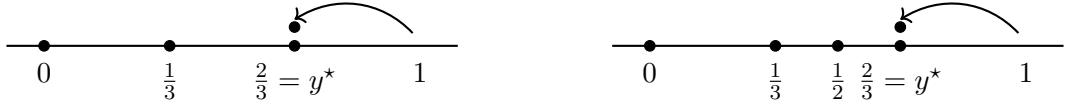
\begin{figure}[h]
\centering
\begin{subfigure}[b]{0.48\linewidth}
    \centering
    \begin{tikzpicture}
        \draw[thick] (-1,0) -- (5,0);
        \filldraw[black] (-0.5,0) circle (2pt) node[below, yshift=-0.1cm]{$0$};
        \filldraw[black] (4.5,0) circle (0.00002pt) node[below, yshift=-0.1cm]{$1$}; 
        \filldraw[black] (1.17,0) circle (2pt) node[below, yshift=-0.1cm]{$\frac{1}{3}$};
        \filldraw[black] (2.83,0) circle (2pt) node[below, yshift=-0.1cm]{$\frac{2}{3}=y^{\star}$}; 
        \filldraw[black] (2.83,0 .25) circle (2pt) node[below, yshift=0.2cm]{$ $};

         \draw[->, thick, bend right=40] (4.4,0.17) to (2.83,0.35);
    \end{tikzpicture}
    \caption{Deviation under $\vec{p}$ for agent $r$ to $\nicefrac{2}{3}$ with $z=1$.}
    \label{fig:SC-LB-RAND-even}
\end{subfigure}
\begin{subfigure}[b]{0.48\linewidth}
    \centering
    \begin{tikzpicture}
        \draw[thick] (-1,0) -- (5,0);
        \filldraw[black] (-0.5,0) circle (2pt) node[below, yshift=-0.1cm]{$0$};
        \filldraw[black] (4.5,0) circle (0.00002pt) node[below, yshift=-0.1cm]{$1$}; 
        \filldraw[black] (1.17,0) circle (2pt) node[below, yshift=-0.1cm]{$\frac{1}{3}$};
        \filldraw[black] (2.83,0) circle (2pt) node[below, yshift=-0.1cm]{$\frac{2}{3}=y^{\star}$};
        \filldraw[black] (2,0) circle (2pt) node[below, yshift=-0.1cm]{$\frac{1}{2}$}; 

        \filldraw[black] (2.83,0 .25) circle (2pt) node[below, yshift=0.2cm]{$ $};

         \draw[->, thick, bend right=40] (4.4,0.17) to (2.83,0.35);
    \end{tikzpicture}
    \caption{Deviation under $\vec{p}$ for agent $r$ to $\nicefrac{2}{3}$, with $z=2$.}
    \label{fig:SC-LB-RAND-odd}
\end{subfigure}
\caption{Profiles used in the proofs of Theorem \ref{thm:SC:LB-3over2-randomized} and Theorem \ref{thm:SC:LB-2-randomized-odd}.}
\label{fig:SC-LB-RAND}
\end{figure}

\begin{theorem} \label{thm:SC:LB-2-randomized-odd}
Let $n=5$ and $\outliers = 2$. Then, there is no randomized mechanism that is strategyproof in expectation and achieves an expected approximation guarantee better than $2$ for the utilitarian objective.  
\end{theorem}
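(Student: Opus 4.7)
My plan is to adapt the template of Theorem~\ref{thm:SC:LB-3over2-randomized} to the profile depicted in Figure~\ref{fig:SC-LB-RAND-odd}, sharpening the lower bound on the cost function to push the ratio from $3/2$ up to $2$. I would assume toward contradiction that $\mech$ is a randomized mechanism that is strategyproof in expectation and $(2-\varepsilon)$-approximate for some $\varepsilon \in (0, 1]$. I start with the truthful profile $\vec{p} = (0, 1/3, 1/2, 2/3, 1)$ and $z = 2$, labelling the agents $\ell, m_1, m, m_2, r$ in increasing order. Applying Lemma~\ref{lem:triangle-inequality} to $\mech(\vec{p}, 2)$ at the points $x_\ell = 0$ and $x_r = 1$, one of $\EX[|y|]$ and $\EX[|y-1|]$ is at least $1/2$; without loss of generality I assume $\EX_{y \sim \mech(\vec{p}, 2)}[|y-1|] \ge 1/2$, the reflected case being symmetric.

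Next I would consider the unilateral deviation of agent $r$ from $p_r = 1$ to $x_r' = 2/3$, yielding the profile $\vec{x}' = (0, 1/3, 1/2, 2/3, 2/3)$ shown in Figure~\ref{fig:SC-LB-RAND-odd}. A direct check shows $\OPTSC^{\star}(\vec{x}', 2) = 1/6$, attained at $y^{\star} = 2/3$ with $S^{\star} = \{m, m_2, r\}$. Strategyproofness in expectation for agent $r$ then gives $\EX_{y \sim \mech(\vec{x}', 2)}[|y - 1|] \ge \EX_{y \sim \mech(\vec{p}, 2)}[|y - 1|] \ge 1/2$.

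The heart of the argument is the sharp pointwise lower bound $\OPTSC(y, \vec{x}', 2) \ge |y - 1| - 1/6$ for every $y \in \mathbb{R}$. To prove this I would compute $G(y) := \OPTSC(y, \vec{x}', 2)$ in closed form on the three regimes induced by the optimal outlier set: $G(y) = 5/6 - 3y$ for $y \le 0$ (outliers $m_2, r$), $G(y) = 3y - 11/6$ for $y \ge 2/3$ (outliers $\ell, m_1$), and on the critical interval $[0, 2/3]$ the outlier set changes with $y$ at the breakpoints $y = 1/3$ and $y = 1/2$, yet each sub-piece collapses to the common formula $G(y) = 5/6 - y$. The inequality is then an equality on $[0, 2/3]$ and reduces to a trivial affine check on the outer regimes. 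Combining these pieces,
\[
\EX_{y \sim \mech(\vec{x}', 2)}[\OPTSC(y, \vec{x}', 2)] \ge \EX[|y - 1|] - \tfrac{1}{6} \ge \tfrac{1}{2} - \tfrac{1}{6} = \tfrac{1}{3} = 2 \cdot \OPTSC^{\star}(\vec{x}', 2),
\]
contradicting the $(2-\varepsilon)$-approximation assumption. The main obstacle is recognising that $G$ has constant slope $-1$ across the entire interval $[0, 2/3]$ in spite of the optimal outlier set varying across three sub-regions; this ``flatness'' is precisely what allows the $-1/6$ additive shift to close the gap between the $3/2$ bound from Theorem~\ref{thm:SC:LB-3over2-randomized} and the desired $2$.
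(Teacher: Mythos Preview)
Your proposal is correct and follows essentially the same approach as the paper's proof: the same starting profile $\vec{p}=(0,\tfrac13,\tfrac12,\tfrac23,1)$, the same deviation of $r$ to $\tfrac23$, and the same key pointwise bound $\OPTSC(y,\vec{x}',2)\ge |y-1|-\tfrac16$. Your verification of this bound via the explicit identity $G(y)=\tfrac56-y$ on $[0,\tfrac23]$ is a slightly cleaner alternative to the paper's triangle-inequality arguments on the three sub-intervals, but the overall architecture is identical.
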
 

\begin{proof}
Let $\mathcal{M}$ be any randomized mechanism that is strategyproof in expectation. Consider the profile with $N = \{\ell, m_{1}, m, m_{2}, r\}$ and preferred locations $p_{\ell} = 0$, $p_{m_1} = 1/3$, $p_m=1/2$, $p_{m_2} = 2/3$, and $p_r = 1$, as depicted in Figure~\ref{fig:SC-LB-RAND-odd}. Suppose the agents report truthfully, i.e., $\vec{x} = \vec{p}$. By applying Lemma~\ref{lem:triangle-inequality} to $\mech(\vec{x}, 2)$ and the points $x_{\ell}$ and $x_{r}$, we conclude that there exists an agent $i \in \{\ell, r\}$ such that
\begin{equation}\label{eq:sc-triangle-lb-odd}
    \EX_{y \sim \mech(\vec{x},2)}\left[ |y - x_i| \right] \geq \frac{|x_{\ell} - x_{r}|}{2} = \frac{1}{2}.
\end{equation}

We distinguish two cases based on whether the agent $i$ satisfying \eqref{eq:sc-triangle-lb-odd} is $\ell$ or $r$. Suppose $i = r$, and consider a unilateral deviation by $r$ to $x'_{r} = 2/3$ (see Figure~\ref{fig:SC-LB-RAND-odd}). Let $\vec{x}' = (x_{\ell}, x_{m_1}, x_{m}, x_{m_2}, x'_r)$ be the new profile after the deviation. Since $\mathcal{M}$ is strategyproof in expectation, and using \eqref{eq:sc-triangle-lb-odd}, we have
\begin{equation}\label{eq:sc-lb-sp-lb-odd}
    \EX_{y \sim \mech(\vec{x}',2)}\left[|y - 1|\right] = \EX_{y \sim \mech(\vec{x}',2)}\left[|y - p_{r}|\right] 
    \geq \EX_{y \sim \mech(\vec{x},2)}\left[|y - p_{r}|\right] 
    = \EX_{y \sim \mech(\vec{x},2)}\left[|y - 1|\right] 
    \geq \frac{1}{2}.
\end{equation}

To proceed, we state and prove the following technical claim.

\begin{claim}\label{claim:cases-y-sc-lb-odd}
    For every $y \in \mathbb{R}$, it holds that $\OPTSC(y, (\vec{x}',2)) \geq |y - 1|-\frac{1}{6}$.
\end{claim}

\begin{claimproof}
Note that for every $y \in \mathbb{R}$, the two outliers for $\vec{x}'$ are, by construction, the two leftmost agents $\ell$ and $m_1$ at $0$ and $1/3$, the leftmost agents $\ell$ at $0$ and the rightmost agent $r$ at $2/3$, or the two rightmost agents at $\frac{2}{3}$. 
We can therefore write $G(y)$ as
\begin{equation*}
   G(y)= \begin{cases}
    |y| + |y-\frac{1}{3}| + |y-\frac{1}{2}|,  & \text{if } y \le \frac{1}{3},  \\[6pt]
    |y-\frac{1}{3}| + |y-\frac{1}{2}| + |y-\frac{2}{3}|, & \text{if } \frac{1}{3} < y < \frac{1}{2}, \\[6pt]
    |y-\frac{1}{2}| + 2|y-\frac{2}{3}|, & \text{if } y \ge \frac{1}{2}.
\end{cases}
\end{equation*}

We now show that $G(y) \geq |y-1| - \frac{1}{6}$ holds for every $y \in \mathbb{R}$. When $y \le \frac{1}{3}$, we can use the triangle inequality to obtain that
\begin{equation*}
    G(y) = |y| + \Big | y-\frac{1}{3} \Big | + \Big | y-\frac{1}{2} \Big | \geq |y| + \Big |2y-\frac{5}{6} \Big | \geq \Big | y-\frac{5}{6} \Big | \geq |y-1| - 
    \frac{1}{6}.
\end{equation*}
Similarly, when $\frac{1}{3} < y < \frac{1}{2}$, we have that
\begin{equation*}
G(y)= \Big | y-\frac{1}{3} \Big | + \Big | y-\frac{1}{2} \Big | + \Big | y-\frac{2}{3} \Big | \ge \Big | y-\frac{1}{3} \Big | + \Big | 2y-\frac{7}{6} \Big | \ge 
\Big |y-\frac{5}{6} \Big | \geq |y-1|-\frac{1}{6}.
\end{equation*}
Finally, when $y \geq \frac{1}{2}$, we have that
\begin{equation*}
    G(y)= \Big | y-\frac{1}{2} \Big | + 2 \Big | y-\frac{2}{3} \Big | = \Big |y-\frac{1}{2} \Big | + \Big |2y-\frac{4}{3} \Big | \geq \Big |y-\frac{5}{6} \Big | \geq |y-1|-\frac{1}{6}.
\end{equation*}
The claim follows.
\end{claimproof}

We conclude that
\begin{align*}
    \EX_{y \sim \mech(\vec{x}',2)}\left[\OPTSC(y, (\vec{x}',2)) \right] \geq \EX_{y \sim \mech(\vec{x}',2)}\left[|y - 1| \right] -\frac{1}{6}  \geq \frac{1}{3},
\end{align*}
where the first inequality follows from Claim~\ref{claim:cases-y-sc-lb-odd} and by linearity of expectation, and the second inequality follows from \eqref{eq:sc-lb-sp-lb-odd}.
This completes the proof for the case $i = r$, since the optimal social cost for the profile $\vec{x}'$ with $z = 2$ is obtained by excluding agents $\ell$ and $m_1$ and placing the facility at $x_{r}'=2/3$, i.e., $\OPTSC^{\star}(\vec{x}',2) = \nicefrac{1}{6}$.

The case $i = \ell$ is symmetric, by considering a unilateral deviation of agent $\ell$ from $\vec{x}$ to $\nicefrac{1}{3}$.
\end{proof}

\section{Augmenting with Output Predictions} \label{sec:predictions}

In this section, we augment our base model with predictions and provide tight bounds for deterministic strategyproof mechanisms under the consistency-robustness dimension.

\subsection{Impossibility Result for the Egalitarian Objective} \label{sec:predictions-MC}

Augmenting facility location problems with a prediction of the optimal location is common in the literature of learning-augmented mechanism design.  
For the single facility location problem without outliers on the real line, \cite{agrawal2024} obtained a mechanism that is $1$-consistent and $2$-robust, achieving the best of both worlds.
However, when the problem with outliers is augmented with a prediction of the optimal location, achieving a consistency guarantee that is better than the optimal worst-case guarantee of $2$ (Theorem \ref{thm:MC:sp-2approx}) inevitably leads to an unbounded robustness guarantee, as shown in Theorem \ref{lem:MC:LB-prediction-opt} below.

\begin{restatable}{theorem}{fiveone} \label{lem:MC:LB-prediction-opt}
Let $\outliers \le \fl{\frac{n-1}{2}}$.
Then, there is no deterministic strategyproof mechanism augmented with a prediction $\hat{y}$ of the optimal location that is better than $2$-consistent and achieves a finite robustness for the egalitarian objective.
\end{restatable}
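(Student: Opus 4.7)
The plan is to mimic the structure of the lower-bound proof for the non-prediction setting (Lemma \ref{lemma:lb-mc-det}) while exploiting a key observation: since the prediction $\hat{y}$ is an exogenous input to the mechanism rather than one reported by the agents, for every fixed value of $\hat{y}$ the induced map $\vec{x} \mapsto \mathcal{M}(\vec{x}, z, \hat{y})$ is itself a deterministic strategyproof mechanism on $\mathbb{R}^n$. Consequently, Moulin's characterization (Theorem \ref{thm:characterization}) and its Corollary \ref{lem:SP-property-M} apply to this restricted map with $\hat{y}$ held constant. I would assume toward a contradiction the existence of a strategyproof mechanism $\mathcal{M}$ that is $(2-\varepsilon)$-consistent for some $\varepsilon > 0$ and has finite robustness $\beta$.

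First, I would consider the profile $\vec{x}^0$ from Figure \ref{fig:MC-LB-2-a}: $z$ agents at $0$, $n - 2z$ agents at $1/2$, and $z$ agents at $1$, together with the prediction $\hat{y} = 1/4$. This $\hat{y}$ is perfect since the optimal egalitarian cost is $1/4$ (achieved symmetrically at $1/4$ and $3/4$). The assumed $(2-\varepsilon)$-consistency forces $\OPTMC(\mathcal{M}(\vec{x}^0, z, 1/4)) \le (2-\varepsilon)/4$, which (by the same interval analysis as in Lemma \ref{lemma:lb-mc-det}) confines the output $y^0$ to $[\varepsilon/4,\, 1/2 - \varepsilon/4] \cup [1/2 + \varepsilon/4,\, 1 - \varepsilon/4]$; I would handle the first case explicitly, with the second being symmetric. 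Then, holding the prediction fixed at $\hat{y} = 1/4$, I would sequentially move the $z$ rightmost agents from $1$ to $1/2$. At every step, the relocated agent $i$ satisfies $x_i = 1 > y^0$ and its new position $1/2$ lies in $[y^0, 1]$, so by Corollary \ref{lem:SP-property-M} applied to the strategyproof mechanism $\mathcal{M}(\cdot, z, 1/4)$ the output remains equal to $y^0$ throughout.

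In the resulting profile ($z$ agents at $0$ and $n-z$ agents at $1/2$), the optimal non-outlier set is the cluster at $1/2$ and $\OPTMC^{\star} = 0$, yet $y^0 \in [\varepsilon/4,\, 1/2 - \varepsilon/4]$ is at distance at least $\varepsilon/4 > 0$ from each agent at $1/2$, so $\OPTMC(y^0) > 0$ and the ratio is unbounded, contradicting finite robustness. The symmetric case $y^0 \in [1/2 + \varepsilon/4,\, 1 - \varepsilon/4]$ is handled by moving the $z$ leftmost agents from $0$ to $1/2$: since $y^0 > 1/2$ and $1/2 \in [0, y^0]$, Corollary \ref{lem:SP-property-M} again pins the output at $y^0$, and the same argument yields unbounded robustness. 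The main subtlety is simply to be explicit that the Corollary is invoked with respect to the strategyproof mechanism \emph{parameterized by a fixed prediction}; once this is observed, the rest of the proof is a clean replay of the prediction-free lower bound, with consistency used only at the initial profile (where $\hat{y} = 1/4$ is genuinely optimal) and robustness contradicted at the final profile (where $\hat{y} = 1/4$ is arbitrarily off).
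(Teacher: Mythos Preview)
Your proposal is correct and follows essentially the same approach as the paper: start from the three-cluster profile with a perfect prediction, use $(2-\varepsilon)$-consistency to pin the output to one of two intervals, then move agents one at a time toward the center while invoking Corollary~\ref{lem:SP-property-M} (with the prediction held fixed) to freeze the output, and finally reach a profile with optimal cost zero where the frozen output has positive cost. The only cosmetic differences are that you fix $\hat{y}=\tfrac14$ from the outset and treat the two output-interval cases by symmetric agent moves, whereas the paper leaves the choice of $\hat{y}\in\{\tfrac14,\tfrac34\}$ open and does a single w.l.o.g.\ on the interval; your explicit remark that Corollary~\ref{lem:SP-property-M} applies to $\mathcal{M}(\cdot,z,\hat{y})$ for fixed $\hat{y}$ is a nice clarification.
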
 

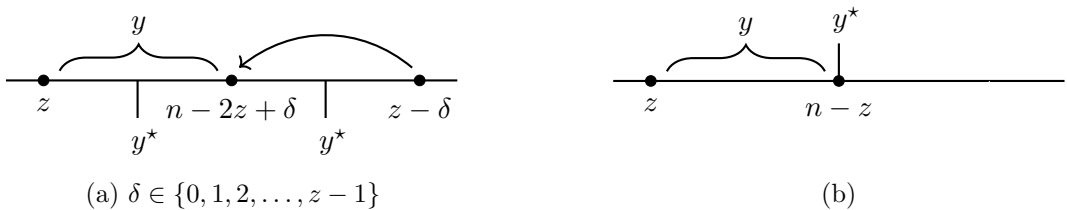
\begin{figure}[b]
\centering
\begin{subfigure}[b]{0.48\linewidth}
    \centering
    \begin{tikzpicture}
        \draw[thick] (-1,0) -- (5,0);
        \filldraw[black] (-0.5,0) circle (2pt) node[below, yshift=-0.1cm]{$\outliers$};
        \filldraw[black] (4.5,0) circle (2pt) node[below, yshift=-0.1cm]{$\outliers - \delta$}; 
        \filldraw[black] (2,0) circle (2pt) node[below, yshift=-0.1cm]{$n - 2 \outliers + \delta$}; 
        \draw[thick] (0.75,0) -- (0.75,-0.5) node[below, xshift=0.1cm]{$y^{\star}$};
        \draw[thick] (3.25,0) -- (3.25,-0.5) node[below, xshift=0.1cm]{$y^{\star}$};
        \draw [thick, decorate,decoration={brace,amplitude=10pt},xshift=0.4pt,yshift=0.8pt](-0.3,0.1) -- (1.8,0.1) node[black,midway,yshift=0.6cm] {$\alg$};
         \draw[->, thick, bend right=40] (4.4,0.17) to (2.1,0.17);
    \end{tikzpicture}
    \caption{$\delta \in \{0, 1, 2, \ldots, \outliers -1\}$ }
    \label{fig:MC-LB-2-b2}
\end{subfigure}
\begin{subfigure}[b]{0.48\linewidth}
    \centering
    \begin{tikzpicture}
        \draw[thick] (-1,0) -- (5,0);
        \filldraw[black] (-0.5,0) circle (2pt) node[below, yshift=-0.1cm]{$\outliers$};
        \filldraw[black] (2,0) circle (2pt) node[below, yshift=-0.1cm]{$n - \outliers$}; 
        \draw[thick] (2,0) -- (2,0.5) node[above, xshift=0.1cm]{$y^{\star}$};
        \draw[color=white] (4,0) -- (4,-0.5) node[color = white, below, xshift=0.1cm]{$y^{\star}$}; 
        \draw [thick, decorate,decoration={brace,amplitude=10pt},xshift=0.4pt,yshift=0.8pt](-0.3,0.1) -- (1.8,0.1) node[black,midway,yshift=0.6cm] {$\alg$};
    \end{tikzpicture}
    \caption{ }
    \label{fig:MC-LB-2-c2}
\end{subfigure}
\caption{Profiles used in the proof of Theorem \ref{lem:MC:LB-prediction-opt}.}
\label{fig:MC-LB-22}
\end{figure}

\begin{proof}
Let $\outliers \le \fl{\frac{n-1}{2}}$.
Towards a contradiction, assume that there exists a strategyproof mechanism $\mathcal{M}$ augmented with $\hat{y}$ that is $(2 - \varepsilon)$-consistent, with $\varepsilon>0$, and $\beta$-robust with $\beta < \infty$. 
Consider the profiles depicted in Figure \ref{fig:MC-LB-22}. 

The profile in Figure \ref{fig:MC-LB-2-b2} with $\delta =0$ has 3 clusters of locations: there are $z$ agents with a location of 0, $n - 2 \outliers \ge 1$ agents with a location of $\frac{1}{2}$ and $z$ agents with a location of 1.
Note that all locations from either the leftmost cluster at 0 or the rightmost cluster at 1 can be disregarded in the objective function. 
Therefore, there are two optimal locations $y^{\star}$ at $\frac{1}{4}$ and $\frac{3}{4}$ with an egalitarian social cost of $\frac{1}{4}$. 
Consider the case of a perfect prediction, i.e., $\hat{y} = \frac{1}{4}$ or $\hat{y} = \frac{3}{4}$. 
In either case, as $\mathcal{M}$ is $(2 - \varepsilon)$-consistent, it must be that $\mathcal{M}$ places the facility $\alg$ in $[\frac{\varepsilon}{4}, \frac{1}{2} - \frac{\varepsilon}{4}]$ or in $[\frac{1}{2} + \frac{\varepsilon}{4}, 1 - \frac{\varepsilon}{4}]$. Assume w.l.o.g. that $\alg \in [\frac{\varepsilon}{4}, \frac{1}{2} - \frac{\varepsilon}{4}]$, depicted by the curly bracket in Figure \ref{fig:MC-LB-2-b2}. 

Now, consider the profile in Figure \ref{fig:MC-LB-2-b2} with $\delta =1$: one agent with a location of 1 moved to $\frac{1}{2}$.
Note that as $\mathcal{M}$ is strategyproof, the location $\alg$ remains unchanged by Corollary \ref{lem:SP-property-M}.
The same reasoning holds when considering $\delta = 2, 3, \ldots, \outliers -1$ consecutively.

Finally, consider the profile in Figure \ref{fig:MC-LB-2-c2} in which all $\outliers$ agents with a location of 1 have moved to $\frac{1}{2}$. 
As the left cluster now contains $\outliers$ locations, all locations from this cluster can be disregarded, leading to a unique optimal location $y^{\star}$ at $\frac{1}{2}$ with an egalitarian social cost of 0.
However, as $\mathcal{M}$ is strategyproof the location $\alg \in [\frac{\varepsilon}{4}, \frac{1}{2} - \frac{\varepsilon}{4}]$ still remains unchanged by Corollary \ref{lem:SP-property-M}, and so $\alg$ has a positive cost.
Therefore, this contradicts robustness if $\hat{y} = \frac{3}{4}$ and consistency if $\hat{y} = \frac{1}{4}$, concluding the proof. 
\end{proof}

Additionally, augmenting the problem with other natural predictions leads to the same negative result.  
One example is a prediction $\hat{\outliers}_{\ell}$ that indicates how many of the leftmost (smallest) $x_i$'s are outliers (which implies how many of the rightmost (largest) $x_i$'s are outliers, i.e., $\hat{\outliers}_{r} = \outliers - \hat{\outliers}_{\ell}$). 
Another example is a prediction indicating which $\outliers$ agents are outliers. 

\subsection{Positive Results for the Utilitarian Objective} \label{sec:predictions-SC}

Without predictions, we have shown in Lemma \ref{lem:approx-unbounded-z>=halfn} that no deterministic strategyproof mechanism exists if $z \ge \nicefrac{n}{2}$. 
When augmenting the problem with a prediction $\hat{y}$ of the optimal location, this inapproximability persists for the robustness guarantee. 
In fact, if the number of outliers is \textit{greater than one-third} of the total number of agents, no deterministic strategyproof mechanism augmented with a prediction $\hat{y}$ can be $1$-consistent and achieve a finite robustness.
This follows from the more general lemma below, which considers a relaxed consistency guarantee of $\frac{\outliers}{n - 2\outliers} > 1$ for $\outliers > 1 $ and $n \in \{2\outliers+1, 2\outliers+2, \ldots, 3\outliers-1 \}$.

\begin{restatable}{theorem}{fivetwo}  \label{lem:SC:LB-prediction-z>1-unbounded}
Let $\outliers > 1 $ and $n \in \{2\outliers+1, 2\outliers+2, \ldots, 3\outliers-1 \}$.
Then, there is no deterministic strategyproof mechanism augmented with a prediction $\hat{y}$ of the optimal location that is better than $\frac{\outliers}{n - 2\outliers}$-consistent and achieves a finite robustness for the utilitarian objective.
\end{restatable}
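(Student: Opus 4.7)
My plan is to adapt the sequence-of-profiles technique used for Theorem~\ref{lem:MC:LB-prediction-opt} to the utilitarian cost. The idea is to exhibit a single profile on which any sufficiently consistent strategyproof mechanism is forced, purely by the consistency bound, to place the facility strictly bounded away from the midpoint. Strategyproofness will then let me collapse the profile into a new one whose true optimum has social cost zero, forcing an infinite ratio and contradicting finite robustness.

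The test profile $\vec{x}^1$ will consist of three clusters: $z$ agents at $0$, $n-2z$ agents at $\tfrac{1}{2}$, and $z$ agents at $1$. Because $z$ agents can be discarded as outliers, and because the hypothesis $n \le 3z - 1$ guarantees $n - 2z < z$, both $y = 0$ and $y = 1$ are optimal with $\OPTSC^{\star}(\vec{x}^1, z) = (n - 2z)/2$. I pick the perfect prediction $\hat{y} = 0$ and suppose for contradiction that $\mech$ is $\bigl(\tfrac{z}{n - 2z} - \varepsilon\bigr)$-consistent for some $\varepsilon > 0$ and is $\beta$-robust for some $\beta < \infty$. A short piecewise computation of $\OPTSC(y, \vec{x}^1, z)$ on $[0, \tfrac{1}{2}]$ (where the $z$ agents farthest from $y$ are exactly the right cluster) gives the strictly increasing function $\OPTSC(y, \vec{x}^1, z) = (n-2z)/2 + y(3z - n)$, whose value at $y = \tfrac{1}{2}$ equals $z/2$. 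Hence the consistency bound forces $\mech(\vec{x}^1, z, 0) = \alg$ to satisfy $|\alg - \tfrac{1}{2}| \ge \delta$ for the explicit gap $\delta = \varepsilon(n - 2z)/\bigl(2(3z - n)\bigr) > 0$.

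The argument then splits on which side of $\tfrac{1}{2}$ the output lies. In the case $\alg < \tfrac{1}{2}$, I iteratively invoke Corollary~\ref{lem:SP-property-M} on the $z$ agents currently located at $1$: since $1 > \alg$ and $\tfrac{1}{2} \in [\alg, 1]$, each such agent can be moved from $1$ to $\tfrac{1}{2}$ without changing the mechanism's output (the prediction $\hat{y} = 0$ stays fixed, so strategyproofness of $\mech(\cdot, z, 0)$ applies at every step). The resulting profile $\vec{x}^3$ has $z$ agents at $0$ and $n - z$ agents at $\tfrac{1}{2}$; discarding the $z$ leftmost agents as outliers gives $y^{\star} = \tfrac{1}{2}$ and $\OPTSC^{\star}(\vec{x}^3, z) = 0$. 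But $\mech(\vec{x}^3, z, 0) = \alg < \tfrac{1}{2}$ still has strictly positive cost on $\vec{x}^3$, contradicting finite robustness. The case $\alg > \tfrac{1}{2}$ is symmetric: I instead migrate the $z$ left-cluster agents from $0$ to $\tfrac{1}{2}$ (now each with $x_i = 0 < \alg$ and $\tfrac{1}{2} \in [0, \alg]$), obtaining the same contradiction.

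The step I expect to be the main obstacle is the consistency computation, specifically verifying that the worst ratio $z/(n - 2z)$ is attained precisely at $y = \tfrac{1}{2}$ and that the strict inequality in the consistency bound translates into an explicit positive gap $\delta = \delta(\varepsilon)$. Once the piecewise expression for $\OPTSC(y, \vec{x}^1, z)$ is pinned down, the rest of the argument — the iterated application of Corollary~\ref{lem:SP-property-M} and the final observation that $\OPTSC(\alg, \vec{x}^3, z) > 0$ whenever $\alg \ne \tfrac{1}{2}$ — is routine.
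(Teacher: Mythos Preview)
Your proof is correct. The symmetric three–cluster profile works: since $n<3z$ the cost $\OPTSC(y,\vec{x}^1,z)$ indeed rises from $(n-2z)/2$ at the endpoints to $z/2$ at $y=\tfrac12$, so any mechanism that is strictly better than $\tfrac{z}{n-2z}$–consistent must output some $y\neq\tfrac12$, and your two symmetric collapse arguments via Corollary~\ref{lem:SP-property-M} then force an infinite ratio on a zero–optimum profile. (The explicit gap $\delta$ is harmless but not actually needed; $y\neq\tfrac12$ already suffices for the case split.)

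The paper takes a slightly different route. Instead of a symmetric profile with two optima, it uses an \emph{asymmetric} profile with clusters of sizes $z$, $n-2z$, $z$ at positions $0$, $zd$, $(z+1)d$ and prediction $\hat y=(z+1)d$. Here the optimum is unique (at $(z+1)d$, cost $(n-2z)d$), and a one–line monotonicity check shows that any output with cost strictly below $z\cdot d$ must satisfy $y>zd$. This eliminates the case split: one collapses only the left cluster to the middle and lands directly on a zero–optimum profile with $y>zd$. Your symmetric construction is the natural adaptation of the egalitarian argument in Theorem~\ref{lem:MC:LB-prediction-opt} and is perfectly valid; the paper's asymmetric instance just buys a shorter write-up by forcing the output to one side from the start.
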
 

\begin{figure}[b]
\centering
\begin{subfigure}[b]{0.4\linewidth}
    \centering
    \begin{tikzpicture}
        \draw[thick] (-0.5,0) -- (4.5,0);
        \filldraw[black] (0,0) circle (2pt) node[below, yshift=-0.1cm]{$\outliers - \delta$};
        \filldraw[black] (2.6,0) circle (2pt) node[below, yshift=-0.1cm]{$n - 2 \outliers + \delta$};
        \filldraw[black] (3.9,0) circle (2pt) node[below, yshift=-0.1cm]{$\outliers$}; 
        \draw[thick] (3.9,0) -- (3.9,0.5) node[above, xshift=0.0cm]{$\hat{y}$};
        \draw[->, thick, bend left=40] (0.1,0.17) to (2.5,0.17);
    \end{tikzpicture}
    \caption{$\delta \in \{0, 1, 2, \ldots, \outliers -1\}$ }
    \label{fig:SC-LB-pred-z>1-unbounded-b}
\end{subfigure}
\begin{subfigure}[b]{0.4\linewidth}
    \centering
    \begin{tikzpicture}
        \draw[thick] (-0.5,0) -- (4.5,0);
        \filldraw[black] (2.6,0) circle (2pt) node[below, yshift=-0.1cm]{$n - \outliers$};
        \filldraw[black] (3.9,0) circle (2pt) node[below, yshift=-0.1cm]{$\outliers$}; 
        \draw[thick] (2.6,0) -- (2.6,0.5) node[above, xshift=0.1cm]{$\opt$};
        \draw[thick] (3.9,0) -- (3.9,0.5) node[above, xshift=0.0cm]{$\hat{y}$};
    \end{tikzpicture}
    \vspace{0.05cm}
    \caption{ }
    \label{fig:SC-LB-pred-z>1-unbounded-c}
\end{subfigure}
\caption{Profiles used in the proof of Theorem \ref{lem:SC:LB-prediction-z>1-unbounded}.}
\label{fig:SC-LB-pred-z>1-unbounded}
\end{figure}
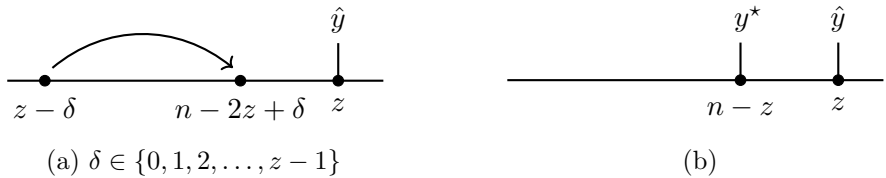

\begin{proof}
Towards a contradiction, assume that there exists a strategyproof mechanism $\mathcal{M}$ that is better than $\frac{\outliers}{n - 2\outliers}$-consistent and achieves a finite robustness.
Consider the profiles depicted in Figure \ref{fig:SC-LB-pred-z>1-unbounded}.

The profile in Figure \ref{fig:SC-LB-pred-z>1-unbounded-b} with $\delta =0$ has 3 clusters of locations: there are $z$ agents with a location of 0, $n - 2 \outliers \ge 1$ agents with a location of $z \cdot d >0$ and $z$ agents with a location of $(z +1)d$. 
As $n < 3\outliers$, there is one optimal location $\opt$ at $(z +1)d$ with a utilitarian social cost of $(n-2\outliers)d$. 
Consider a perfect prediction $\hat{y}$.
As $\mathcal{M}$ is better than $\frac{\outliers}{n - 2\outliers}$-consistent, it must be that $\mathcal{M}$ places the facility at $y$ such that the social cost is smaller than $z \cdot d$.
To achieve this, $\mathcal{M}$ must place the facility at $y$ such that $\alg > z \cdot d$.

Now, consider the profile in Figure \ref{fig:SC-LB-pred-z>1-unbounded-b} with $\delta =1$ in which one agent with a location of 0 moved to $z \cdot d$. 
As $\mathcal{M}$ is strategyproof, it must be that $\alg$ remains unchanged by Corollary \ref{lem:SP-property-M}.
The same reasoning holds when considering $\delta = 2, 3, \ldots, \outliers -1$ consecutively. 

Finally, consider the profile in Figure \ref{fig:SC-LB-pred-z>1-unbounded-c} in which all $z$ agents with a location of 0 moved to $z \cdot d$.
In this case, there is one optimal location $\opt$ at $z \cdot d$ with a utilitarian social cost of $0$. 
So in order for $\mathcal{M}$ to achieve a bounded robustness, note that $\hat{y} \neq \opt$, it must place the facility at $\opt$.
However, this contradicts strategyproofness of $\mathcal{M}$ by Corollary \ref{lem:SP-property-M} as $y > z \cdot d$, concluding the proof. 
\end{proof}

Note that for $\outliers > 1$ and $n \in \{ \outliers +2, \outliers +3, \ldots, 2\outliers \}$, it follows from Theorem \ref{lem:approx-unbounded-z>=halfn} that there is no deterministic strategyproof mechanism augmented with a prediction $\hat{y}$ that is $1$-consistent and achieves a finite robustness.
Finally, if the number of outliers is at most one-third of the total number of agents, we show that no deterministic strategyproof mechanism augmented with a prediction $\hat{y}$ can be $1$-consistent and achieve a specific robustness depending on the number of agents and outliers. 

\begin{restatable}{theorem}{fivethree}  \label{lem:SC:LB-prediction-z>1-bounded}
Let $n \ge 3 \outliers$.
Then, there is no deterministic strategyproof mechanism augmented with a prediction $\hat{y}$ of the optimal location that is $1$-consistent and better than $f(n,\outliers)$-robust for the utilitarian objective with:
\begin{equation} \label{eq:SC-LB-robustness}
 f(n,\outliers) = \begin{cases}
        \frac{n+\outliers-1}{n-3\outliers+1}, & \text{if $n-\outliers$ is odd,} \\
        \frac{n+\outliers-2}{n-3\outliers+2}, & \text{otherwise.}
    \end{cases} 
\end{equation}
\end{restatable}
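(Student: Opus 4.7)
My plan is to combine Moulin's characterization of strategyproof mechanisms (Theorem~\ref{thm:characterization}) with the $1$-consistency and finite-robustness conditions, following the paradigm of Theorem~\ref{lem:SC:LB-prediction-z>1-unbounded}. After translating so the prediction is $\hat{y}=0$, I would observe that $\mathcal{M}(\cdot,z,0)$ is a generalized median with $n+1$ phantom points. Applying finite robustness to profiles of the form ``$n-z$ agents at $a$, $z$ agents at $b$''---whose optimum is located uniquely at $a$ with zero cost for arbitrary $a$---forces $z+1$ phantoms to $-\infty$ and another $z+1$ to $+\infty$; the mechanism would therefore reduce to the $(n-z)$-th order statistic of the $n$ agents and the remaining $n-2z-1$ free phantoms.

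Next, I would use $1$-consistency to pin down some of these free phantoms. I would consider a \emph{consistency profile} with three clusters at $(0, \nicefrac{1}{2}, 1+\varepsilon)$ of sizes $(a_0, b_0, c_0)$ summing to $n$, for small $\varepsilon > 0$. The sizes would be chosen so that its unique optimum lies at location $0$; for instance $(3,1,3)$ for $(n,z)=(7,2)$ and $(4,3,2)$ for $(n,z)=(9,2)$. The $1$-consistency condition would then require the $(n-z)$-th order statistic to equal $0$, which (after accounting for the $a_0$ agents at $0$) forces at least $n-z-a_0$ of the free phantoms to lie at values $\le 0$.

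Finally, I would construct the bad profile $\vec{x}^{\mathrm{bad}}$ with three clusters $(a,b,c)$ at $(0,\nicefrac{1}{2},1)$, chosen so that (i) its unique optimum lies at location $1$ with cost $(m-c)/2$ where $m = n-z$, and (ii) the $a$ agents at $0$ together with the previously pinned $\ge n-z-a_0$ phantoms are precisely enough to push the $(n-z)$-th order statistic to $\nicefrac{1}{2}$, regardless of where the remaining free phantoms sit. This would yield mechanism cost $(m-b)/2$ at $\nicefrac{1}{2}$ and thus ratio $(m-b)/(m-c)$, which is made to equal $f(n,z)$ by appropriate integer choices of $(a,b,c)$; for example $(2,1,4)$ for $(n,z)=(7,2)$ and $(2,2,5)$ for $(n,z)=(9,2)$. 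The main obstacle will be verifying that integer-valued cluster sizes with the required uniqueness and counting properties exist for every $(n,z)$ with $n \ge 3z$, which requires a careful parity-aware case analysis distinguishing $n-z$ odd from $n-z$ even---this parity split is exactly what produces the two cases in~\eqref{eq:SC-LB-robustness}.
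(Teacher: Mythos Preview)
Your plan is sound and would yield the bound, but it follows a genuinely different route from the paper. The paper does not invoke Moulin's characterization here at all: it works with a single four-cluster profile (positions $0,\,d_1,\,d_1{+}d_2,\,d_1{+}d_2{+}d_3$ with cluster sizes roughly $z{+}\theta,\,z,\,1{+}\theta,\,z$, where $\theta=\lfloor(n-z)/2\rfloor-z$) whose unique optimum coincides with $\hat y=d_1{+}d_2$, so $1$-consistency pins the output there; it then applies Corollary~\ref{lem:SP-property-M} repeatedly, moving the leftmost cluster to $d_1$ one agent at a time while the output stays frozen, until the resulting three-cluster profile has optimum $d_1$ and the frozen output realises exactly the ratio $f(n,z)$. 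The parity split arises simply from whether the second cluster carries $z$ or $z{-}1$ agents. Your phantom-based argument is more structural and nicely decouples the robustness and consistency constraints, but it trades the paper's single deviation chain for the integer-existence verification you flag at the end: one must exhibit, for every $(n,z)$ with $n\ge 3z$, a consistency profile with unique optimum at $0$ and $a_0=\lceil(n-z+1)/2\rceil$, and a bad profile with $(a,b,c)=(z,\,\lfloor(n-3z+2)/2\rfloor,\,\lceil(n+z-1)/2\rceil)$ and unique optimum at $1$. These do exist, but checking them case by case is more laborious than the paper's direct construction.
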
 

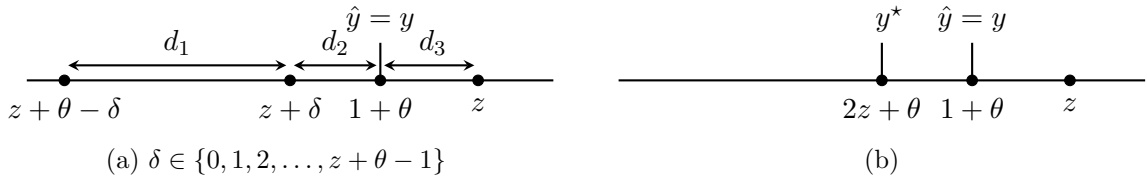
\begin{figure}[b]
\centering
\begin{subfigure}[b]{0.48\linewidth}
    \centering
    \begin{tikzpicture}
        \draw[thick] (-0.5,0) -- (6.5,0);
        \filldraw[black] (0,0) circle (2pt) node[below, yshift=-0.1cm]{$\outliers + \theta - \delta$};
        \filldraw[black] (3,0) circle (2pt) node[below, yshift=-0.1cm]{$\outliers + \delta$};
        \filldraw[black] (4.2,0) circle (2pt) node[below, yshift=-0.1cm]{$1 + \theta$}; 
        \filldraw[black] (5.5,0) circle (2pt) node[below, yshift=-0.1cm]{$\outliers$}; 
        \draw[thick] (4.2,0) -- (4.2,0.5) node[above, xshift=0cm]{$\hat{y} = \alg$};
        \draw[stealth-stealth,thick](0.05,0.2) -- (2.95,0.2)  node[color=black, above, xshift=-1.45cm]{$d_1$};
        \draw[stealth-stealth,thick](3.05,0.2) -- (4.15,0.2)  node[color=black, above, xshift=-0.55cm]{$d_2$};
        \draw[stealth-stealth,thick](4.25,0.2) -- (5.45,0.2)  node[color=black, above, xshift=-0.55cm]{$d_3$};
    \end{tikzpicture}
    \caption{$\delta \in \{0,1, 2, \ldots, \outliers + \theta -1\}$}
    \label{fig:SC-LB-pred-z>1-unbounded-n-big-a}
\end{subfigure}
\begin{subfigure}[b]{0.48\linewidth}
    \centering
        \begin{tikzpicture}
        \draw[thick] (-0.5,0) -- (6.5,0);
        \filldraw[black] (3,0) circle (2pt) node[below, yshift=-0.1cm]{$2\outliers + \theta$};
        \filldraw[black] (4.2,0) circle (2pt) node[below, yshift=-0.1cm]{$1 + \theta$}; 
        \filldraw[black] (5.5,0) circle (2pt) node[below, yshift=-0.1cm]{$\outliers$}; 
        \draw[thick] (4.2,0) -- (4.2,0.5) node[above, xshift=0cm]{$\hat{y} = \alg$};
        \draw[thick] (3,0) -- (3,0.5) node[above, xshift=0.1cm]{$\opt$};
    \end{tikzpicture}
    \caption{ }
    \label{fig:SC-LB-pred-z>1-unbounded-n-big-b}
\end{subfigure}
\caption{Profiles with $n-z$ odd used in the proof of Theorem \ref{lem:SC:LB-prediction-z>1-bounded}.}
\label{fig:SC-LB-pred-z>1-unbounded-n-big}
\end{figure}

\begin{proof}
Towards a contradiction, assume that there exists a strategyproof mechanism $\mathcal{M}$ augmented with a prediction $\hat{y}$ that is $1$-consistent and better than $f(n,\outliers)$-robust as in \eqref{eq:SC-LB-robustness}.
Consider the profiles depicted in Figure \ref{fig:SC-LB-pred-z>1-unbounded-n-big}.

First, assume that $n-\outliers$ is odd and consider the profile in Figure \ref{fig:SC-LB-pred-z>1-unbounded-n-big-a} with $\delta =0$, $d_1 > d_2+d_3$ and $0 < d_2 < d_3$. 
There are $\frac{n-\outliers-1}{2} = \outliers + \theta$, with $\theta \ge 0$, agents with a location of 0, $z$ agents with a location of $d_1$, $1 + \theta$ agents with a location of $d_1+d_2$, and $\outliers$ agents with a location of $d_1+d_2+d_3$. 
It is easy to see that the optimal location of the facility is at $d_1 + d_2$, as the optimal location is either at $d_1$ or at $d_1 + d_2$ by Lemma \ref{fact:optRange}. 
Consider a perfect prediction $\hat{y}$ located at $d_1+d_2$.
As $\mathcal{M}$ is 1-consistent, it must be that $\mathcal{M}$ places the facility at $y = \hat{y}$, as depicted in Figure \ref{fig:SC-LB-pred-z>1-unbounded-n-big-a}.

Now, consider the profile in Figure \ref{fig:SC-LB-pred-z>1-unbounded-n-big-a} with $\delta =1$ in which one agent with a location of 0 moved to $d_1$. 
As $\mathcal{M}$ is strategyproof, it must be that $\alg$ remains unchanged by Corollary \ref{lem:SP-property-M}.
The same reasoning holds when considering $\delta = 2, 3, \ldots, \outliers + \theta -1$ consecutively. 

Finally, consider the profile in Figure \ref{fig:SC-LB-pred-z>1-unbounded-n-big-b} in which all $z + \theta$ agents with a location of 0 moved to $d_1$. 
In this case, the optimal location $\opt$ is located at $d_1$ with a utilitarian social cost of $(1 + \theta) d_2$.  
Again as $\mathcal{M}$ is strategyproof, it must be that $\alg$ remains unchanged by Corollary \ref{lem:SP-property-M}.
The minimum social cost of $\alg$ is $(2\outliers + \theta) d_2$, as $d_3 > d_2$. 
And as $\opt \neq \hat{y}$, this leads to a robustness guarantee of $\frac{(2\outliers + \theta) d_2}{(1 + \theta) d_2} = \frac{2\outliers + \theta}{1 + \theta} = \frac{n+\outliers-1}{n-3\outliers+1}$, as $\theta = \frac{n-\outliers-1}{2} - \outliers$. This contradicts that $\mathcal{M}$ is better than $f(n,\outliers)$-robust as in \eqref{eq:SC-LB-robustness}.

If $n-\outliers$ is even\footnote{Note that for $z=1$, $f(n,z) =1$ which is a trivial bound on the robustness guarantee.}, consider the same profile in Figure \ref{fig:SC-LB-pred-z>1-unbounded-n-big-a} with $\delta =0$, but with $z - 1$ agents with a location of $d_1$ and with $\theta = \frac{n - \outliers}{2} - \outliers \ge 0$.
The optimal location is still located at $d_1+d_2$ and given a perfect prediction $\hat{y} = d_1+d_2$, $\mathcal{M}$ must still place the facility at $y = \hat{y}$ in order to be $1$-consistent. 
Again, moving all $z + \theta$ agents with a location of 0 consecutively to $d_1$, this leads to the profile depicted in Figure \ref{fig:SC-LB-pred-z>1-unbounded-n-big-b} with $2z + \theta -1$ agents with a location at $d_1$. And as $\mathcal{M}$ is strategyproof, $\mathcal{M}$ must again place the facility $y$ at $d_1 + d_2$ by Corollary \ref{lem:SP-property-M}. 
In this case, the optimal location $\opt$ is located at $d_1$ with a utilitarian social cost of $(1 + \theta) d_2$, and the minimum social cost of $\alg$ is $(2\outliers + \theta -1) d_2$, as $d_3 > d_2$.
As $\opt \neq \hat{y}$, this leads to a robustness guarantee of $\frac{(2\outliers + \theta -1) d_2}{(1 + \theta) d_2} = \frac{2\outliers + \theta -1}{1 + \theta} = \frac{n+\outliers-2}{n-3\outliers+2}$, resulting in a contradiction. This concludes the proof. 
\end{proof}

\subsubsection{Mechanism \InRange}

There is a subtlety why a strategyproof mechanism cannot achieve bounded robustness and $1$-consistency for $n < 3z$. 
Consider a profile $\vec{x}$ with 5 locations and $z = 2$ outliers.
By Lemma \ref{fact:optRange}, there exists an optimal location that is equal to $x_{\sigma(2)}$, $x_{\sigma(3)}$ or $x_{\sigma(4)}$. However, if there is only one optimal solution $\opt = x_{\sigma(4)}$, then $\sigma(2) \notin S^{\star}(\vec{x},z)$, making it potentially impossible to relate the optimal objective value to the objective value when placing the facility at $x_{\sigma(2)}$ (when $x_{\sigma(2)}$ is much smaller than $x_{\sigma(3)}$).
Whereas if the number of agents increases by 1 to $n=6=3z$, there always exists an optimal location that is equal to either $x_{\sigma(3)}$ or $x_{\sigma(4)}$. 
Whichever of these two locations is optimal, the other location will be regarded in the optimal set of non-outliers, i.e., $\sigma(3), \sigma(4) \in S^{\star}(\vec{x},z)$, and therefore the optimal objective value can always be related to the minimum objective value when placing the facility at any of these two locations.  

So given input $(\vec{x}, z, \hat{y})$, a mechanism $\mathcal{M}$ is ensured to have a bounded robustness guarantee if $\mathcal{M}$ never chooses a location of the facility that is smaller than $x_{\sigma(z+1)}$ or larger than $x_{\sigma(n-z)}$. 
Our mechanism \InRange, introduced below, satisfies this property by choosing the prediction $\hat{y}$ as the location of the facility if (i) $x_{\sigma(z+1)} \le \hat{y} \le x_{\sigma(n-z)}$, (ii) if $\hat{y}$ is larger than or equal to the smallest value in the set $O$ (Lemma \ref{fact:optRange}), and (iii) if $\hat{y}$ is smaller than or equal to the largest value in the set $O$.
Note that this interval is constructed in the first two lines of \InRange\ and that this interval is well defined. 
Additionally, note that for $n\ge 3z$, (i) is subsumed by (ii) and (iii).
\medskip
 
\begin{mechanism}[H]
\SetAlgoLined
\SetAlgoCaptionSeparator{}
\SetAlCapNameFnt{\scshape}
\DontPrintSemicolon
\caption{\InRange($\vec{x},z,\hat{y}$)}
let $\ell = \max \{ \cl{\frac{n - \outliers +1}{2}}, \outliers +1 \}$ and $r = \min \{ \cl{\frac{n - \outliers}{2}} + \outliers , n - \outliers \} $ \label{line:size-interval}\\
\lIf{$x_{\sigma( \ell)} \le \hat{y} \le x_{\sigma( r)}$\label{line:if-in-range}}{\Return $\hat{y}$}
\lIf{$\hat{y} < x_{\sigma( \ell)} $ }{\Return $x_{\sigma( \ell)}$ \label{line:left-range}}
\lElse{\Return $x_{\sigma(r)}$ \label{line:right-range}}
\end{mechanism}

\medskip
The main result of this section is the following theorem. Recall that the error measure $\eta(\vec{x}, z, \hat{y})$ is defined as the ratio of the minimum utilitarian social cost for the location $\hat{y}$ and the optimal utilitarian social cost.

\begin{restatable}{theorem}{fivefour} \label{th:SC-UB-n>=3z}
Let $n \ge 3 \outliers$. Then, for any input $(\vec{x}, z, \hat{y})$ with $\eta(\vec{x}, z, \hat{y}) \le \eta$, $\InRange$ is strategyproof and achieves an approximation guarantee for the utilitarian objective of:
\begin{equation} \label{th:eq:SC-UB-n>=3z}
    f(n,\outliers, \eta) = \begin{cases}
        \min \{ \eta, \frac{ n + z - 1 }{ n - 3z +1 } \}, & \text{if $n-\outliers$ is odd,} \\
         \min \{ \eta, \frac{ n + z - 2 }{ n - 3z +2 } \}, & \text{otherwise.}
    \end{cases} 
\end{equation}
In particular, \InRange\ is 1-consistent and $f(n,\outliers)$-robust as in \eqref{eq:SC-LB-robustness}, which is best possible. 
\end{restatable}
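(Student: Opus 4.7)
The plan is to establish the three components of Theorem~\ref{th:SC-UB-n>=3z} separately: strategyproofness of \InRange, the worst-case robustness bound, and the consistency-aware $\eta$-bound that jointly yield $f(n,z,\eta)=\min\{\eta,f(n,z)\}$. For strategyproofness, I will fix $\hat{y}$ and verify that $\vec{x}\mapsto \InRange(\vec{x},z,\hat{y})$ is a generalised median in the sense of Moulin's Theorem~\ref{thm:characterization}. Using that $n\ge 3z$ forces $\ell=\lceil(n-z+1)/2\rceil$ and $r=\lceil(n-z)/2\rceil+z$ with $\ell+r=n+1$, the output of \InRange\ can be rewritten as $\median(x_{\sigma(\ell)},\hat{y},x_{\sigma(r)})$. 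Placing $\ell$ phantoms at $-\infty$, $\ell$ phantoms at $+\infty$, and $r-\ell$ phantoms at $\hat{y}$ (total $\ell+(r-\ell)+\ell=n+1$ since $\ell+r=n+1$), a short case analysis on the rank of $\hat{y}$ among $x_{\sigma(1)},\dots,x_{\sigma(n)}$ confirms that the median of the resulting $(2n+1)$-tuple coincides with this quantity.

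For the approximation bound, I will split according to which branch of \InRange\ is taken. When the mechanism returns $\hat{y}$ itself, the ratio equals $\eta$ by definition of the error. When $\hat{y}<x_{\sigma(\ell)}$, the mechanism returns $y=x_{\sigma(\ell)}$, and I derive the worst-case guarantee by mirroring the proof of Theorem~\ref{th:SCdeter}. Since $n\ge 3z$ ensures $O=\{x_{\sigma(\ell)},\dots,x_{\sigma(r)}\}$ via Lemma~\ref{fact:optRange}, I may assume $\opt\in[x_{\sigma(\ell)},x_{\sigma(r)}]$, and the same term-by-term comparison as in Theorem~\ref{th:SCdeter} (leveraging Fact~\ref{fact:bound}) yields
\begin{equation*}
\frac{\OPTSC(y,\vec{x},z)}{\OPTSC^{\star}(\vec{x},z)} \le \frac{|\{i\in S^{\star}:\, y<x_i\}|}{|\{i\in S^{\star}:\, x_i\le y\}|}.
\end{equation*}
The ratio is maximised by taking $\opt=x_{\sigma(r)}$ with $S^{\star}=\{\sigma(z+1),\dots,\sigma(n)\}$; counting the indices above and below $\ell$ yields $(n+z-1)/(n-3z+1)$ when $n-z$ is odd and $(n+z-2)/(n-3z+2)$ when $n-z$ is even, exactly matching $f(n,z)$ in~\eqref{eq:SC-LB-robustness}. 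The branch $y=x_{\sigma(r)}$ is handled symmetrically.

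The main obstacle is the error-dependent bound $\OPTSC(y,\vec{x},z)\le \eta\cdot\OPTSC^{\star}(\vec{x},z)$ for $y=x_{\sigma(\ell)}$ and $\hat{y}<x_{\sigma(\ell)}$, since $y\mapsto \OPTSC(y,\vec{x},z)$ is in general not quasi-convex (the optimal outlier set can jump as $y$ varies), ruling out any naive monotonicity argument. The key idea is to freeze the outlier set: let $S^{\star}(\hat{y})=\{\sigma(k),\dots,\sigma(k+n-z-1)\}$ denote the contiguous window of non-outliers chosen at $\hat{y}$, for some $k\in\{1,\dots,z+1\}$, and define the convex function $g(y)=\sum_{i\in S^{\star}(\hat{y})}|y-x_{\sigma(i)}|$, which satisfies $g(\hat{y})=\OPTSC(\hat{y},\vec{x},z)$ and $\OPTSC(x_{\sigma(\ell)},\vec{x},z)\le g(x_{\sigma(\ell)})$ by feasibility of $S^{\star}(\hat{y})$ at $y=x_{\sigma(\ell)}$. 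A direct calculation shows that the left derivative of $g$ at $x_{\sigma(\ell)}$ equals $2\ell-2k-n+z$, which is non-positive for every admissible $k\ge 1$ precisely because $n\ge 3z$ forces $\ell\le (n-z+2)/2$. Convexity then forces $g$ to be non-increasing on $(-\infty,x_{\sigma(\ell)}]$, so $g(\hat{y})\ge g(x_{\sigma(\ell)})$ and the desired bound follows; the branch $y=x_{\sigma(r)}$ is again symmetric. Combining the two bounds gives the claimed $f(n,z,\eta)$-approximation, and the final optimality assertion is immediate from Theorem~\ref{lem:SC:LB-prediction-z>1-bounded}.
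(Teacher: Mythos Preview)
Your proof is correct and, in two places, cleaner than the paper's own argument.

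For strategyproofness, the paper argues from first principles by a direct case analysis on unilateral deviations in each of the three branches of \InRange. Your route via Moulin's characterization is different and arguably more elegant: the observation $\ell+r=n+1$ lets you realise $\InRange(\cdot,z,\hat{y})$ exactly as a generalised median with $\ell$ phantoms at $-\infty$, $r-\ell$ at $\hat{y}$, and $\ell$ at $+\infty$, so strategyproofness (and even group-strategyproofness) drops out of Theorem~\ref{thm:characterization} for free. The paper's direct argument is shorter to write down but yields less.

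For the robustness bound both proofs proceed identically, mirroring the counting argument of Theorem~\ref{th:SCdeter}.

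For the $\eta$-bound when $\hat{y}\notin[x_{\sigma(\ell)},x_{\sigma(r)}]$, the paper asserts in one line that ``the set of non-outliers minimizing the social cost are equal for $y$ and $\hat{y}$, and $y$ is the median of this set''. This claim is not literally true in general (e.g.\ $n=6$, $z=2$, agents at $0,1,2,3,5,6$, $\hat{y}=4$: the non-outlier windows at $y=x_{\sigma(r)}=3$ and at $\hat{y}$ differ), although the intended conclusion $\OPTSC(y,\vec{x},z)\le\OPTSC(\hat{y},\vec{x},z)$ does hold. Your convexity argument---freezing the window $S^{\star}(\hat{y})$, bounding $\OPTSC(y,\vec{x},z)$ by $g(y)$, and checking that the one-sided derivative of $g$ at $x_{\sigma(\ell)}$ (resp.\ $x_{\sigma(r)}$) has the right sign precisely because $n\ge 3z$ forces $\ell\le(n-z+2)/2$---is a genuinely more rigorous justification of the same inequality and makes transparent why the argument breaks for $n<3z$.
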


In order to prove the approximation guarantee of \InRange, we introduce some axillary notation. 
Consider the input $(\vec{x}, z, \hat{y})$ and a location $y$ of the facility.
Define $i(\opt)$ such that $x_{\sigma(i(\opt))} = \opt$ (note that such a $\opt$ always exists).
We define $i(y) = i(\opt)$ if $y \in [\opt_{\ell}, \opt_{r}]$.
If there exists a location in $\vec{x}$ at which the facility is located, i.e., $\exists k \in [n]$ such that $y = x_{\sigma(k)}$, we define $i(y)$ as the maximum (minimum) index such that $x_{\sigma(i(y))} = y$ if $\opt < y$ ($\opt > y$).
Otherwise, if $\opt < y$, we define $i(y)$ such that $x_{\sigma(i(y))} > y$ and $x_{\sigma(i(y)-1)} < y$.
Symmetrically, if $\opt > y$, we define $i(y)$ such that $x_{\sigma(i(y))} < y$ and $x_{\sigma(i(y)+1)} > y$.
Finally, we define $\delta$ as the difference in indices: 
\begin{equation} \label{eq:error2-def}
\delta =  |i(\opt) - i(y)|.
\end{equation}
Note that for \InRange, it holds that $\delta \in \{0, 1, \ldots, \outliers \}$ if $n-\outliers$ is odd and $\delta \in \{0, 1, \ldots, \outliers -1\}$ if $n-\outliers$ is even by construction and by Lemma \ref{fact:optRange}.
If $\delta =0$, the optimal social cost for $y$ and $y^{\star}$ coincide and as $\delta$ grows, the number of locations between $y$ and $y^{\star}$ potentially increases and potentially leads to a worse approximation guarantee. 

\begin{proof}[Proof of Theorem~\ref{th:SC-UB-n>=3z}]
Consider arbitrary input $(\vec{x}, z, \hat{y})$ and let $y = \InRange(\vec{x}, z, \hat{y})$.
We first show that \InRange\ is strategyproof. 
Consider a unilateral deviation of an agent $i \in N$ from $x_i =p_i$ to $x'_i$. 
First, consider the case that $\alg = \hat{y} \in [x_{\sigma( \ell)} , x_{\sigma( r)}]$. 
The deviation of agent $i$ can only potentially move $y$ to the right (left) if $x_i \le x_{\sigma( \ell)}$ ($x_i \ge x_{\sigma( \ell)}$), but this would increase the cost of agent $i$. 
Secondly, consider the case that $\alg = x_{\sigma( \ell)} > \hat{y}$. 
Also in this case the deviation of agent $i$ can only move $y$ to the right (resp. left) if $x_i \le x_{\sigma( \ell)}$ (resp. $x_i \ge x_{\sigma( \ell)}$), but this would increase the cost of agent $i$. 
Similar reasoning holds for $\alg = x_{\sigma(r)} < \hat{y}$.

We now show that \InRange\ is $f(n,\outliers, \eta)$-approximate as in \eqref{th:eq:SC-UB-n>=3z}.
For $n \ge 3 \outliers$ it holds that $\ell = \cl{\frac{n - \outliers +1}{2}}$ and $r = \cl{\frac{n - \outliers}{2}} + \outliers$ and so, $\ell \le r$.
Additionally, it therefore follows by Lemma \ref{fact:optRange} that \InRange\ is 1-consistent. Note that if $n-z$ is even and the perfect prediction $\hat{y} \notin [x_{\sigma(\ell)}, x_{\sigma(r)}]$, \InRange\ still returns an optimal solution by construction by returning $x_{\sigma(\ell)}$ or $x_{\sigma(r)}$.
Now consider an imperfect prediction $\hat{y}$ with $\hat{y} \in [x_{\sigma(\ell)}, x_{\sigma(r)}]$. In this case \InRange\ returns $\hat{y}$ and by definition of $\eta(\vec{x},\outliers,\hat{y})$ and as $\eta(\vec{x},\outliers,\hat{y}) \le \eta$, it follows that: 
\[
\OPTSC(\InRange(\vec{x}, z, \hat{y})) = \eta(\vec{x},\outliers,\hat{y}) \cdot \OPTSC^{\star}(\vec{x},\outliers) \le \eta \cdot \OPTSC^{\star}(\vec{x},\outliers) = f(n,z,\eta) \cdot \OPTSC^{\star}(\vec{x},\outliers).
\]
Here, the last equality follows from the reasoning below.

Consider any imperfect prediction $\hat{y}$.
We will show that $\OPTSC(y, \vec{x},z) \le f(n,z,\eta) \cdot \OPTSC^{\star}(\vec{x},\outliers)$. 
First, consider the case that $n-\outliers$ is odd and additionally, assume w.l.o.g. that $\opt < y$. 
We upper bound the social cost of $y$ in two ways. 
First, as in the proof of Theorem \ref{th:SCdeter}, we evaluate the social cost of $y$ w.r.t. the locations that $\opt$ accounts for, i.e., for locations $x_i$ with $i \in S^{\star}$.
Secondly, we upper bound the social cost of $y$ by considering the social cost of $y' = i(y)$ w.r.t. the locations $x_i$ with $i \in S^{\star}$. 
Note that this upper bounds the social cost as, in the case of $\opt < y$, it holds that $y \le y'$ and there are more locations $i \in S^{\star}$ with $x_i < y$ than there are with $x_i \ge y$.
Together, this leads to first inequality in \eqref{eq:SC:UB:z>1-proof} below. 

As $n - \outliers$ is odd, there are $\frac{n - \outliers -1}{2}$ locations $x_i$ with $i \in S^{\star}$ and $\sigma(i) < i(\opt)$ and $\frac{n - \outliers -1}{2}$ locations $x_i$ with $i \in S^{\star}$ and $\sigma(i) > i(\opt)$, as depicted in Figure \ref{fig:SC-UB-pred-z>1}. 
Furthermore, let $\delta =  |i(\opt) - i(y)| = |i(\opt) - i(y')|$ as in \eqref{eq:error2-def}.
Then, by Lemma \ref{fact:optRange}, there are $\delta - 1 \ge 0$ locations $x_i$ with $i \in S^{\star}$ and $i(\opt) < \sigma(i) < i(y')$, as $\hat{y}$ is not a perfect prediction, which is also depicted in Figure \ref{fig:SC-UB-pred-z>1}. 
Again as in the proof of Theorem \ref{th:SCdeter}, we upper bound the approximation guarantee by moving all the locations $x_i$ with $i \in S^{\star}$ and $\sigma(i) < \sigma(y')$ to $\opt$, and all the locations $x_i$ with $i \in S^{\star}$ and $\sigma(i) > \sigma(y')$ to $y'$, leading to:

\begin{equation} \label{eq:SC:UB:z>1-proof}
\frac{\OPTSC(\InRange(\vec{x}, z, \hat{y}))}{\OPTSC^{\star}(\vec{x}, \outliers)} 
\le \frac{\sum_{i \in S^{\star}} |\alg' - x_i|}{\sum_{i \in S^{\star}} |y^{\star} - x_i|} 
\le \frac{ \frac{n - \outliers -1}{2} + \delta }{\frac{n-\outliers-1}{2} + 1 -\delta}
\le \frac{ n + \outliers -1}{n - 3 \outliers +1}.
\end{equation}
Here, the last inequality follows as $\delta \le z$ if $n-z$ is odd. 
Additionally, note that if $\hat{y} \notin [x_{\sigma(\ell)}, x_{\sigma(r)}]$, as $\opt < y$ it must be that $\hat{y} > x_{\sigma(r)} =y$. 
In this case $\eta \ge \frac{\OPTSC(y, \vec{x}, z)}{\OPTSC^{\star}(\vec{x},z)}$, as the set of non-outliers minimizing the social cost are equal for $y$ and $\hat{y}$, and $y$ is the median of this set. 
Note that for $\opt > \alg$, similar reasoning holds (by considering Figure \ref{fig:SC-UB-pred-z>1} when multiplied by $-1$). Similar reasoning also holds when $n-\outliers$ even, i.e., when there are $\frac{n-\outliers}{2}$ locations left of $\opt$ and $\frac{n-\outliers}{2} -1  -\delta$ locations right of $\alg'$ in Figure \ref{fig:SC-UB-pred-z>1}, leading to:
\[
\frac{\OPTSC(\InRange(\vec{x}, z, \hat{y}))}{\OPTSC^{\star}(\vec{x}, \outliers)} 
\le \frac{ \frac{n - \outliers}{2} + \delta }{\frac{n-\outliers}{2} -\delta}
\le \frac{ n + \outliers-2 }{ n - 3 \outliers +2}, 
\]
where the last inequality follows as $\delta \le z-1$ if $n-z$ is even. 
Note that optimality of the consistency-robustness trade-off follows by Theorem \ref{lem:SC:LB-prediction-z>1-bounded}, concluding the proof.
\end{proof}

\begin{figure}[t]
\centering
\begin{subfigure}[b]{0.98\linewidth}
    \centering
    \begin{tikzpicture}
        \draw[thick] (-4.5,0) -- (10.5,0);
        \node at (-3.3,0.8) {$n-\outliers$ even:};
        \node at (-3.3,-0.7) {$n-\outliers$ odd:};
        
        \filldraw[black] (-1.5,0) circle (2pt) node[below, yshift=-0.1cm]{ };
        \filldraw[black] (-0.5,0) circle (2pt) node[below, yshift=-0.1cm]{ };
        \draw [thick, decorate,decoration={brace,amplitude=10pt,mirror},xshift=0.4pt,yshift=-0.4pt](-1.6,-0.1) -- (-0.4,-0.1) node[black,midway,yshift=-0.6cm] {$\frac{n-\outliers-1}{2}$};
        \draw [thick, decorate,decoration={brace,amplitude=10pt},xshift=0.4pt,yshift=-0.4pt](-1.6,0.1) -- (-0.4,0.1) node[black,midway,yshift=0.7cm] {$\frac{n-\outliers}{2}$};

        \filldraw[black] (1.5,0) circle (2pt) node[below, yshift=-0.1cm]{ };
        \draw[thick] (1.5,0) -- (1.5,0.5) node[above, xshift=0cm]{$\opt$};
        
        \filldraw[black] (2.66,0) circle (2pt) node[below, yshift=-0.1cm]{ };
        \filldraw[black] (3.33,0) circle (2pt) node[below, yshift=-0.1cm]{ };
        \draw [thick, decorate,decoration={brace,amplitude=10pt,mirror},xshift=0.4pt,yshift=-0.4pt](2.56,-0.1) -- (3.43,-0.1) node[black,midway,yshift=-0.6cm, xshift=-0.1cm] {$\delta -1$};
        \draw [thick, decorate,decoration={brace,amplitude=10pt},xshift=0.4pt,yshift=-0.4pt](2.56,0.1) -- (3.43,0.1) node[black,midway,yshift=0.7cm, xshift=-0.1cm] {$\delta -1$};
        
        \filldraw[black] (4.5,0) circle (2pt) node[below, yshift=-0.1cm]{ };
        \draw[thick] (4.5,0) -- (4.5,0.5) node[above, xshift=0cm]{$\alg'$};
        
        \filldraw[black] (6.5,0) circle (2pt) node[below, yshift=-0.1cm]{ };
        \filldraw[black] (7.5,0) circle (2pt) node[below, yshift=-0.1cm]{ };
        \draw [thick, decorate,decoration={brace,amplitude=10pt,mirror},xshift=0.4pt,yshift=-0.4pt](6.4,-0.1) -- (7.6,-0.1) node[black,midway,yshift=-0.6cm, xshift=0.1cm] {$\frac{n-\outliers-1}{2} - \delta $};
        \draw [thick, decorate,decoration={brace,amplitude=10pt},xshift=0.4pt,yshift=-0.4pt](6.4,0.1) -- (7.6,0.1) node[black,midway,yshift=0.7cm] {$\frac{n-\outliers}{2} - 1 -\delta$};

    \end{tikzpicture}
    \label{fig:SC-UB-pred-z>1-a}
\end{subfigure}
\caption{Part of the profile used in the proof of Theorem \ref{th:SC-UB-n>=3z}.}
\label{fig:SC-UB-pred-z>1}
\end{figure}
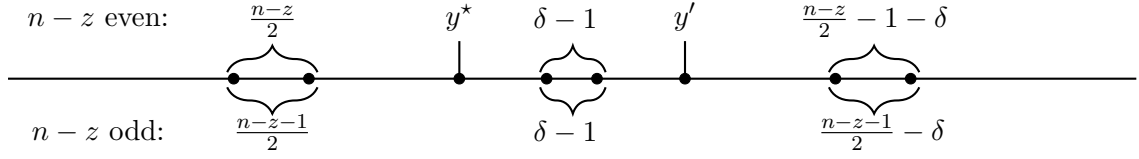

\subsubsection{Fine-Grained Analysis for $n < 3z$}

\begin{figure}[b]
    \centering
    \begin{tikzpicture}[scale=0.75]
        \draw[thick] (-0.5,0) -- (6.5,0);
        \filldraw[black] (0,0) circle (2pt) node[below, yshift=-0.1cm]{$3$};
        \draw[thick] (0,0) -- (0,0.5) node[above, xshift=0cm]{$\hat{y}$};
        \filldraw[black] (2.7,0) circle (2pt) node[below, yshift=-0.1cm]{$1$};
        \draw[thick] (2.7,0) -- (2.7,0.5) node[above, xshift=0cm]{$y$};
        \filldraw[black] (5.7,0) circle (2pt) node[below, yshift=-0.1cm]{$4$}; 
        \draw[thick] (5.7,0) -- (5.7,0.5) node[above, xshift=0cm]{$\opt$};
        \draw[stealth-stealth,thick](0.05,0.2) -- (2.65,0.2)  node[color=black, above, xshift=-1cm]{$0.9$};
        \draw[stealth-stealth,thick](2.75,0.2) -- (5.65,0.2)  node[color=black, above, xshift=-1.05cm]{$1$};
    \end{tikzpicture}
\caption{Profile with $n<3z$ for which $\InRange(\vec{x}, z, \hat{y})=y$ has a worse approximation guarantee than $\eta(\vec{x}, z, \hat{y})$.}
\label{fig:error-fails-n<3z}
\end{figure}

We consider the approximation guarantee of \InRange\ separately for $n \in \{2\outliers +1,2\outliers +2, \ldots, 3\outliers-1 \}$, as it cannot be stated as $f(n,z,\eta)$ of Theorem \ref{th:SC-UB-n>=3z}.
We elaborate on this with the following illustrative example. 
Consider the profile $\vec{x}$ with eight location, $z=3$ outliers and $\hat{y}$ as depicted in Figure \ref{fig:error-fails-n<3z}.
Note that the optimal social cost is 1, and that the minimum social cost of $\hat{y}$ is 2.8, leading to 
$\eta(\vec{x}, z, \hat{y}) =2.8$. 
However, the minimum social cost of the location $y$ chosen by \InRange, i.e., $y = \InRange(\vec{x}, z, \hat{y})$, is 3.7. 
And so, the approximation guarantee cannot be expressed as the minimum of the error and the robustness guarantee.  
This is because for $n<3z$, it is no longer true that the set of non-outliers minimizing the social cost is equal for $y$ and $\hat{y}$, and that $y$ is the median of this set. 
Recall that in this example, $\hat{y}$ is contained in the set $O$ of potential optimal locations defined by Lemma \ref{fact:optRange}, and returning the prediction if it is bounded by the smallest and largest values of $O$ would lead to an approximation guarantee of $\eta(\vec{x}, z, \hat{y})$, but this mechanism would fail to achieve any bounded robustness guarantee.

\paragraph{Fine-Grained Analysis for $n < 3z$.}
In order to derive a fine-grained approximation guarantee of \InRange\ for these values of $n$ and $z$, and not just a robustness guarantee, we will use $\delta$ as defined in \eqref{eq:error2-def} more prominently. 
Intuitively, a mechanisms potentially achieves a good approximation guarantee if it chooses a location close to an optimal location $y^{\star}$.
If the location $y$ chosen by the mechanism is not optimal, then $\delta -1$ indicates the number of agents with a location between $y$ and $y^{\star}$, and so $\delta$ can be interpreted as an error measure of the mechanism.
We therefore introduce some additional axillary notation to define this error measure for perfect and imperfect predictions.

For $n \in \{ 2 \outliers +1, 2\outliers +2, \ldots, 3 \outliers -1 \}$ it holds that $\ell = \max \{ \cl{\frac{n - \outliers +1}{2}}, \outliers +1 \} =  \outliers +1$ and $r = \min \{ \cl{\frac{n - \outliers}{2}} + \outliers , n - \outliers \} = n - \outliers$, so $\ell = \outliers +1 \le n - \outliers = r$.
Therefore, \InRange\ is only 1-consistent is some cases and in all other cases, the consistency guarantee depends on how much the thresholds $\ell$ and $r$ differ from the thresholds of Lemma \ref{fact:optRange}. 
We define this by determining the difference $\delta^{c}$ in the $k$-th order statistics used for the thresholds: 
\[
\delta^{c} = \outliers +1 - \bigg \lceil \frac{n - \outliers +1}{2} \bigg \rceil = \bigg \lceil \frac{n - \outliers}{2} \bigg \rceil  + \outliers  -( n - \outliers ).
\]
Note that the difference of the lower and upper thresholds is equal.
On the other hand, the robustness guarantee improves compared to the guarantee in \eqref{eq:SC-LB-robustness}.
This is due to the change in thresholds which leads to a smaller upper bound of $\delta$. 
Namely, in this case $\delta \le \delta^{r}$ with:
\[
\delta^{r} 
= \bigg \lceil \frac{n - \outliers}{2} \bigg \rceil  + \outliers - (\outliers + 1)
= n - \outliers- \bigg \lceil \frac{n - \outliers +1}{2} \bigg \rceil.
\]
Note that $\delta \le \delta^{c} + \delta^{r}$, and that the consistency guarantee is always better than the robustness guarantee as $\delta^{c} \le \delta^{r}$. We now present the approximation guarantee of \InRange\ for $n < 3z$. 

\begin{restatable}{theorem}{fivefive}\label{th:SC-UB-n<3z}
Let $n \in \{2 \outliers +1, 2 \outliers +2, \ldots, 3 \outliers -1 \}$.
Then, for any input $(\vec{x}, z, \hat{y})$, $\InRange$ is strategyproof and achieves an approximation guarantee for the utilitarian objective of:
\begin{equation} \label{th:eq:SC-UB-n<3z}
    f(n,\outliers, \delta) = \begin{cases}
        \frac{ \frac{n - \outliers -1}{2} + \delta }{\frac{n-\outliers-1}{2} +1 - \delta }, & \text{if $n-\outliers$ is odd,} \\[1em]
        \frac{ \frac{n - \outliers}{2} + \delta }{\frac{n-\outliers}{2} - \delta}, & \text{otherwise.}
    \end{cases} 
\end{equation}
In particular, \InRange\ is $f(n,\outliers,\delta^{c})$-consistent and $f(n,\outliers,\delta^{r})$-robust.
\end{restatable}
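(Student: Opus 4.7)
The plan is to mirror the structure of the proof of Theorem~\ref{th:SC-UB-n>=3z}, since for $n \in \{2z+1,\dots,3z-1\}$ the only things that change are the width of the bracketing window $[x_{\sigma(\ell)}, x_{\sigma(r)}]$ used by \InRange\ (namely $\ell = z+1$ and $r = n-z$) and the range of possible values of the index gap $\delta$. First I would establish strategyproofness by exactly the argument of the first paragraph of the proof of Theorem~\ref{th:SC-UB-n>=3z}: each of the three candidate outputs $\hat{y}$, $x_{\sigma(\ell)}$, $x_{\sigma(r)}$ responds monotonically to any unilateral misreport by an agent $i$, and in every case the direction of movement prevents the facility from getting strictly closer to $p_i$. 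Nothing in this argument uses $n \ge 3z$, so it transfers verbatim.

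For the approximation bound $f(n,z,\delta)$, I would fix an input $(\vec{x}, z, \hat{y})$, set $\alg = \InRange(\vec{x}, z, \hat{y})$, and select an optimal pair $(\opt, S^{\star})$ with $\opt \in O$ from Lemma~\ref{fact:optRange}, assuming without loss of generality that $\opt \le \alg$ (the other case is symmetric). Following Theorem~\ref{th:SC-UB-n>=3z}, I would let $y' = x_{\sigma(i(\alg))}$ (so $\alg \le y'$ and $y'$ is a reported location), and bound
\[
\OPTSC(\alg, \vec{x}, z) \le \sum_{i \in S^{\star}} |\alg - x_i| \le \sum_{i \in S^{\star}} |y' - x_i|,
\]
and then apply the collapse trick of Theorem~\ref{th:SCdeter} via Fact~\ref{fact:bound}: move every $S^{\star}$-location weakly left of $y'$ onto $\opt$ and every one strictly right of $y'$ onto $y'$, so that both numerator and denominator become integer multiples of $d := |y' - \opt|$. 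The ratio then collapses to a counting problem, and using Lemma~\ref{fact:optRange} the counts are identical to those in Theorem~\ref{th:SC-UB-n>=3z}: when $n-z$ is odd there are $(n-z-1)/2$ members of $S^{\star}$ on each side of $\opt$, plus $\delta - 1$ members of $S^{\star}$ strictly between $\opt$ and $y'$, and analogously $(n-z)/2$ when $n-z$ is even. Substituting these counts into the ratio yields exactly \eqref{th:eq:SC-UB-n<3z}.

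To deduce the consistency and robustness specializations, I only need to bound $\delta$ in the two regimes. For a perfect prediction, $\hat{y}$ coincides with some point of $O$, and clamping to $[x_{\sigma(z+1)}, x_{\sigma(n-z)}]$ shifts the chosen index by at most $\delta^{c} = z+1 - \lceil (n-z+1)/2 \rceil$, since the set $O$ extends at most $\delta^{c}$ index positions beyond each side of the interval. For an arbitrary prediction, $\alg$ lies in $[x_{\sigma(z+1)}, x_{\sigma(n-z)}]$ while $\opt$ lies in $O$, so $\delta \le \delta^{r} = n - z - \lceil (n-z+1)/2 \rceil$. Since $f(n,z,\delta)$ is monotone increasing in $\delta$ in the admissible range, plugging these bounds into \eqref{th:eq:SC-UB-n<3z} yields the two stated guarantees.

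The main obstacle will be justifying the inequality $\sum_{i \in S^{\star}} |\alg - x_i| \le \sum_{i \in S^{\star}} |y' - x_i|$ when $\alg \le y'$, which is asserted but not fully spelled out in Theorem~\ref{th:SC-UB-n>=3z}. This requires showing that at most half of the $S^{\star}$-locations lie strictly to the right of $\alg$, so that moving from $\alg$ up to $y'$ does not decrease the total $\ell_1$-distance. In the tighter regime $n < 3z$, I would argue this carefully by combining the lower bound $\alg \ge x_{\sigma(z+1)}$ enforced by \InRange\ with the two extremality properties of $S^{\star}$ from Lemma~\ref{fact:optRange}; the remaining bookkeeping then parallels Theorems~\ref{th:SCdeter} and~\ref{th:SC-UB-n>=3z} exactly.
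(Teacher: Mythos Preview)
Your proposal is correct and follows essentially the same route as the paper, which simply states that the proof is analogous to that of Theorem~\ref{th:SC-UB-n>=3z}. One small remark on your final paragraph: the inequality $\sum_{i \in S^{\star}} |\alg - x_i| \le \sum_{i \in S^{\star}} |y' - x_i|$ does not hinge on the clamping bound $\alg \ge x_{\sigma(z+1)}$ but rather on the fact that $\opt$ is the median of $S^{\star}$, so that at least half of the $S^{\star}$-locations already lie weakly to the left of $\opt < \alg$; this holds regardless of whether $n \ge 3z$ or $n < 3z$, and the argument goes through unchanged.
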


The proof of Theorem \ref{th:SC-UB-n<3z} follows analogously to the proof of Theorem \ref{th:SC-UB-n>=3z}.
Note that if the input $(\vec{x}, z, \hat{y})$ of \InRange\ is such that $\eta(\vec{x}, z, \hat{y}) \le \eta$, and $\hat{y} \in [x_{\sigma(\ell)},x_{\sigma(r)}]$ or $\exists y^{\star}_{\ell}, y^{\star}_{r}$ such that $\hat{y} \in [y^{\star}_{\ell}, y^{\star}_{r}]$ and $[y^{\star}_{\ell}, y^{\star}_{r}] \cap [x_{\sigma(\ell)},x_{\sigma(r)}]  \neq \emptyset$, \InRange\ achieves an approximation of $\eta$ for $n \in \{2 \outliers +1, 2 \outliers +2, \ldots, 3 \outliers -1 \}$. 

\paragraph{Confidence Parameter.} If $n \ge 3 \outliers$ then \InRange\ is 1-consistent but $f(n,\outliers)$-robust as in \eqref{eq:SC-LB-robustness}, and so the robustness guarantee is worse than the approximation guarantee of Theorem \ref{th:SCdeter} without predictions.
If there is some uncertainty about the quality of the prediction, one might want to settle for a worse consistency and a better robustness guarantee. 
For $n<3z$, we have already seen that \InRange\ can achieve a trade-off between consistency and robustness, by altering the thresholds that determine if the prediction is chosen as the location of the facility.
And so, we introduce a natural confidence parameter $\gamma$ that influences the thresholds defined in Line \ref{line:size-interval} of \InRange.
Namely, we add $\gamma$ to $\ell$ and subtract $\gamma$ from $r$. 
If $\outliers > 1$ and depending on the parity of $n$, $\gamma$ attains a feasible value if $\gamma \in \{ 0, 1, \ldots, \gamma_{\max} \}$, with $\gamma_{\max}$ defined as: 
\begin{equation*} \label{eq:gamma-range}
    \gamma_{\max} = \begin{cases}
        \frac{\outliers}{2} -1, & \text{if $n$ is even and $\outliers$ is even,} \\
        \fl{\frac{\outliers}{2}}, & \text{otherwise.}
    \end{cases} 
\end{equation*}
Here, $\gamma = 0$ can be interpreted as full confidence in the prediction, i.e., the thresholds defined in Line \ref{line:size-interval} of \InRange\ are not adjusted. 
As $\gamma$ increases, the left and right thresholds defining when the prediction $\hat{y}$ is chosen as the location of the facility become larger and smaller, respectively, modeling less confidence in the prediction. Less confidence, i.e., a larger $\gamma$, will lead to a worse consistency but an improved robustness. 
If $\gamma = \gamma_{\max}$, this translates to no confidence in the prediction and \InRange\ returns the median and achieves the approximation guarantee of Theorem \ref{th:SCdeter}.

\section{Conclusion} \label{sec:conclusion}

In this work, we initiated the study of mechanism design with outliers and considered facility location on the line, with and without predictions, as the test case for this new perspective.
As our results indicate, the problem becomes harder, in some cases significantly, when outliers are introduced. 
We would like to highlight that all of our deterministic strategyproof mechanisms are additionally {\em group}-strategyproof {\em in the strong sense}, meaning that there is no group of agents that can coordinate their declarations such that at least one of them decreases their cost while no one increases their cost. 
For the setting without predictions, this is a direct consequence of the characterization result of~\cite{Moulin80} because all our mechanisms are $k$-th order statistics. For the setting with predictions, this can be proven from first principles.

We conclude by highlighting a few research directions that originate from our model and results.
\begin{enumerate}
    \item Is there a general proof that shows a strong negative result for $z \ge \nicefrac{n}{2}$? 
    We believe this result can be extended to mechanisms that are strategyproof in expectation, and it would be interesting to see if this lower bound persists for a large family of mechanism design problems.
    \item Does there exist a different type of prediction that is provably useful for the egalitarian objective? 
    If {\em not}, is there an underlying property of the problem that forbids a positive result?
    \item Possibly, the most fruitful direction is to apply this new direction of outliers to other mechanism design problems. We envision that outliers can be incorporated in {\em any} mechanism design problem, at least without monetary transfers. Do other problems also exhibit the same phenomenon of the approximation guarantee degrading monotonically with the number of outliers? Does there exist a setting in which outliers actually help the mechanism designer?  We believe that answering any of these questions is equally exciting!
\end{enumerate}

\section*{Acknowledgements}

This work was supported by the Dutch Research Council (NWO) through its Open Technology Program, proj.~no.~18938 and the Gravitation Project NETWORKS, grant no.~024.002.003. It has also been funded by the European Union under the EU Horizon 2020 Research and Innovation Program under the Marie Skłodowska-Curie Grant Agreement, grant no.~101034253.
Furthermore, this work acknowledges the support of the EPSRC grant EP/X039862/1.

\bibliography{bib}

\end{document}